\definecolor{red}{rgb}{0.7,0.15,0.15}
\definecolor{green}{rgb}{0,0.5,0}
\definecolor{blue}{rgb}{0,0,0.7}
\numberwithin{equation}{section}
\theoremstyle{plain}
\newtheorem{theorem}{Theorem}[section]
\newtheorem{proposition}[theorem]{Proposition}
\newtheorem{lemma}[theorem]{Lemma}
\newtheorem{corollary}[theorem]{Corollary}
\newtheorem{definition}[theorem]{Definition}
\newtheorem{assumption}[theorem]{Assumption}
\theoremstyle{definition}
\newtheorem{remark}[theorem]{Remark}
\begin{document}

\title{Equilibrium Asset Pricing with Transaction Costs\footnote{The authors are grateful to Lukas Gonon, Dmitry Kramkov, and Xiaofei Shi for for fruitful discussions and pertinent remarks. The helpful suggestions of two anonymous referees are also gratefully acknowledged.}} 

\author{Martin Herdegen\thanks{University of Warwick, Department of Statistics, Coventry, CV4 7AL, UK, email \texttt{m.herdegen@warwick.ac.uk}.}
\and
Johannes Muhle-Karbe\thanks{Imperial College London, Department of Mathematics, London, SW7 1NE, UK, email \texttt{j.muhle-karbe@imperial.ac.uk}. Research supported by the CFM--Imperial institute of Quantitative Finance. Parts of this paper were written while this author was visiting ETH Z\"urich; he thanks the Forschungsinstitut f\"ur Mathematik and H.~Mete Soner for their hospitality.}
\and
Dylan Possama\"i\thanks{ETH Z\"urich, Department of Mathematics, R\"amistrasse 101, Z\"urich 8092, Switzerland, \texttt{dylan.possamai@math.ethz.ch}.}
}

\date{September 30, 2020}

\maketitle

\begin{abstract}
We study risk-sharing economies where heterogeneous agents trade subject to quadratic transaction costs. The corresponding equilibrium asset prices and trading strategies are characterised by a system of nonlinear, fully coupled forward-backward stochastic differential equations. We show that a unique solution exists provided that the agents' preferences are sufficiently similar. In a benchmark specification with linear state dynamics, the illiquidity discounts and liquidity premia observed empirically correspond to a positive relationship between transaction costs and volatility.
\end{abstract}

\bigskip
\noindent\textbf{Mathematics Subject Classification: (2010)} 91G10, 91G80, 60H10.

\bigskip
\noindent\textbf{JEL Classification:}  C68, D52, G11, G12.

\bigskip
\noindent\textbf{Keywords:} asset pricing, Radner equilibrium, transaction costs, forward-backward SDEs

\section{Introduction}

How does the introduction of a transaction tax affect the volatility of a financial market? Such questions about the interplay of liquidity and asset prices need to be tackled with equilibrium models, where prices are not exogenous inputs but determined endogenously by matching supply and demand. However, equilibrium analyses lead to notoriously intractable feedback loops. Indeed, if the optimal strategies for a given candidate price do not clear the market, then the price needs to adjust until this iteration converges. Trading costs compound these difficulties, because they severely complicate the corresponding optimisation problems. Accordingly, the literature on equilibrium asset prices with transaction costs has focused either on numerical methods (Heaton and Lucas \cite{heaton.lucas.96}; Buss, Dumas, Uppal, and Vilkov~\cite{buss.al.16}; Buss and Dumas \cite{buss.dumas.17}) or on models where the market volatility is either zero (Vayanos and Vila \cite{vayanos.vila.99}; Lo, Mamaysky, and Wang~\cite{lo.al.04}, Weston \cite{weston.17}) or given exogenously (Vayanos \cite{vayanos.98}; Garleanu and Pedersen~\cite{garleanu.pedersen.16}; Sannikov and Skrzypacz \cite{ sannikov.skrzypacz.16}; Bouchard, Herdegen, Fukasawa, and Muhle-Karbe~\cite{bouchard.al.17}). 

\medskip

In the present study, we analyse a risk-sharing equilibrium where price levels, expected returns \emph{and} volatilities are determined endogenously, by both balancing supply and demand and matching an exogenous terminal condition for the risky asset. We consider two agents with mean--variance preferences who trade a safe and a risky asset to hedge the fluctuations of their random endowment streams. We show that a unique equilibrium with transaction costs generally exists provided the agents' risk aversions are sufficiently similar. To explore the impact of transaction costs on equilibrium asset prices and trading volume, we then consider a benchmark example with linear state dynamics. Without transaction costs, the corresponding equilibrium equilibrium price has Bachelier dynamics, i.e., constant expected returns and volatilities. With transaction costs, equilibrium prices show a much richer behaviour already in this simple setting.

First, expected returns endogenously become mean-reverting due to the sluggishness of the trading process, as is assumed in many reduced-form models for active portfolio management (cf., e.g., Kim and Omberg~\cite{kim.omberg.96}; De Lataillade, Deremble, Potters, and Bouchaud~\cite{bouchaud.al.12}; Martin \cite{martin.12}; Garleanu and Pedersen \cite{garleanu.pedersen.13}).  These random and time varying ``liquidity premia'' are zero on average if the asset volatility is specified endogenously (cf.~Sannikov and Skrzypacz \cite{sannikov.skrzypacz.16}, Bouchard et al. \cite{bouchard.al.17}). In contrast, the present model with endogenous volatility can produce the systematically positive liquidity premia that have been documented empirically (Amihud and Mendelson~\cite{amihud.mendelson.86a}, Brennan and Subrahmanyam \cite{brennan.subrahmanyam.96}, Pastor and Stambaugh \cite{pastor.stambaugh.03}). Our general equilibrium results in turn complement a large partial-equilibrium literature on liquidity premia going back to Constantinides \cite{constantinides.86}; see Lynch and Tan \cite{lynch.tan.11} and the references therein for an overview. 

Second, trading volume is finite in the frictional equilibrium and approximately follows Ornstein-Uhlenbeck dynamics as in the partial-equilibrium model of~Guasoni and Weber \cite{guasoni.weber.15}. Thus, our model combines rather realistic price dynamics with the main stylized features of trading volume observed empirically, such as mean-reversion and autocorrelation. In contrast, Lo et al. \cite{lo.al.04} obtain a similar dynamic model for trading volume but the corresponding asset prices are constant; conversely, the equilibrium prices in Yayanos \cite{vayanos.98} are diffusive but accompanied by deterministic trading patterns. 

Third, our model with endogenous price volatility allows to study how the latter is affected by the trading costs. For agents with similar risk aversions, we obtain explicit asymptotic formulas that reveal close connections between the effects of transaction costs on expected returns and volatilities. To wit, the liquidity premia that distinguish frictional expected returns from their frictionless counterparts, and the adjustment of the corresponding volatilities always have the same sign in our model, determined by the difference of the agents' risk aversion parameters. In the empirically relevant case of positive liquidity premia, our model predicts a positive relation between transaction costs and volatility, corroborating empirical evidence of~Umlauf \cite{umlauf.93}, Jones and Seguin \cite{jones.seguin.97} and Hau \cite{hau.06}, numerical results of Adam, Beutel, Marcet, and Merkel~\cite{adam.al.15} as well as~Buss et al. \cite{buss.al.16}, and findings in a risk-neutral model with asymmetric information by Danilova and Julliard~\cite{danilova.julliard.19}. In our model, this empirically relevant regime obtains when agents whose frictionless trading targets increase with positive price shocks (``trend followers'') have a larger risk aversion (and, in turn, stronger motive to trade) than the ``contrarians'' whose trading targets decrease with price shocks.

\medskip

On a technical level, our general existence and uniqueness results are based on new well-posedness results for fully coupled systems of nonlinear forward-backward stochastic differential equations (FBSDEs). Without transaction costs, the equilibrium dynamics of the risky asset are determined by a scalar purely quadratic BSDE in our model,  which leads to explicit formulas in concrete examples. With quadratic transaction costs on the agents' trading rates, we show that the corresponding equilibria are characterised by fully coupled systems of FBSDEs. To wit, the optimal risky positions evolve forward from the given initial allocations. In contrast, the corresponding trading rates controlling these positions need to be determined from their zero terminal values -- near the terminal time, trading stops since additional trades can no longer earn back the costs that would need to be paid to implement them. If a constant volatility is given exogenously as in Bouchard et al.~\cite{bouchard.al.17}, then these forward-backward dynamics suffice to pin down the equilibrium returns. In this case, the FBSDEs are linear, and therefore can be solved explicitly in terms of Riccati equations and conditional expectations of the endowment processes (cf.~Garleanu and Pedersen \cite{garleanu.pedersen.16}; Bank, Soner, and Vo\ss~ \cite{bank.al.17}). In the present context, where the volatility is determined endogenously from the terminal condition for the risky asset, the corresponding FBSDEs are coupled to an additional backward equation arising from this extra constraint. Due to the quadratic preferences and trading costs, the resulting forward-backward system is still linear in the trading rates and positions. However, it also depends quadratically on the volatility of the risky asset, which is now no longer an exogenous constant but needs to be determined as part of the solution. 

Accordingly, explicit solutions are no longer possible and existence and uniqueness are beyond the scope of the extant literature. Indeed, there is no general well-posedness theory for fully coupled systems of FBSDEs. In fact, even for linear equations, one can obtain either well-posedness, or infinitely many solutions, or no solutions at all, see the example in the introduction of Ma, Wu, Zhang, and Zhang~\cite{ma2015well}. Under a variety of additional monotonicity, non-degeneracy, Lipschitz assumptions, or for scalar forward and backward components, well-posedness results have been obtained, cf., e.g., Ma et al.~\cite{ma2015well} and the references therein for an overview. However, none of these results are applicable to our fully-coupled system, which is not Lipschitz and has a bivariate backward component.

  To overcome these difficulties, we focus on the case where both agents' risk aversions are sufficiently similar. If these parameters coincide, then the BSDE for the equilibrium price decouples from the FBSDEs for the optimal position and trading rate, and in fact reduces to its frictionless counterpart. For distinct but similar risk aversions, we in turn establish the existence of a unique solution. Our proof is based on a Picard iteration under smallness conditions inspired by~Tevzadze \cite{tevzadze.08}. However, due to the coupling between forward and backward components, this standard argument only yields existence here if the time horizon is sufficiently short -- a degenerate result in the present context since the cost on the trading rate then essentially imposes a no-trade equilibrium. Proving existence on arbitrary time horizons requires more subtle arguments tailored to the structure of the equations. Here, the key insight is that, for a given volatility process, the FBSDE for the corresponding optimal positions and trading rates can be solved in terms of \emph{stochastic} Riccati equations as in Kohlmann and Tang \cite{kohlmann.tang.02}, Annkirchner and Kruse \cite{annkirchner.kruse.15}, Bank and Voß \cite{bank.voss.18}. We develop a number of novel stability estimates for such equations. These in turn allow us to devise a \emph{one-dimensional} Picard iteration for the equilibrium price process only -- the corresponding positions and trading rates are constructed using the stochastic Riccati equations of Kohlmann and Tang~\cite{kohlmann.tang.02} in each step. If the agents' risk aversions are sufficiently similar, we can in turn establish the existence of a solution, which is unique in a neighbourhood of its frictionless counterpart.

This well-posedness result applies in general settings without requiring a Markovian structure. However, it crucially exploits that all primitives of the model belong to suitable BMO spaces. This assumption ensures that the optimal positions remain uniformly bounded and our BSDEs are of quadratic growth, but rules out concrete specifications based on Brownian motions, for example. However, our approach can be adapted to such settings, e.g., the concrete model with linear state dynamics. To wit, the FBSDEs characterising the equilibrium can then be reduced to a system of four coupled scalar Riccati ODEs. For sufficiently similar risk aversions, existence for this system can in turn be established by adapting our Picard iteration scheme. Again, the key idea is not to work with the full multidimensional system, but instead focus on only one component (the others are in turn constructed from this source term in each step of the iteration). 

The remainder of this article is organised as follows. Section~\ref{sec:model} introduces our model, both in the frictionless baseline version and with quadratic transaction costs on the trading rate. The agents' individual optimisation problems for given price dynamics are then discussed in Section~\ref{s:indopt}. Our main results on equilibrium asset prices without and with transaction costs are subsequently presented in Section~\ref{s:equilibrium}. This is followed by the discussion of the benchmark model with linear state dynamics in Section~\ref{s:linear}. For better readability, all proofs are delegated to Sections \ref{s:proofs}--\ref{sec:proofseq} as well as Appendices \ref{app:A} and \ref{app:B}.

\paragraph*{Notations}

Throughout, we fix a filtered probability space $(\Omega,\mathcal{F},\mathbb F:=(\mathcal{F}_t)_{t \in [0,T]},\mathbb P)$ with finite time horizon $T>0$; the filtration is generated by a standard Brownian motion $(W_t)_{t \in [0,T]}$. For $0\leq s\leq t\leq T$,  the set of $[s,t]$-valued stopping times is denoted by $\mathcal T_{s,t}$; for $\tau\in\mathcal T_{0,T}$, we write $\mathbb E_\tau[\cdot]$ for the $\mathcal{F}_\tau$-conditional expectation. The $\mathbb R$-valued, progressively measurable processes $(X_t)_{t \in [0,T]}$ satisfying $\|X\|_{\mathbb H^p}^p:=\mathbb E[(\int_0^T X_t^2 \mathrm{d}t)^{p/2}]<\infty$ for some $p\in[1,\infty)$ are denoted by $\mathbb{H}^p$. We also write $\mathbb{H}^2_{\mathrm{BMO}}$ for the $\mathbb R$-valued, progressively measurable processes $(X_t)_{t \in [0,T]}$ satisfying 
\[
\|X\|_{\mathbb H^2_{\mathrm{BMO}}}^2:=\bigg\|\sup_{\tau\in\mathcal T_{0,T}}\mathbb E_\tau\bigg[\int_\tau^T X_t^2 \mathrm{d}t\bigg]\bigg\|_{\mathbb L^\infty}<\infty.
\]
Finally, for any $p\in[1,\infty]$, $\mathcal S^p$ denotes the $\mathbb R$-valued, $\mathbb F$-progressively measurable processes $X$ with continuous paths for which $\sup_{0\leq t\leq T}|X_t|$ belongs to $\mathbb L^p$. The associated norm is denoted by $\|\cdot\|_{\mathcal S^p}$. For any other probability measure $\mathbb Q$ on $(\Omega,\mathcal F)$, we define similarly $\mathbb L^p(\mathbb Q)$, $\mathbb H^p(\mathbb Q)$, $\mathbb H^2_{\mathrm{BMO}}(\mathbb Q)$, and $\mathcal S^p(\mathbb Q)$.

\section{Model}\label{sec:model}

\subsection{Financial Market}

We consider a financial market with two assets. The first one is safe, with exogenous price normalised to one. The second one is risky, with price dynamics 
\begin{equation}\label{eq:price}
\mathrm{d}S_t=\mu_t \,\mathrm{d}t+\sigma_t \,\mathrm{d}W_t.
\end{equation}
Here, the initial asset price $S_0 \in \mathbb{R}$ as well as the (progressively measurable) expected returns process $(\mu_t)_{t \in [0,T]}$ and volatility process $(\sigma_t)_{t \in [0,T]}$ are to be determined in equilibrium by matching demand to the (exogenous) \emph{supply} $s \in \mathbb{R}$ of the risky asset. To pin down the equilibrium volatility -- unlike in \cite{vayanos.98,bouchard.al.17}, where this process is an exogenous constant, or in \cite{sannikov.skrzypacz.16}, where a particular value is singled out by focusing on linear equilibria -- the terminal value of the risky asset is as in \cite{grossman.stiglitz.80} also required to match an exogenous $\mathcal{F}_T$-measurable random variable: 
\[S_T=\mathfrak{S}.\]
This can be interpreted as a fundamental liquidation value~\cite{kyle.85}, a terminal dividend~\cite{kramkov.15}, or the payoff of a derivative depending on an exogenous underlying~\cite{cheridito.al.15}.

\subsection{Agents}
The assets are traded by two agents $n=1,2$ with mean--variance preferences over wealth changes as in \cite{kallsen.02,martin.schoeneborn.11,bouchaud.al.12,garleanu.pedersen.13,garleanu.pedersen.16}. The agents have risk aversions $\gamma^n>0$, $n=1,2$, and trade to hedge the fluctuations of their (cumulative) random endowments,\footnote{The impact of trading costs on equilibria where agents trade due to heterogenous beliefs is studied in~\cite{muhlekarbe.al.20}, for example.}
\begin{equation}\label{eq:endow}
\mathrm{d}Y^n_t=\beta^n_t \mathrm{d}W_t, \quad \beta^n \in \mathbb{H}^2.
\end{equation}
Agent $n$'s initial position in the risky asset is fixed throughout and denoted by $x^n$. To clear the market initially, we naturally assume that $x^1+x^2=s$.

\begin{remark}
For notational simplicity, we only model the diffusion part of the (spanned) random endowments. This is without loss of generality since the optimizers of the linear-quadratic goal functionals~\eqref{eq:pref1}, \eqref{eq:pref2} would not depend on an additional finite-variation part (or unspanned endowment shocks) as in \cite{bouchard.al.17}.
\end{remark}

\subsection{Frictionless Trading}

Suppose that $\mu=\sigma\kappa$ where the \emph{market price of risk} $\kappa$ belongs to $\mathbb{H}^2$. Without transaction costs, agents' trading strategies are described by the number $\varphi_t$ of risky shares held at each time $t \in [0,T]$. Taking into account each agent's random endowment, their frictionless wealth dynamics are $\varphi_t\, \mathrm{d}S_t +\mathrm{d}Y^n_t$. For \emph{admissible} strategies $\varphi$ which satisfy $\varphi_0 = x^n$ and $\varphi \sigma \in \mathbb{H}^2$,\footnote{By the Cauchy--Schwarz inequality, we then also have $\varphi\mu \in \mathbb{H}^1$ since $\kappa\in \mathbb{H}^2$.}  the corresponding mean--variance goal functional is 
\begin{align}
J^n(\varphi) &:=\mathbb E\bigg[\int_0^T \bigg(\varphi_t \mathrm{d}S_t +\mathrm{d}Y^n_t -\frac{\gamma^n}{2}\mathrm{d}\Big\langle {\int_0^\cdot}\varphi_s \mathrm{d}S_s +Y^n \Big\rangle_t\bigg)\bigg]\notag\\ 
&\;=\mathbb E\bigg[\int_0^T \bigg(\mu_t\varphi_t-\frac{\gamma^n}{2} \left(\sigma_t\varphi_t+\beta^n_t\right)^2\bigg) \mathrm{d}t \bigg]
\to \max!\label{eq:pref1}
\end{align}
Accordingly, for $\sigma>0$, the process $-\beta^n/\sigma$ can also be interpreted as agent $n$'s target position in the risky asset. Related models where deviations from an exogenous target are directly penalised by an exogenous deterministic weight rather than the infinitesimal variance of the corresponding asset are studied by~\cite{choi.al.18,sannikov.skrzypacz.16}. 

\subsection{Trading with Transaction Costs}

Now suppose as in~\cite{almgren.chriss.01} that an exogenous quadratic transaction cost $\lambda/2>0$ is levied on the turnover rate $\dot{\varphi}_t:=\mathrm{d}\varphi_t /\mathrm{d}t$ of each agent's portfolio. Then, the corresponding position $\varphi$ becomes a state variable that can only be influenced gradually by adjusting the control $\dot{\varphi}$. We focus on \emph{admissible} trading rates $\dot{\varphi} \in \mathbb{H}^2$ for which the corresponding position $\varphi=x^n +\int_0^\cdot \dot{\varphi}_t \mathrm{d}t$ satisfies $\varphi\sigma\in \mathbb{H}^2$, in analogy to the frictionless case. The frictional version of the mean--variance goal functional~\eqref{eq:pref1} is
\begin{equation}\label{eq:pref2}
J^n_\lambda(\dot{\varphi}):=\mathbb E\bigg[\int_0^T \bigg(\mu_t\varphi_t-\frac{\gamma^n}{2} \left(\sigma_t\varphi_t+\beta^n_t\right)^2 -\frac{\lambda}{2}\dot{\varphi}_t^2\bigg) \mathrm{d}t\bigg] \to \max!
\end{equation}
The quadratic transaction costs on the turnover rate in~\eqref{eq:pref2} correspond to execution prices that are shifted linearly both by trade size and speed. Note, however, that each agent's payoff is only affected by their own trading rate. Accordingly, the trading cost should be interpreted as a tax or the fees charged by an exchange rather than as a temporary price impact cost here. 

\begin{remark}
The linear-quadratic goal functional \eqref{eq:pref2} is chosen for tractability. Indeed, the local mean--variance trade-off in~\eqref{eq:pref1}, \eqref{eq:pref2} is a more tractable proxy for preferences described by concave utility functions. The corresponding equilibrium with frictions already leads to a novel coupled FBSDE system that is beyond the solution methods in the extant literature. For more general preferences, the corresponding FBSDE system would involve additional coupled backward and forward components describing the agents' value and wealth processes, respectively. 

The assumption of quadratic rather than proportional costs also simplifies the analysis by ensuring the FBSDE system describing the equilibrium remains linear in the agents positions. Subquadratic trading costs lead to FBSDEs that are nonlinear in the agents' positions~\cite{gonon.al.19}; the limiting case of proportional costs corresponds to an FBSDE system with reflection, as is typical for such singular control problems (compare, e.g., \cite{elie.al.18}).

While these assumptions are made for tractability,  numerical results reported in \cite{gonon.al.19} suggest that the qualitative and quantitative properties of the equilibrium asset prices are surprisingly robust across different specifications of the trading costs (given that their absolute magnitudes are matched appropriately). This is in line with partial-equilibrium results for models with small trading costs~\cite{moreau.al.15}, where the fluctuations of frictional positions around their frictionless counterparts and the welfare effects of small trading costs are governed by the same drivers for different specifications of both preferences and trading costs. An extension of these robustness results to a general-equilibrium context is an important but challenging direction for future research.
\end{remark}

\section{Individual Optimisation}\label{s:indopt}

The first step towards solving for the equilibrium is to determine each agent's individually optimal trading strategy for given asset prices. To this end, fix an initial risky asset price $S_0 \in \mathbb{R}$, an expected return process $(\mu_t)_{t \in [0,T]}$, and a volatility process $(\sigma_t)_{t \in [0,T]}$ for which $\mu=\sigma\kappa$ with a market price of risk $\kappa \in \mathbb{H}^2$. For better readability, all proofs are relegated to Section~\ref{s:proofs}.

\subsection{Frictionless Optimisation}

Agent $n$'s optimiser for the frictionless model~\eqref{eq:pref1} can be computed directly by pointwise optimisation
\begin{equation}\label{eq:indopt}
\hat\varphi^n_t:=\begin{cases}\displaystyle \frac{\mu_t}{\gamma^n\sigma_t^2}-\frac{\beta^n_t}{\sigma_t}, & \sigma_t \neq 0,\\ \displaystyle x^n, & \sigma_t=0, \end{cases} \quad t \in (0,T].
\end{equation}

\begin{remark}
Note that the optimal strategy \eqref{eq:indopt} is not determined uniquely on the set $\{\sigma=0\}$, since these values do not contribute to the payoff~\eqref{eq:pref1}. We therefore choose arbitrary values that ensure market clearing. All subsequent result are independent of this choice.
\end{remark}

\subsection{Optimisation with Transaction Costs}

Unlike its frictionless counterpart, the frictional optimisation problem~\eqref{eq:pref2} is no longer myopic and therefore cannot be solved directly using pointwise optimisation. However, \eqref{eq:pref2} can be rewritten as 
\begin{equation*}
J^n_\lambda(\dot{\varphi})=-\mathbb E\bigg[\int_0^T \bigg(\frac{\gamma^n\sigma_t^2}{2} (\varphi_t-\hat\varphi^n_t)^2 +\frac{\lambda}{2}\dot{\varphi}_t^2\bigg) \mathrm{d}t\bigg]+\mathbb E\bigg[\int_0^T \frac{\gamma^n}{2}\bigg(\big(\sigma_t\hat\varphi_t^n\big)^2-(\beta^n_t)^2\bigg)\mathrm{d}t\bigg].
\end{equation*}
Note that the second expectation on the right-hand side of this decomposition is finite for $\kappa, \beta \in \mathbb{H}^2$. Therefore, maximising the frictional mean--variance functional $J^n_\lambda$ is equivalent to solving a quadratic tracking problem, where the target is the frictionless optimiser~\eqref{eq:indopt}:
\begin{equation}\label{eq:pref3}
\mathbb E\bigg[\int_0^T \left(\frac{\gamma^n\sigma_t^2}{2} (\varphi_t-\hat{\varphi}^n_t)^2 +\frac{\lambda}{2}\dot{\varphi}_t^2\right) \mathrm{d}t\bigg] \to\min!
\end{equation}
Problems of this type have been studied by \cite{kohlmann.tang.02,annkirchner.kruse.15,bank.voss.18}, for example. By strict convexity, each agent's optimal trading rate is characterised by the first-order condition that its G\^ateaux derivative vanishes in all directions \cite[Proposition~II.2.1]{ekeland.temam.99}.  A calculus of variations argument (compare \cite{bank.al.17,bouchard.al.17}) in turn shows that the optimal trading rate $\dot{\varphi}^n_t$ of agent $n$, and the corresponding position $\varphi_t^n$, are characterised by a forward-backward stochastic differential equation (FBSDE)\footnote{Here, the terminal condition for the trading rate is zero, because trades close to the terminal time $T$ can no longer earn back the trading costs that would need to be paid to implement them. More general terminal conditions are studied in~\cite{annkirchner.kruse.15,bank.voss.18}, for example.}:
\begin{alignat}{2}
\mathrm{d}\varphi^n_t &=\dot{\varphi}^n_t \mathrm{d}t, \qquad && \varphi^{n}_0= x^n,  \label{eq:indopt21}\\
\mathrm{d}\dot{\varphi}^n_t &=\frac{\gamma^n\sigma^2_t}{\lambda}(\varphi^n_t-\hat\varphi^n_t)\mathrm{d}t+\dot{Z}^n_t\mathrm{d}W_t, \qquad && \dot{\varphi}^n_T=0. \label{eq:indopt22}
\end{alignat}
Observe that the process $\dot{Z}^n$ needs to be determined as part of the solution here. Unlike for the constant volatilities $\sigma$ considered in \cite{bank.al.17,bouchard.al.17}, this equation cannot be solved by reducing to standard Riccati equations. Instead, a backward \emph{stochastic} Riccati equation (BSRDE) plays a crucial role in the analysis of~\cite{kohlmann.tang.02,annkirchner.kruse.15,bank.voss.18}. It is shown in \cite{kohlmann.tang.02} that for bounded $\sigma$, this equation has a unique solution. A localisation argument shows that  this remains true for $\sigma \in \mathbb{H}^2_{\mathrm{BMO}}$, which will be the natural space for our equilibrium analysis in Section~\ref{s:equilibrium}.

\begin{lemma}\label{lem:BRSDE}
For $\gamma,\lambda>0$ and $\sigma \in \mathbb{H}^2_{\mathrm{BMO}}$, the {\rm BSRDE}
\begin{equation}\label{eq:BRSDE}
c_t =\int_t^T \left(\frac{\gamma}{\lambda}\sigma^2_s-c_s^2\right)\mathrm{d}s-\int_t^TZ^c_s\mathrm{d}W_s, \quad  t\in[0,T],
\end{equation}
has a unique solution $(c,Z) \in \mathcal{S}^{\infty} \times \mathbb{H}^2_{\mathrm{BMO}}$. It satisfies 
\begin{equation}\label{eq:Linftybis}
0 \leq c_t \leq \frac{\gamma}{\lambda} \|\sigma\|_{\mathbb{H}^2_{\mathrm{BMO}}}^2, \quad  t \in [0,T].
\end{equation}
\end{lemma}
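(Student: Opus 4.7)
The plan is to first establish the \emph{a priori} bound \eqref{eq:Linftybis} directly from the equation, then deduce uniqueness by linearising the difference of two solutions, and finally obtain existence by approximating $\sigma$ with bounded processes and invoking the Kohlmann--Tang result quoted above. For the upper bound, if $(c, Z^c) \in \mathcal{S}^\infty \times \mathbb{H}^2_{\mathrm{BMO}}$ solves \eqref{eq:BRSDE}, the BMO property makes $\int_0^\cdot Z^c_s \mathrm{d}W_s$ a true martingale, so conditioning \eqref{eq:BRSDE} on $\mathcal{F}_t$ and using $c_s^2 \geq 0$ gives
\[
c_t = \mathbb{E}_t\bigg[\int_t^T\bigg(\frac{\gamma}{\lambda}\sigma_s^2 - c_s^2\bigg)\mathrm{d}s\bigg] \leq \mathbb{E}_t\bigg[\int_t^T \frac{\gamma}{\lambda}\sigma_s^2\,\mathrm{d}s\bigg] \leq \frac{\gamma}{\lambda}\|\sigma\|_{\mathbb{H}^2_{\mathrm{BMO}}}^2.
\]
For non-negativity, I would compare \eqref{eq:BRSDE} with the trivial solution $(0,0)$ of the BSDE driven by $-y^2$ with terminal value $0$: since $\frac{\gamma}{\lambda}\sigma^2 - y^2 \geq -y^2$ and, by the upper bound just derived, $c$ takes values in a bounded range on which $-y^2$ is Lipschitz in $y$, the standard BSDE comparison principle applies.

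For uniqueness, two solutions $(c^i, Z^i)$ are both confined to $[0, M]$ with $M := \frac{\gamma}{\lambda}\|\sigma\|_{\mathbb{H}^2_{\mathrm{BMO}}}^2$, and their difference $(\delta c, \delta Z) := (c^1 - c^2, Z^1 - Z^2)$ satisfies the \emph{linear} BSDE
\[
-\mathrm{d}\delta c_t = -(c^1_t + c^2_t)\delta c_t\,\mathrm{d}t - \delta Z_t \mathrm{d}W_t,\qquad \delta c_T = 0,
\]
with $L^\infty$-bounded coefficient $c^1 + c^2$. The integrating factor $\Gamma_t := \exp(-\int_0^t (c^1_s + c^2_s)\,\mathrm{d}s)$ turns $\Gamma \delta c$ into a bounded local martingale starting from $\Gamma_T \delta c_T = 0$, so $\delta c \equiv 0$, and then $\delta Z \equiv 0$ by the martingale representation.

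For existence, I would truncate $\sigma^n := (\sigma \wedge n) \vee (-n) \in \mathbb{L}^\infty$, observe that $\|\sigma^n\|_{\mathbb{H}^2_{\mathrm{BMO}}} \leq \|\sigma\|_{\mathbb{H}^2_{\mathrm{BMO}}}$, and invoke Kohlmann--Tang \cite{kohlmann.tang.02} to obtain a unique solution $(c^n, Z^n)$ of \eqref{eq:BRSDE} with $\sigma$ replaced by $\sigma^n$; the a priori estimates yield the uniform bound $0 \leq c^n \leq M$. Applying the linearisation from the uniqueness step to $c^n - c^m$ with the additional source $\frac{\gamma}{\lambda}((\sigma^n)^2 - (\sigma^m)^2)$ gives
\[
|c^n_t - c^m_t| \leq e^{2MT}\,\mathbb{E}_t\bigg[\int_0^T \frac{\gamma}{\lambda}|(\sigma^n)^2 - (\sigma^m)^2|\,\mathrm{d}s\bigg],
\]
and since $\sigma \in \mathbb{H}^p$ for every $p < \infty$ by the John--Nirenberg inequality, dominated convergence together with Doob's inequality imply that $(c^n)$ is Cauchy in $\mathcal{S}^p$ for every $p < \infty$; its limit $c$ inherits the bound $[0,M]$ and hence lies in $\mathcal{S}^\infty$. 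A standard BSDE stability estimate then produces the companion $Z^c \in \mathbb{H}^2$, and applying It\^o's formula to $c^2$ and rearranging (using $c \in [0,M]$ to discard the sign-definite cubic term) yields $\|Z^c\|^2_{\mathbb{H}^2_{\mathrm{BMO}}} \leq 2M^2$. I expect the most delicate point to be precisely this passage to the limit — converting the uniform $\mathcal{S}^\infty$ bound and $\mathcal{S}^p$ convergence into a solution in the target spaces $\mathcal{S}^\infty \times \mathbb{H}^2_{\mathrm{BMO}}$ — which is exactly where the BMO hypothesis on $\sigma$ is indispensable.
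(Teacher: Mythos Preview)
Your argument is correct, and the overall strategy---truncate $\sigma$, invoke Kohlmann--Tang for the bounded approximations, establish the uniform \emph{a priori} bound $0\le c^n\le M$, pass to the limit, and prove uniqueness via the integrating factor---matches the paper. The execution differs in two places, however, and it is worth noting what each route buys.

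First, to pass to the limit in $n$, the paper does \emph{not} prove a Cauchy estimate. Instead it observes that, since the truncated drivers are monotone in $n$ and Lipschitz on the \emph{a priori} range, the comparison theorem gives $c^n\uparrow$; the monotone limit $c$ is then identified as a solution by showing that the martingale parts $M^n$ converge pointwise and are bounded in $\mathbb{L}^2$. Your linearisation
\[
|c^n_t-c^m_t|\ \le\ \mathbb{E}_t\!\left[\int_0^T \tfrac{\gamma}{\lambda}\big|(\sigma^n_s)^2-(\sigma^m_s)^2\big|\,\mathrm{d}s\right]
\]
(followed by the energy inequality and Doob) is an equally valid route and has the advantage of giving a quantitative stability estimate in $\sigma$, at the cost of a slightly longer argument to extract the companion $Z$ and verify the limiting equation. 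Second, for the BMO bound on $Z^c$, the paper reads it off from the representation $M_t=-c_0+c_t+\int_0^t(\tfrac{\gamma}{\lambda}\sigma_s^2-c_s^2)\,\mathrm{d}s$ together with the energy inequality, whereas your application of It\^o's formula to $c^2$ (dropping the nonpositive cubic term) gives the cleaner explicit bound $\|Z^c\|_{\mathbb{H}^2_{\mathrm{BMO}}}^2\le 2M^2$. Both arguments are sound; yours is slightly more direct here.
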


With the auxiliary process $c$ at hand, the solution of the FBSDE~(\ref{eq:indopt21}--\ref{eq:indopt22}) characterising the optimal trading rate for the tracking problem~\eqref{eq:pref3}, or equivalently the original mean--variance optimisation~\eqref{eq:pref2}, can in turn be constructed as follows:

\begin{lemma}\label{lem:kt02}
For $\gamma,\lambda>0$ and $\sigma \in \mathbb{H}^2_{\mathrm{BMO}}$, let $c$ be the solution of the corresponding {\rm BSRDE}~\eqref{eq:BRSDE}. For a progressively measurable process $\xi$ satisfying $\sigma \xi \in \mathbb{H}^2$, define
\begin{equation}\label{eq:signal}
\bar{\xi}_t := \frac{\gamma}{\lambda}\mathbb E_t\bigg[\int_t^T \mathrm{e}^{-\int_t^s c_u \mathrm{d}u}\sigma^2_s \xi_s \mathrm{d}s\bigg], \quad t \in [0,T],
\end{equation}
and the linear $($random$)$ {\rm ODE}
\begin{equation}\label{eq:ode}
\dot{\varphi}_t= \bar{\xi}_t -c_t \varphi_t, \quad  t \in [0,T], \quad \varphi_0 = x,
\end{equation}
which has the explicit solution 
\begin{equation}\label{eq:varphi}
\varphi_t=\mathrm{e}^{-\int_0^t c_u \mathrm{d}u}x+\int_0^t \mathrm{e}^{-\int_s^t c_u \mathrm{d}u}\bar{\xi}_s \mathrm{d}s, \quad t \in [0,T].
\end{equation}
Then, for $\gamma=\gamma^n$, $x = x^n$, and $\xi=\hat{\varphi}^n$ from~\eqref{eq:indopt}, the corresponding solution $(\varphi^n,\dot{\varphi}^n)$ is optimal for~\eqref{eq:pref3} or equivalently~\eqref{eq:pref2}. Moreover, if $\sigma |\xi|^{\frac{1}{2}} \in \mathbb{H}^2_{\mathrm{BMO}}$, then $\dot \varphi$ and $\varphi$ are uniformly bounded.
\end{lemma}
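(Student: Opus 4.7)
The plan is to split the argument into three stages: first, verify that the explicit formula \eqref{eq:varphi} yields a solution of the FBSDE \eqref{eq:indopt21}--\eqref{eq:indopt22}; second, derive optimality from strict convexity of the tracking cost \eqref{eq:pref3}; and third, establish the uniform bounds under the additional BMO hypothesis.

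For the first stage, differentiating \eqref{eq:varphi} immediately recovers \eqref{eq:ode}, and $c_T=0$ together with $\bar{\xi}_T=0$ (empty integral in \eqref{eq:signal}) forces $\dot{\varphi}_T=0$. The backward dynamics come from an It\^o computation applied to $\dot{\varphi}_t = \bar{\xi}_t - c_t \varphi_t$: writing $e^{-\int_0^t c_u \mathrm{d}u}\bar{\xi}_t = M_t - \tfrac{\gamma}{\lambda}\int_0^t e^{-\int_0^s c_u \mathrm{d}u}\sigma_s^2 \xi_s \mathrm{d}s$ with $M$ the closing martingale of \eqref{eq:signal}, martingale representation supplies some $Z^M$ such that
\[\mathrm{d}\bar{\xi}_t = \Bigl(c_t \bar{\xi}_t - \tfrac{\gamma}{\lambda}\sigma_t^2\xi_t\Bigr)\mathrm{d}t + e^{\int_0^t c_u \mathrm{d}u} Z^M_t\, \mathrm{d}W_t.\]
Subtracting $\mathrm{d}(c_t\varphi_t)$ and using $\mathrm{d}c_t = (c_t^2 - \tfrac{\gamma}{\lambda}\sigma_t^2)\mathrm{d}t + Z^c_t\, \mathrm{d}W_t$ from Lemma \ref{lem:BRSDE}, the identity $\dot{\varphi}_t = \bar{\xi}_t - c_t\varphi_t$ produces the cancellation $c_t\bar{\xi}_t - c_t^2\varphi_t - c_t\dot{\varphi}_t = 0$, leaving exactly the driver $\tfrac{\gamma}{\lambda}\sigma_t^2(\varphi_t - \xi_t)$ of \eqref{eq:indopt22} (with $\xi=\hat{\varphi}^n$ recovering the original equation).

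The second stage uses that \eqref{eq:pref3} is strictly convex as a quadratic with nonnegative weights in $\dot{\varphi}$, so its G\^ateaux optimality condition---which, as described in the paragraph preceding the lemma, is exactly \eqref{eq:indopt21}--\eqref{eq:indopt22}---is both necessary and sufficient, provided the candidate is admissible. Admissibility ($\dot{\varphi}\in\mathbb{H}^2$ and $\varphi\sigma\in\mathbb{H}^2$) would follow from $\sigma\xi\in\mathbb{H}^2$, the boundedness of $c$ (Lemma \ref{lem:BRSDE}), and the explicit representation \eqref{eq:varphi}, via standard energy estimates for the linear ODE combined with the $\mathbb{H}^2_{\mathrm{BMO}}$ bound on $c$.

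For the third stage, the nonnegativity and boundedness of $c$ from Lemma \ref{lem:BRSDE} applied to \eqref{eq:signal} yield
\[|\bar{\xi}_t|\leq \tfrac{\gamma}{\lambda}\,\mathbb{E}_t\Bigl[\int_t^T \sigma_s^2 |\xi_s|\, \mathrm{d}s\Bigr]\leq \tfrac{\gamma}{\lambda}\bigl\|\sigma|\xi|^{1/2}\bigr\|_{\mathbb{H}^2_{\mathrm{BMO}}}^2\]
uniformly in $t$, so $\bar{\xi}\in\mathcal{S}^\infty$; substituting into \eqref{eq:varphi} and using $c\geq 0$ gives $|\varphi_t|\leq |x| + T\|\bar{\xi}\|_{\mathcal{S}^\infty}$, and then $\dot{\varphi} = \bar{\xi} - c\varphi$ is bounded as well since $c\in\mathcal{S}^\infty$. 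The main technical obstacle is the It\^o computation in the first stage: the precise cancellation between the BSRDE's quadratic nonlinearity $-c^2$ and the cross-terms arising from the product rule on $c\varphi$ is what produces a tractable structure, and this cancellation is also why the problem genuinely requires the stochastic BSRDE of Lemma \ref{lem:BRSDE} rather than the deterministic Riccati reduction available in \cite{bank.al.17,bouchard.al.17} for constant $\sigma$.
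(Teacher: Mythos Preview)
Your proposal is correct and follows essentially the same route as the paper: derive the dynamics of $\bar{\xi}$ via martingale representation (the paper phrases this as the linear BSDE~\eqref{eq:dynsig}), apply the product rule to $\dot{\varphi}=\bar{\xi}-c\varphi$ together with the BSRDE for $c$ to recover~\eqref{eq:indopt22}, and conclude optimality by strict convexity; the boundedness argument in your third stage is verbatim the paper's. The one place where the paper is more explicit than your ``standard energy estimates'' is admissibility: it first shows $\bar{\xi}\in\mathbb{H}^2$ by a Cauchy--Schwarz/Fubini bound, deduces $\varphi\in\mathcal{S}^2$, and then uses a BMO integration-by-parts estimate (Lemma~\ref{lem:increasing process}, applied with $A_t=\sup_{s\le t}\varphi_s^2$ and $\beta_t=\sigma_t^2$) to obtain $\varphi\sigma\in\mathbb{H}^2$ and to upgrade the local martingale part of $\dot{\varphi}$ to a square-integrable martingale---this is the concrete content behind your phrase ``combined with the $\mathbb{H}^2_{\mathrm{BMO}}$ bound.''
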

For uniformly bounded $\sigma$, this result is proved in \cite{kohlmann.tang.02}. For $\sigma \in \mathbb{H}^2_{\mathrm{BMO}}$, we provide a short self-contained proof in Section~\ref{s:proofs}. As a side product, we obtain that the solution coincides with its counterpart for the time-truncated ``auxiliary problem'' considered by~\cite{bank.voss.18}.

Lemma~\ref{lem:kt02} shows that for $t \in [0,T)$, the optimal strategy with transaction costs trades towards the ``signal process'' $\bar{\xi}_t/c_t$ at a (time-dependent and random) speed $c_t$ determined by the BSRDE~\eqref{eq:BRSDE}.\footnote{In particular, since $\bar{\xi}$ only depends on $\sigma^2\xi$, the optimiser for \eqref{eq:pref2} in independent of the (arbitrary) values chosen for the frictionless optimiser on $\{\sigma=0\}$.} For each agent's individual optimisation problem~\eqref{eq:pref3}, the signal is obtained from the corresponding frictionless optimiser~\eqref{eq:indopt}, by appropriate discounting of its expected future values at a rate also derived from the BSRDE. For our equilibrium analysis in Section~\ref{s:equilibrium}, the same construction will be applied to a different target strategy, see~\eqref{eq:eqbwd}.

\section{Equilibrium}\label{s:equilibrium}

With the characterisation of each agent's individually optimal strategy at hand, we now turn to the determination of the equilibrium asset prices for which the agents' aggregate demand for the risky asset equals its supply~$s$. For better readability, all proofs are deferred to Section~\ref{sec:proofseq}. 

\subsection{Frictionless Equilibrium}\label{sec:eqfrictionless}

We first consider the frictionless case.

\begin{definition}\label{def:equi}
A price process $S$ for the risky asset with initial value $S_0 \in \mathbb{R}$, expected returns $(\mu_t)_{t \in [0,T]}$ and volatility $(\sigma_t)_{t \in [0,T]}$ is called a \emph{(Radner) equilibrium}, if:
\begin{itemize}
\item[$(i)$] $\mu=\sigma\kappa$ for  $\kappa \in \mathbb{H}^2;$
\item[$(ii)$] the terminal condition $S_T=\mathfrak{S}$ is satisfied$;$
\item[$(iii)$] the agents' individual optimisation problems \eqref{eq:pref1} for the given price process $S$ have solutions $\varphi^1$ and $\varphi^2$ that clear the market for the risky asset at all times, $\varphi^1_t+\varphi^2_t =s$, $t \in [0,T]$.
\end{itemize}
\end{definition}

For any equilibrium $(S_0,\mu,\sigma)$, market clearing and the representation~\eqref{eq:indopt} for the agents' individually optimal strategies give
\[
\mu_t= \bar{\gamma}\big(s \sigma^2_t+\sigma_t(\beta^1_t+\beta^2_t)\big), \quad t \in [0,T], \quad \mbox{where } \bar{\gamma}:=\frac{\gamma^1\gamma^2}{\gamma^1+\gamma^2}.
\]
Accordingly, $(S,\sigma)$ solves the following quadratic BSDE:
\begin{equation}\label{eq:bsdebis1}
\mathrm{d}S_t=  \bar{\gamma}\big(s\sigma_t^2+\sigma_t (\beta^1_t+\beta^2_t)\big)\mathrm{d}t+\sigma_t\mathrm{d}W_t, \quad S_T=\mathfrak{S}.\end{equation}
Conversely, the individually optimal strategies~\eqref{eq:indopt} corresponding to the dynamics~\eqref{eq:bsdebis1} are admissible if $\sigma \in \mathbb{H}^2$ and evidently clear the market. Whence, existence and uniqueness of Radner equilibria are equivalent to existence and uniqueness of solutions of the quadratic BSDE~\eqref{eq:bsdebis1}.  Provided that the measure 
\begin{equation}\label{eq:Pbeta}
\mathbb{P}^\beta \sim \mathbb{P}, \mbox{ with density process } Z^\beta :=\mathcal{E}\bigg(-\int_0^\cdot \bar{\gamma}(\beta^1_t+\beta^2_t)\mathrm{d}W_t\bigg)
\end{equation}
is well defined, the BSDE~\eqref{eq:bsdebis1} can be rewritten in terms of the $\mathbb{P}^\beta$-Brownian motion $W^\beta=W-\int_0^\cdot  \bar{\gamma}(\beta^1_t+\beta^2_t) \mathrm{d}t$ as a purely quadratic BSDE,
\begin{equation}\label{eq:bsdebis}
\mathrm{d}S_t=  \bar{\gamma}s\sigma_t^2\mathrm{d}t+\sigma_t\mathrm{d}W^\beta_t, \quad S_T=\mathfrak{S}.
\end{equation}
If in addition the terminal condition $\mathfrak{S}$ is sufficiently integrable, it is well known that \eqref{eq:bsdebis} has an explicit solution in terms of the Laplace transform of $\mathfrak{S}$:
\begin{equation}\label{eq:laplace}
S_t= -\frac{1}{2\bar{\gamma}s} \log \mathbb{E}_t^{\beta}\Big[\mathrm{e}^{-2\bar{\gamma}s\mathfrak{S}}\Big], \quad t \in [0,T].
\end{equation}
To make sure the measure $\mathbb{P}^\beta$ is well defined and verify that~\eqref{eq:laplace} is indeed the unique solution of~\eqref{eq:bsdebis} in a suitable class, we make the following integrability assumption on the aggregate trading target $\beta^1+\beta^2$ and the terminal condition $\mathfrak{S}$:

\begin{assumption}\label{ass:int1}
$\beta^1+\beta^2 \in \mathbb{H}^2_{\mathrm{BMO}}$ and $|\mathfrak{S}|$ has finite exponential moments of all orders.
\end{assumption}

With this integrability assumption (which is for instance satisfied if $\beta^1+\beta^2$ and $\mathfrak{S}$ are uniformly bounded), we obtain the following existence and uniqueness result for the BSDE \eqref{eq:bsdebis1}:
\begin{proposition}
\label{prop:BSDE}
Suppose Assumption~\ref{ass:int1} is satisfied. Then, \eqref{eq:laplace} is the unique solution of~\eqref{eq:bsdebis1} among continuous, progressively measurable processes $S$ for which $(\mathrm{e}^{-2\bar\gamma s S_\tau})_{\tau\in\mathcal T_{0,T}}$ is uniformly $\mathbb{P}^\beta$-integrable. In particular, the price process~\eqref{eq:laplace} is the unique Radner equilibrium in this class.
\end{proposition}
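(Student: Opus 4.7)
The plan is to exploit the exponential change of variable $U := \mathrm{e}^{-2\bar\gamma s S}$, which linearises the purely quadratic BSDE \eqref{eq:bsdebis} into a martingale equation under $\mathbb{P}^\beta$. First, since $\bar\gamma(\beta^1+\beta^2)\in\mathbb{H}^2_{\mathrm{BMO}}$, Kazamaki's theorem makes $Z^\beta$ from \eqref{eq:Pbeta} a uniformly integrable $\mathbb{P}$-martingale, so $\mathbb{P}^\beta$ is a bona fide equivalent measure and $W^\beta$ is $\mathbb{P}^\beta$-Brownian by Girsanov. Kazamaki's reverse Hölder inequality further yields $Z^\beta_T \in \mathbb{L}^p(\mathbb{P})$ for some $p>1$, so Hölder combined with the finite exponential moments of $\mathfrak{S}$ from Assumption~\ref{ass:int1} gives $\mathrm{e}^{-2\bar\gamma s\mathfrak{S}}\in\mathbb{L}^q(\mathbb{P}^\beta)$ for every $q\geq 1$.

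For uniqueness, I would take any solution $(S,\sigma)$ in the stated class, rewrite \eqref{eq:bsdebis1} under $\mathbb{P}^\beta$ as \eqref{eq:bsdebis}, and apply Itô's formula to $U_t=\mathrm{e}^{-2\bar\gamma s S_t}$: the quadratic drift cancels the Itô correction, leaving $\mathrm{d}U_t=-2\bar\gamma s\sigma_t U_t\,\mathrm{d}W^\beta_t$. So $U$ is a non-negative local $\mathbb{P}^\beta$-martingale; the uniform $\mathbb{P}^\beta$-integrability of $(U_\tau)_{\tau\in\mathcal{T}_{0,T}}$ then promotes it via a standard localisation argument to a true $\mathbb{P}^\beta$-martingale closed by $U_T=\mathrm{e}^{-2\bar\gamma s\mathfrak{S}}$, and taking logarithms yields~\eqref{eq:laplace}. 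For existence, I would conversely define $U_t := \mathbb{E}^\beta_t[\mathrm{e}^{-2\bar\gamma s\mathfrak{S}}]$ and $S_t := -(2\bar\gamma s)^{-1}\log U_t$; the $\mathbb{L}^q(\mathbb{P}^\beta)$-bounds above make $U$ a strictly positive continuous $\mathbb{P}^\beta$-martingale in $\mathcal{S}^q(\mathbb{P}^\beta)$ for every $q\geq 1$, so $S$ is well-defined and continuous, and the uniform-integrability condition is automatic. The martingale representation theorem under $\mathbb{P}^\beta$ yields $\mathrm{d}U_t=H_t\,\mathrm{d}W^\beta_t$ for some predictable $H$; setting $\sigma_t:=-H_t/(2\bar\gamma s U_t)$ and applying Itô to $-(2\bar\gamma s)^{-1}\log U$ recovers~\eqref{eq:bsdebis}, hence~\eqref{eq:bsdebis1} after undoing the Girsanov shift.

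The main technical obstacle is the admissibility requirement $\sigma\in\mathbb{H}^2$ (equivalently $\kappa=\mu/\sigma\in\mathbb{H}^2$), needed both for the BSDE to be well-posed and for $S$ to qualify as a Radner equilibrium in the sense of Definition~\ref{def:equi}. Integrating \eqref{eq:bsdebis} over $[0,T]$ and taking $\mathbb{P}^\beta$-expectations formally gives $\bar\gamma s\,\mathbb{E}^\beta\big[\int_0^T\sigma_t^2\,\mathrm{d}t\big]=\mathbb{E}^\beta[\mathfrak{S}]-S_0$, provided $\int\sigma\,\mathrm{d}W^\beta$ is a true $\mathbb{P}^\beta$-martingale. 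I would establish this by showing that the stochastic logarithm $H/U$ of $U$ lies in $\mathbb{H}^2_{\mathrm{BMO}}(\mathbb{P}^\beta)$ using the uniform $\mathcal{S}^q(\mathbb{P}^\beta)$-bounds on $U$, and then transferring the resulting $\mathbb{H}^2(\mathbb{P}^\beta)$-bound for $\sigma$ back to $\mathbb{H}^2(\mathbb{P})$ by one more application of Kazamaki's reverse Hölder inequality to $Z^\beta$. Once admissibility is in hand, the final claim that \eqref{eq:laplace} is the unique Radner equilibrium in the stated class follows immediately from the equivalence, already established before the proposition, between Radner equilibria and solutions of \eqref{eq:bsdebis1}.
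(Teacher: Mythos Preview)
Your existence and uniqueness arguments via the exponential linearisation $U=\mathrm{e}^{-2\bar\gamma sS}$ are essentially identical to the paper's: It\^o's formula turns the local $\mathbb P^\beta$-martingale into a true one under the uniform integrability hypothesis, yielding~\eqref{eq:laplace}; conversely, martingale representation applied to $U_t=\mathbb E^\beta_t[\mathrm{e}^{-2\bar\gamma s\mathfrak S}]$ gives the candidate $\sigma$.

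The genuine gap is in your admissibility step. You assert that $H/U\in\mathbb H^2_{\mathrm{BMO}}(\mathbb P^\beta)$ follows from the $\mathcal S^q(\mathbb P^\beta)$-bounds on $U$, but this is not true in general: having $U_T\in\bigcap_q\mathbb L^q(\mathbb P^\beta)$ only controls unconditional moments of $U$, whereas BMO of the stochastic logarithm is equivalent to a \emph{conditional} reverse H\"older/Muckenhoupt condition on $U$ (Kazamaki). Under Assumption~\ref{ass:int1} the terminal value $\mathfrak S$ is unbounded, and indeed the paper only obtains $\sigma\in\mathbb H^2_{\mathrm{BMO}}$ under the strictly stronger hypothesis $\mathfrak S\in\mathbb L^\infty$ (Corollary~\ref{cor:BSDE}). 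Your subsequent use of the identity $\bar\gamma s\,\mathbb E^\beta[\int_0^T\sigma^2\,\mathrm d t]=\mathbb E^\beta[\mathfrak S]-S_0$ therefore remains formal.

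The paper avoids this detour entirely: it writes $\sigma_t=-Z_t/(2\bar\gamma s\,M_t)$ with $M_t=\mathbb E^\beta_t[\mathrm e^{-2\bar\gamma s\mathfrak S}]$ and $\mathrm d M=Z\,\mathrm dW^\beta$, and then bounds $\mathbb E[\int_0^T\sigma_t^2\,\mathrm d t]$ directly by successive H\"older inequalities, using (i) Doob's maximal inequality for the $\mathbb P^\beta$-submartingale $M^{-1}$ together with the exponential moments of $+\mathfrak S$ to control $\sup_t M_t^{-q}$, (ii) the Muckenhoupt $A_p$ property of $Z^\beta$ (available because $\beta^1+\beta^2\in\mathbb H^2_{\mathrm{BMO}}$) to control $(Z^\beta_T)^{-r}$, and (iii) $Z\in\mathbb H^{2+\varepsilon}(\mathbb P^\beta)$ from Doob applied to $M$. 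This yields $\sigma\in\mathbb H^2(\mathbb P)$ without ever claiming BMO.
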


\begin{remark}\label{rem:nonunique}
As already observed in {\rm\cite{delbaen.al.15}}, the class of price processes for which $(\mathrm{e}^{-2\bar\gamma s S_\tau})_{\tau\in\mathcal T_{0,T}}$ is uniformly $\mathbb{P}^\beta$-integrable is the largest possible class for uniqueness. Indeed, if this family is \emph{not} uniformly $\mathbb{P}^\beta$-integrable, then $\mathrm{e}^{-2\bar{\gamma}s S}$ is a strict local $\mathbb{P}^\beta$-martingale by It\^o's formula and the dynamics~\eqref{eq:bsdebis}, and hence a strict $\mathbb{P}^\beta$-supermartingale since it is also positive. As a result, the corresponding price process $S$ is strictly larger than~\eqref{eq:laplace}.  
\end{remark}

The non-uniqueness described in Remark~\ref{rem:nonunique} can only arise for price processes that are unbounded from below. In fact, uniqueness always holds among price processes $S$ which admit an equivalent martingale measure with square-integrable density process $Z$ with respect to $\mathbb{P}^\beta$. Indeed, in view of the dynamics~\eqref{eq:bsdebis}, we necessarily have $Z=\mathcal{E}\big(-\bar{\gamma}s\int_0^\cdot \sigma_t \mathrm{d}W^\beta_t\big)$ and in turn
\begin{align*}
0 \leq \mathrm{e}^{-2\bar{\gamma}s S_\tau}= \mathrm{e}^{-2\bar{\gamma}^2s^2\int_0^\tau \sigma^2_t \mathrm{d}t-2\bar{\gamma}s\int_0^\tau \sigma_t \mathrm{d}W^\beta_t} \leq \mathrm{e}^{-\bar{\gamma}^2s^2\int_0^\tau \sigma^2_t \mathrm{d}t-2\bar{\gamma}s\int_0^\tau \sigma_t \mathrm{d}W^\beta_t}=Z_\tau^2, \quad \mbox{for any $\tau \in\mathcal{T}_{0,T}$.}
\end{align*}
Whence uniform $\mathbb{P}^\beta$-integrability of $(\mathrm{e}^{-2\bar\gamma s S_\tau})_{\tau\in\mathcal T_{0,T}}$ follows from Doob's maximal inequality in this case. If the terminal condition is bounded, uniqueness even holds among all price processes $S$ admitting an equivalent martingale measure,\footnote{Such a notion of uniqueness is used in \cite{kramkov.pulido.16}, for example.} since $S$ is then automatically bounded.

\begin{corollary}\label{cor:BSDE}
Suppose Assumption~\ref{ass:int1} is satisfied and, moreover, $\mathfrak S\in\mathbb L^\infty$. Then, \eqref{eq:laplace} is the unique solution of \eqref{eq:bsdebis} in $\mathcal{S}^\infty \times \mathbb{H}^2_{\mathrm{BMO}}$, and therefore the unique Radner equilibrium among bounded price processes.
\end{corollary}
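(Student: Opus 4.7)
The plan is to deduce the corollary from Proposition~\ref{prop:BSDE}. Two steps are required: first, to verify that the explicit process~\eqref{eq:laplace}, together with the volatility $\sigma$ arising from its martingale representation, indeed lies in $\mathcal S^\infty\times\mathbb H^2_{\mathrm{BMO}}$; and second, to check that any solution $(S,\sigma)$ in this class automatically falls inside the uniqueness regime of Proposition~\ref{prop:BSDE}, which then forces $S$ to coincide with~\eqref{eq:laplace}.

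For the first step, boundedness of $\mathfrak S$ implies that $\mathrm{e}^{-2\bar\gamma s\mathfrak S}$ takes values in a deterministic compact subinterval of $(0,\infty)$; the same then holds for each $\mathbb P^\beta$-conditional expectation, and therefore $S\in\mathcal S^\infty$ after taking the logarithm. To obtain $\sigma\in\mathbb H^2_{\mathrm{BMO}}$, I apply It\^o's formula to $Y_t:=\mathrm{e}^{-2\bar\gamma sS_t}$ using the purely quadratic dynamics~\eqref{eq:bsdebis}: the drift and the It\^o correction cancel exactly, leaving
\[
\mathrm{d}Y_t=-2\bar\gamma s\,Y_t\,\sigma_t\,\mathrm{d}W^\beta_t.
\]
Since $Y$ is uniformly bounded above \emph{and} away from zero, it is a bounded $\mathbb P^\beta$-martingale, and the conditional It\^o isometry gives, for every $\tau\in\mathcal T_{0,T}$,
\[
4\bar\gamma^2s^2\,\mathbb E^\beta_\tau\!\left[\int_\tau^T Y_t^2\sigma_t^2\,\mathrm{d}t\right]=\mathbb E^\beta_\tau[Y_T^2]-Y_\tau^2\leq\|Y\|_{\mathcal S^\infty}^2.
\]
Dividing through by the uniform lower bound of $Y^2$ yields $\sigma\in\mathbb H^2_{\mathrm{BMO}}(\mathbb P^\beta)$. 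Since $\bar\gamma(\beta^1+\beta^2)\in\mathbb H^2_{\mathrm{BMO}}$ by Assumption~\ref{ass:int1}, the density process $Z^\beta$ is itself a BMO martingale, so the invariance of the $\mathbb H^2_{\mathrm{BMO}}$-class under this equivalent change of measure (Kazamaki's theorem) transfers the bound back to the original measure, giving $\sigma\in\mathbb H^2_{\mathrm{BMO}}$.

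For the second step, any $S\in\mathcal S^\infty$ makes $(\mathrm{e}^{-2\bar\gamma sS_\tau})_{\tau\in\mathcal T_{0,T}}$ uniformly bounded, hence trivially uniformly $\mathbb P^\beta$-integrable, and Proposition~\ref{prop:BSDE} identifies $S$ with~\eqref{eq:laplace}; the associated volatility $\sigma$ is then pinned down by applying the martingale representation theorem to the difference of two candidate solutions. The final clause of the corollary follows immediately: by the bijection between Radner equilibria and solutions of~\eqref{eq:bsdebis1} established in Section~\ref{sec:eqfrictionless}, any bounded Radner equilibrium price produces, via the same It\^o computation as above, a volatility in $\mathbb H^2_{\mathrm{BMO}}$, so uniqueness among bounded price processes follows from the uniqueness in $\mathcal S^\infty\times\mathbb H^2_{\mathrm{BMO}}$. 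The only step that goes beyond direct computation is the BMO-invariance under the Girsanov transform between $\mathbb P$ and $\mathbb P^\beta$; this is where Assumption~\ref{ass:int1} is genuinely used, but it is a standard consequence of the BMO theory once $Z^\beta$ is known to be a BMO martingale.
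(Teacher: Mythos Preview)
Your argument is correct and follows the same overall strategy as the paper: uniqueness is obtained by observing that any bounded $S$ makes $(\mathrm{e}^{-2\bar\gamma sS_\tau})_{\tau}$ uniformly $\mathbb P^\beta$-integrable, so Proposition~\ref{prop:BSDE} applies directly. The only difference is that, for existence in $\mathcal S^\infty\times\mathbb H^2_{\mathrm{BMO}}$, the paper simply cites a classical result (\cite[Corollary~2.1]{briand2013simple}), whereas you supply a direct and self-contained argument via the exponential transform $Y=\mathrm{e}^{-2\bar\gamma sS}$ and Kazamaki's BMO-invariance; this is a perfectly valid and arguably more transparent route. One minor remark: for the final clause (uniqueness among bounded Radner equilibria) you do not actually need the BMO property of $\sigma$---boundedness of $S$ alone already places you in the uniqueness class of Proposition~\ref{prop:BSDE}---so that part of your argument, while correct, is slightly more than is required.
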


\subsection{Equilibrium with Transaction Costs}\label{s:eq}

We now turn to the main subject of the present study, equilibria with transaction costs. The notion of equilibrium is the same as in Definition \ref{def:equi}, with the exception that the individual optimisation problems are given by \eqref{eq:pref2} rather than \eqref{eq:pref1}.

\medskip
To clear the market, purchases must equal sales at all times, i.e., all individual trading rates must sum to zero. After summing the backward equations \eqref{eq:indopt22} for both agents' optimal trading rates and using the market clearing condition $\varphi_t^2=s-\varphi_t^1$, this leads to
\[
0 =   \bigg(\frac{\sigma_t}{\lambda}\big(\gamma^1 \beta^1_t + \gamma^2 \beta^2_t\big) +\frac{\sigma_t^2}{\lambda}\bigg(\gamma^2s+(\gamma^1-\gamma^2)\varphi^1_t\bigg)-\frac{2\mu_t}{\lambda} \bigg)\mathrm{d}t+\big(\dot{Z}^1_t+\dot{Z}^2_t\big)\mathrm{d}W_t.
\]
Since any local martingale of finite variation is constant, it follows that
\begin{equation}\label{eq:lipr2}
	\mu_t =\sigma_t\left(\frac{\gamma^1\beta_t^1+\gamma^2\beta_t^2}{2}+\sigma_t\bigg(\frac{\gamma^2s}{2} +\frac{\gamma^1-\gamma^2}{2}\varphi^1_t\bigg)\right), \quad t \in [0,T].
\end{equation} 
Plugging this back into agent 1's individual optimality condition \eqref{eq:indopt22} and recalling the terminal condition $\dot{\varphi}^1_T=0$ as well as the forward equation \eqref{eq:indopt21}, we obtain the following FBSDE:
\begin{alignat}{2}
	\mathrm{d}\varphi^1_t &= \dot{\varphi}^1_t, \qquad && \varphi^1_0=x^1, \label{eq:eqfwd}\\
	\mathrm{d}\dot{\varphi}^1_t &=  \frac{(\gamma^1+\gamma^2)}{2\lambda}\bigg( \frac{\gamma^1\beta^1_t-\gamma^2\beta^2_t}{\gamma^1+\gamma^2}\sigma_t-\frac{\gamma^2 s}{\gamma^1+\gamma^2}\sigma_t^2 +\varphi^1_t \sigma_t^2  \bigg)\mathrm{d}t+\dot{Z}^1_t\mathrm{d}W_t, \qquad && \dot{\varphi}^1_T=0.\label{eq:eqbwd}
\end{alignat}
The corresponding optimal strategy for agent $2$ is determined by market clearing. As in the frictionless case discussed in Section~\ref{sec:eqfrictionless}, the corresponding equilibrium volatility is pinned down by the terminal condition $S_T=\mathfrak{S}$. More specifically, inserting \eqref{eq:lipr2} into \eqref{eq:price}, we obtain the following BSDE, which is coupled to the forward-backward system $(\ref{eq:eqfwd}$--$\ref{eq:eqbwd})$:
\begin{equation}\label{eq:BSDES}
	\mathrm{d}S_t = \left(\frac{\gamma^1-\gamma^2}{2}\varphi^{1}_t\sigma_t^2+\frac{\gamma^2 s}{2}\sigma_t^2+\frac{\gamma^1\beta^1_t+\gamma^2\beta^2_t}{2}\sigma_t\right) \mathrm{d}t +\sigma_t \mathrm{d}W_t, \quad S_T=\mathfrak{S}.
\end{equation}
By reversing these arguments, it is straightforward to verify that sufficiently integrable solutions of the FBSDE~$(\ref{eq:eqfwd}$--$\ref{eq:BSDES})$ indeed identify Radner equilibria with transaction costs (sufficient conditions for the existence of a solution to the FBSDE~$(\ref{eq:eqfwd}$--$\ref{eq:BSDES})$ are provided in Theorem \ref{thm:ex3} below):

\begin{proposition}\label{prop:eq}
	Suppose that there exists a solution of the {\rm FBSDE}~$(\ref{eq:eqfwd}$--$\ref{eq:BSDES})$ with $(\dot{\varphi}^1,\sigma)\in  \mathbb{H}^2\times\mathbb H^2_{\mathrm{BMO}}$. Then, $(S_0,\mu,\sigma)$ with $\mu$ as in \eqref{eq:lipr2} is a Radner equilibrium with transaction costs.
\end{proposition}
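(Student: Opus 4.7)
The plan is to verify the three clauses of Definition~\ref{def:equi}, with \eqref{eq:pref2} in place of \eqref{eq:pref1}, using the given FBSDE solution. First, declare agent~2's position and trading rate via market clearing: $\varphi^2_t := s - \varphi^1_t$ and $\dot\varphi^2_t := -\dot\varphi^1_t$, which ensures $\varphi^2_0 = x^2$ and $\varphi^1 + \varphi^2 \equiv s$. For the market-price-of-risk requirement, read from \eqref{eq:lipr2} that
\[
\kappa_t = \frac{\gamma^1\beta^1_t + \gamma^2\beta^2_t}{2} + \sigma_t\left(\frac{\gamma^2 s}{2} + \frac{\gamma^1-\gamma^2}{2}\varphi^1_t\right);
\]
with $\beta^n \in \mathbb{H}^2$ and $\sigma \in \mathbb{H}^2_{\mathrm{BMO}}$, the remaining point is $\sigma\varphi^1 \in \mathbb{H}^2$, which is also what is needed for admissibility of $\dot\varphi^1$ and $\dot\varphi^2$ in the frictional sense. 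This follows by combining $\dot\varphi^1 \in \mathbb{H}^2$, which via Cauchy--Schwarz yields $\varphi^1 \in \mathcal S^2$, with the exponential integrability of $\int_0^T \sigma_s^2 \mathrm{d}s$ afforded by the John--Nirenberg inequality for BMO; the same estimate gives $\sigma\varphi^2\in\mathbb{H}^2$.

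\medskip

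The terminal condition $S_T=\mathfrak{S}$ is built into \eqref{eq:BSDES}, and market clearing has been arranged. It remains to show that $\varphi^n$ is optimal for agent~$n$ in \eqref{eq:pref2}. The decisive algebraic step is to substitute the defining formula \eqref{eq:lipr2} for $\mu$ into agent~1's frictionless target
\[
\hat\varphi^1_t = \frac{\mu_t}{\gamma^1\sigma_t^2} - \frac{\beta^1_t}{\sigma_t},
\]
and then into the drift $(\gamma^1\sigma_t^2/\lambda)(\varphi^1_t - \hat\varphi^1_t)$ of agent~1's first-order FBSDE \eqref{eq:indopt21}--\eqref{eq:indopt22}. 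Straightforward cancellation reproduces exactly the drift of \eqref{eq:eqbwd}, so $(\varphi^1,\dot\varphi^1,\dot Z^1)$ is the first-order triple for agent~1. The analogous substitution for agent~2 with $\hat\varphi^2 = \mu/(\gamma^2\sigma^2) - \beta^2/\sigma$ and the candidate $(\varphi^2,\dot\varphi^2,\dot Z^2) = (s-\varphi^1,-\dot\varphi^1,-\dot Z^1)$ shows that agent~2's first-order FBSDE is also satisfied. Alternatively, since \eqref{eq:eqbwd} was itself derived by summing the two FOCs and eliminating the finite-variation local martingale, verifying one of the two conditions forces the other by construction.

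\medskip

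To upgrade these first-order conditions to global optimality, invoke the strict convexity of the tracking functional \eqref{eq:pref3} in the trading rate $\dot\varphi$: the penalty $\tfrac{\lambda}{2}\dot\varphi_t^2$ is strictly convex in $\dot\varphi$, and the term $\tfrac{\gamma^n\sigma_t^2}{2}(\varphi_t - \hat\varphi_t^n)^2$ is convex in $\dot\varphi$ once $\varphi = x^n + \int_0^\cdot \dot\varphi_s \mathrm{d}s$ is substituted. Hence the G\^ateaux-derivative criterion of \cite[Proposition~II.2.1]{ekeland.temam.99}, recalled just before \eqref{eq:indopt21}, is both necessary and sufficient, and each $\varphi^n$ is globally optimal for agent~$n$. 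All three clauses of the Radner-equilibrium definition are therefore satisfied. The only substantive obstacle in writing out the full proof is the careful bookkeeping in the algebraic simplification of $\mu$; the admissibility and integrability checks are routine consequences of the BMO hypothesis on $\sigma$ and $\dot\varphi^1\in\mathbb{H}^2$.
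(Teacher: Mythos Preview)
Your overall strategy matches the paper's proof: verify the terminal condition and market clearing directly, check the market-price-of-risk condition via $\sigma\varphi^1\in\mathbb{H}^2$, and then show that plugging \eqref{eq:lipr2} into agent $n$'s frictionless target reduces the individual FBSDE~(\ref{eq:indopt21}--\ref{eq:indopt22}) to the equilibrium FBSDE~(\ref{eq:eqfwd}--\ref{eq:eqbwd}). The optimality step and the algebra are fine.

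There is, however, a genuine gap in your integrability argument. You correctly deduce $\varphi^1\in\mathcal S^2$ from $\dot\varphi^1\in\mathbb H^2$, but then you appeal to the John--Nirenberg inequality to conclude $\sigma\varphi^1\in\mathbb H^2$. John--Nirenberg gives exponential integrability of $\int_0^T\sigma_s^2\,\mathrm{d}s$, hence $\int_0^T\sigma_s^2\,\mathrm{d}s\in L^q$ for every $q<\infty$; but since you only know $\sup_t(\varphi^1_t)^2\in L^1$, H\"older does not close, and the crude bound
\[
\mathbb E\!\left[\int_0^T\sigma_t^2(\varphi^1_t)^2\,\mathrm{d}t\right]\le \mathbb E\!\left[\Big(\sup_{t}(\varphi^1_t)^2\Big)\int_0^T\sigma_t^2\,\mathrm{d}t\right]
\]
cannot be controlled this way. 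The paper instead uses Lemma~\ref{lem:increasing process}: with $A_t:=\sup_{s\le t}(\varphi^1_s)^2$ nondecreasing and adapted, the BMO condition on $\sigma$ yields directly
\[
\mathbb E\!\left[\int_0^T\sigma_t^2 A_t\,\mathrm{d}t\right]\le \|\sigma\|^2_{\mathbb H^2_{\mathrm{BMO}}}\,\mathbb E[A_T]<\infty,
\]
which is exactly what is needed. The point is that the \emph{conditional} BMO bound $\mathbb E_\tau[\int_\tau^T\sigma_s^2\,\mathrm{d}s]\le\|\sigma\|^2_{\mathbb H^2_{\mathrm{BMO}}}$, combined with the increasing structure of $A$, replaces the missing higher integrability of $\varphi^1$. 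Replace your John--Nirenberg sentence by this estimate and the proof goes through.
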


Due to the coupling between the forward-backward equations~(\ref{eq:eqfwd}--\ref{eq:BSDES}) a direct existence proof by fixed-point iteration is elusive, unless the time horizon is sufficiently short so that very little trading is possible with costs on the trading rate. Establishing existence for sufficiently small transaction costs is also delicate, since the corresponding trading rates explode, which needs to be handled by a suitable renormalisation. Inspired by~\cite{sannikov.skrzypacz.16}, we therefore focus on a different smallness condition, namely the case where both agents risk aversions are similar, $\gamma^1 \approx \gamma^2$.

For $\gamma^1=\gamma^2$, the BSDE~\eqref{eq:BSDES} for the frictional equilibrium price decouples from $(\ref{eq:eqfwd}$--$\ref{eq:eqbwd})$ and reduces to its frictionless counterpart~\eqref{eq:bsdebis}. Accordingly, for $\gamma^1 \approx \gamma^2$, we expect the frictional equilibrium price $S$ and its volatility $\sigma$ to be close to their frictionless versions $\bar{S}$ and $\bar{\sigma}$, respectively. To make this precise, the frictionless equilibrium volatility $\bar{\sigma}$ and the volatilities $\beta^1, \beta^2$ of the agents' random endowments need to be sufficiently integrable:

\begin{assumption}\label{assump:friction}
$(i)$ the frictionless equilibrium volatility $\bar{\sigma}$ from Proposition~\ref{prop:BSDE} belongs to $\mathbb H^2_{\mathrm{BMO}}$;

\medskip
$(ii)$ $\beta^1, \beta^2 \in \mathbb{H}^2_{\mathrm{BMO}}$, so that we can define the measure 
\[\mathbb Q^\beta \sim \mathbb P \quad \mbox{with density process}\quad \frac{\mathrm{d} \mathbb Q^\beta}{\mathrm{d} \mathbb P} := \mathcal{E}\bigg(-\int_0^\cdot \bigg(\gamma^2 s \bar \sigma_t+ \frac{\gamma^1\beta^1_t+\gamma^2 \beta^2_t}{2}\bigg) \mathrm{d}W_t\bigg)_T;
\]

$(iii)$ for some $p>2$, we have $\mathbb{E}^\mathbb{Q^\beta}\Big[\exp\Big(p \int_0^T\big(\gamma^2 s \bar\sigma_t+ \frac{\gamma^1\beta^1_t+\gamma^2 \beta^2_t}{2}\big)^2 \mathrm{d} t\Big)\Big] < \infty$.
\end{assumption}

We can now formulate our main result. It shows that an equilibrium with transaction costs exists, provided that the agents' risk aversions $\gamma^1$, $\gamma^2$ are sufficiently similar. This equilibrium is also unique in a neighbourhood of the frictionless equilibrium price $\bar S$ and volatility $\bar{\sigma}$. To make these statements precise we define, for any $R>0$, the following set of progressively measurable processes:
\[
\mathcal B_\infty(R):=\big\{(S,\sigma): \|S-\bar S\|^2_{\mathcal S^\infty}+\|\sigma-\bar\sigma\|^2_{\mathbb H^2_{\rm BMO}(\mathbb Q^\beta)}\leq R^2\big\}.
\]
Our main result then can be formulated as follows:

\begin{theorem}\label{thm:ex3}
	Suppose Assumptions~\ref{ass:int1} and \ref{assump:friction} are satisfied. Then, there exists $R_{\rm max}>0$ such that for any $R<R_{\rm max}$ the system of coupled {\rm FBSDEs} $(\ref{eq:eqfwd}$--$\ref{eq:BSDES})$ has a unique solution $(S,\sigma)\in\mathcal B_\infty(R)$ provided that $|\gamma^1-\gamma^2|$ is small enough to satisfy the conditions of Theorem~\ref{thm:ex}.\footnote{In Theorem~\ref{thm:ex}, an exact upper bound for $\gamma^1-\gamma^2$ depending on $R$ and an explicit expression for $R_{\rm max}$ are provided.}
	\end{theorem}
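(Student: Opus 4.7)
The plan is to reformulate the fully coupled FBSDE system $(\ref{eq:eqfwd}$--$\ref{eq:BSDES})$ as a one-dimensional Picard iteration on the pair $(S,\sigma)$ only, exploiting the stochastic Riccati machinery of Lemmas~\ref{lem:BRSDE}--\ref{lem:kt02} to solve, for any frozen volatility, the forward-backward block for $(\varphi^1,\dot\varphi^1)$ explicitly. The starting point is that when $\gamma^1=\gamma^2$ the BSDE~\eqref{eq:BSDES} for the frictional equilibrium price decouples from $(\ref{eq:eqfwd}$--$\ref{eq:eqbwd})$ and reduces to the frictionless BSDE~\eqref{eq:bsdebis1}, so Proposition~\ref{prop:BSDE} together with Assumption~\ref{assump:friction}$(i)$ supplies the natural centre $(\bar{S},\bar{\sigma})$ around which we search perturbatively when $|\gamma^1-\gamma^2|$ is small.

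Concretely, fix $R<R_{\max}$ and define a map $\Phi:\mathcal{B}_\infty(R)\to\mathcal{B}_\infty(R)$ in three steps. Given an input $(S,\sigma)\in\mathcal{B}_\infty(R)$, the triangle inequality together with Assumption~\ref{assump:friction}$(i)$ ensures $\sigma\in\mathbb{H}^2_{\mathrm{BMO}}$, so Lemma~\ref{lem:BRSDE} applied with the effective risk aversion $\bar{\gamma}_{\mathrm{eff}}:=(\gamma^1+\gamma^2)/2$ yields the BSRDE solution $(c,Z^c)$. Matching~\eqref{eq:eqbwd} with the generic tracking equation~\eqref{eq:indopt22} identifies the target
\[
\hat{\xi}_t:=\frac{\gamma^2 s}{\gamma^1+\gamma^2}-\frac{\gamma^1\beta^1_t-\gamma^2\beta^2_t}{(\gamma^1+\gamma^2)\sigma_t},
\]
and Lemma~\ref{lem:kt02} produces a uniformly bounded position $\varphi^1$ and rate $\dot{\varphi}^1$ via the closed-form representations~\eqref{eq:signal}--\eqref{eq:varphi}. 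Inserting this bounded $\varphi^1$ as a coefficient in the drift of~\eqref{eq:BSDES} and Girsanov-transforming as in the passage from~\eqref{eq:bsdebis1} to~\eqref{eq:bsdebis} yields a purely quadratic BSDE whose quadratic coefficient $\tfrac{\gamma^2 s+(\gamma^1-\gamma^2)\varphi^1_t}{2}$ is bounded; arguing as in Proposition~\ref{prop:BSDE} and Corollary~\ref{cor:BSDE} produces a unique solution $(S',\sigma')\in\mathcal{S}^\infty\times\mathbb{H}^2_{\mathrm{BMO}}$. Set $\Phi(S,\sigma):=(S',\sigma')$; fixed points of $\Phi$ are exactly the solutions of $(\ref{eq:eqfwd}$--$\ref{eq:BSDES})$ in the desired class, and Proposition~\ref{prop:eq} then certifies that they correspond to Radner equilibria with transaction costs.

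The bulk of the work is to verify, for $|\gamma^1-\gamma^2|$ small enough, that $\Phi$ preserves $\mathcal{B}_\infty(R)$ and is a strict contraction in its natural norm. Invariance is obtained by comparing $(S',\sigma')$ to $(\bar{S},\bar{\sigma})$: the two BSDEs differ only by the extra drift $\tfrac{\gamma^1-\gamma^2}{2}\varphi^1_t(\sigma'_t)^2$, so BMO-stability for quadratic BSDEs bounds $\|S'-\bar{S}\|_{\mathcal{S}^\infty}+\|\sigma'-\bar{\sigma}\|_{\mathbb{H}^2_{\mathrm{BMO}}(\mathbb{Q}^\beta)}$ by a constant times $|\gamma^1-\gamma^2|\|\varphi^1\|_{\mathcal{S}^\infty}$, and the latter is controlled in terms of $R$ and the data via Lemma~\ref{lem:kt02}. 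For the contraction, one tracks the difference $\Phi(S_1,\sigma_1)-\Phi(S_2,\sigma_2)$ through all three steps, combining BMO-stability of the BSRDE~\eqref{eq:BRSDE} in $\sigma$, Lipschitz stability of the explicit formula~\eqref{eq:varphi} in $(c,\bar{\xi})$, and BMO-stability of the quadratic price BSDE in its coefficients. Crucially, the only channel through which $(S,\sigma)$ feeds back into $(S',\sigma')$ via $\varphi^1$ carries the prefactor $\tfrac{\gamma^1-\gamma^2}{2}$, so the resulting Lipschitz constant is proportional to $|\gamma^1-\gamma^2|$ and can be driven below one; Banach's theorem then yields the unique $(S,\sigma)\in\mathcal{B}_\infty(R)$, with the quantitative thresholds on $R$ and $|\gamma^1-\gamma^2|$ being precisely those quoted in Theorem~\ref{thm:ex}.

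The main obstacle is the stability estimate for the BSRDE~\eqref{eq:BRSDE}: its nonlinearity $-c^2$ is only locally Lipschitz, and the a priori bound~\eqref{eq:Linftybis} for $\|c\|_{\mathcal{S}^\infty}$ grows quadratically in $\|\sigma\|_{\mathbb{H}^2_{\mathrm{BMO}}}$, creating a potentially dangerous loop between the size of the iterate and the Lipschitz constant with which it propagates through the map. Overcoming this requires the novel BMO-stability bounds for stochastic Riccati equations announced in the introduction, both for $c$ itself and for the discounted signal $\bar{\xi}$ of~\eqref{eq:signal}, with constants depending in a controlled way on $R$ and the ambient data. Once these estimates are in hand, calibrating $R<R_{\max}$ and the admissible $|\gamma^1-\gamma^2|$ so that the self-map and contraction inequalities close simultaneously becomes a matter of bookkeeping.
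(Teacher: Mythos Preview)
Your high-level strategy coincides with the paper's: reduce the fully coupled system to a one-dimensional Picard iteration on the price/volatility pair alone, solving the $(\varphi^1,\dot\varphi^1)$ block explicitly at each step via the stochastic Riccati machinery of Lemmas~\ref{lem:BRSDE}--\ref{lem:kt02}, and then invoke the novel BMO-stability estimates for $c$, $\bar\xi$, and $\varphi$ (Section~\ref{sec:stab}) to close the contraction. You also correctly identify the main obstacle (stability of the BSRDE in $\sigma$) and why the prefactor $\tfrac{\gamma^1-\gamma^2}{2}$ is what ultimately makes the loop close.

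Where your iteration differs from the paper's is in how the price BSDE is handled at each step. The paper works with the shifted process $Y=S-\bar S$ and uses a \emph{linearised} (Tevzadze-type) iteration: at step $n$, the BSDE for $(Y^n,Z^n)$ carries the \emph{previous} $Z^{n-1}$ in its drift, so under the measure $\mathbb{P}^\alpha=\mathbb{Q}^\beta$ it is a genuinely linear BSDE whose solution is read off by taking conditional expectations. As a consequence, the paper's contraction constant picks up two contributions, one proportional to $|\gamma^1-\gamma^2|$ (from the $\varphi$-channel) and one equal to $4\kappa R$ (from the frozen quadratic term $\kappa(Z^{n-1})^2$), which is why the paper needs $R<1/(8\kappa)$ in addition to $|\gamma^1-\gamma^2|$ small. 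By contrast, you propose to solve the \emph{full} quadratic BSDE at each step with $\varphi^1$ frozen, so that your contraction constant comes only through the $|\gamma^1-\gamma^2|$-channel; the price is that each step is less explicit and you need BMO-stability of quadratic BSDEs in their (random, bounded) quadratic coefficients rather than simple a priori bounds for a linear equation.

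One point to tighten: your appeal to ``arguing as in Proposition~\ref{prop:BSDE} and Corollary~\ref{cor:BSDE}'' is not quite right, because those results rely on the explicit Laplace formula~\eqref{eq:laplace}, which only works for a \emph{constant} quadratic coefficient. With the random coefficient $\tfrac{\gamma^2 s+(\gamma^1-\gamma^2)\varphi^1_t}{2}$ the exponential change of variables no longer linearises the equation. You should instead invoke standard existence/uniqueness for one-dimensional quadratic BSDEs with bounded terminal data and bounded quadratic coefficient (Kobylanski-type results), together with the corresponding BMO-stability estimates in the generator. Also note that the frictional and frictionless drifts differ not only by $\tfrac{\gamma^1-\gamma^2}{2}\varphi^1\sigma^2$ but also in the constant part of the quadratic coefficient ($\tfrac{\gamma^2 s}{2}$ versus $\bar\gamma s$) and in the linear term ($\tfrac{\gamma^1\beta^1+\gamma^2\beta^2}{2}$ versus $\bar\gamma(\beta^1+\beta^2)$); both extra discrepancies are $O(|\gamma^1-\gamma^2|)$, so your invariance argument still goes through, but they should be accounted for.
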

	
Theorem~\ref{thm:ex3} is a special case of our more general well-posedness result Theorem \ref{thm:ex} and applies, for example, if the endowment volatilities $\beta^1, \beta^2$ and the terminal condition $\mathfrak{S}$ are uniformly bounded. More generally, the BMO assumptions from Assumption~\ref{assump:friction} guarantee that the equilibrium positions $\varphi^1$ and trading rates $\dot{\varphi}^1$ are uniformly bounded, which is crucial for the Picard iteration we use to prove Theorem~\ref{thm:ex3}. However, Assumption~\ref{assump:friction} does not cover specifications where the primitives $\beta^1,\beta^2$ follow certain unbounded diffusion processes such as Brownian motion. As a complement to Theorem~\ref{thm:ex3}, we therefore discuss such a concrete example in Section~\ref{s:linear} and show that the FBSDE system $(\ref{eq:eqfwd}$--$\ref{eq:BSDES})$ can be reduced to a system of deterministic but coupled Riccati equations in this case. For sufficiently similar risk aversions $\gamma^1$ and $\gamma^2$, existence of these Riccati ODEs can in turn be established by adapting the Picard iteration used to prove Theorem~\ref{thm:ex3}.

\section{An Example with Linear State Dynamics}\label{s:linear}

\subsection{Primitives and Frictionless Benchmark}

To study the impact of transaction costs an equilibrium asset prices and trading volume, we now consider a concrete example with linear state dynamics. Similarly as in~\cite{lo.al.04}, we assume that the aggregate endowment is zero and both agents' endowment volatilities follow Brownian motions:
\[
\beta^1_t= -\beta^2_t= \beta W_t, \quad \beta>0.
\]
The terminal condition for the risky asset also is a linear function of the underlying Brownian motion:
\[
\mathfrak{S}=bT+aW_T, \quad a>0 , \quad b \in \mathbb{R}.
\]
Then, the frictionless equilibrium price from Proposition~\ref{prop:BSDE} is a Bachelier model with constant expected returns and volatility:
\[
\bar{S}_t = (b-\bar{\gamma}s a^2)T+\bar{\gamma}s a^2t+aW_t, \quad t \in [0,T].
\]

\subsection{Reduction to Riccati System}

In this Markovian setting, the FBSDE system $(\ref{eq:eqfwd}$--$\ref{eq:BSDES})$ can be reformulated as a PDE. Indeed, make the standard Markovian ansatz that the backward components are smooth functions of time $t$ and the forward components $W_t$ and $\varphi^1_t$ and set $S_t = \bar S_t +  f(t,  W_t, \varphi^1_t)$ and $\dot \varphi^1_t = g(t,  W_t, \varphi^1_t,)$. Applying It\^o's formula to $f$ and $g$ and comparing the drift terms in turn leads to the following semilinear PDE for $(f, g)$, where the arguments $(t,x,y)$ are omitted to ease notation:
\begin{align*}
f_t +\frac{1}{2} f_{xx} +f_y g &= \frac{\gamma^1-\gamma^2}{2}(a+f_x)^2 y +\frac{\gamma^2}{2}f_x^2 +f_x\left(\gamma^2 a+\frac{\gamma^1-\gamma^2}{2}\beta x\right)-\frac{\gamma^1-\gamma^2}{2}a^2\left(\frac{\gamma^2s}{\gamma^1+\gamma^2}-\frac{\beta}{a}x\right),\\
g_t +\frac{1}{2} g_{xx}+ g_y g &=\frac{\gamma^1+\gamma^2}{2\lambda} (a+f_x)\beta x -\frac{\gamma^2s}{2\lambda}(a+f_x)^2+\frac{\gamma^1+\gamma^2}{2\lambda}(a+f_x)^2y, 
\end{align*}
on $[0,T) \times \mathbb{R}^2$, with terminal conditions $f(T,x,y)= g(T,x,y)=0$. 

\medskip
For the linear state dynamics and terminal conditions considered here, these PDEs can be reduced to a system of Riccati ODEs. To this end, make the linear ansatz $f(t, x, y) = A(t) + B(t) x + C(t) y$ and $g(t, x, y) = D(t) + E(t) x + F(t) y$. Plugging this into the PDEs and comparing coefficients for terms proportional to $1$, $x$, and $y$ then leads to a system of coupled Riccati equations. (An analogous ansatz is also used to link equilibria to systems of nonlinear equations in \cite{sannikov.skrzypacz.16,isaenko.20}, for example.) If these have a solution (e.g., under the conditions of Theorem~\ref{thm:odeex} below), then it identifies an equilibrium with transaction costs:

\begin{proposition}\label{prop:odes}
Suppose the following system of coupled Riccati equations has a solution on $[0,T]$:
\begin{alignat*}{2}
B'(t) &= \frac{\gamma^1-\gamma^2}{2} \beta (a+B(t))-C(t)E(t),\qquad  && B(T)=0,\\
C'(t) &= \frac{\gamma^1-\gamma^2}{2}(a+B(t))^2-C(t)F(t), && C(T)=0,\\
 E'(t) &= \frac{\gamma^1+\gamma^2}{2\lambda}\beta (a+B(t))-E(t)F(t), &&  E(T)=0,\\
 F'(t) &= \frac{\gamma^1+\gamma^2}{2\lambda}(a+B(t))^2-F(t)^2, && F(T)=0,
\end{alignat*}
and define, for $t \in [0,T]$,
\begin{align*}
A(t)&=\int_t^T \bigg(C(u)D(u)+\bar{\gamma}s a^2-\frac{\gamma^2 s}{2}(a+B(u))^2\bigg)\mathrm{d}u,\\
D(t) &= \int_t^T \bigg(\mathrm{e}^{\int_t^u F(r)\mathrm{d}r}\frac{\gamma^2 s}{2\lambda}(a+B(u))^2\bigg) \mathrm{d}u.
\end{align*}
Then, an equilibrium price with transaction costs and the corresponding optimal trading rates are given by
\begin{align*}
S_t=\bar{S}_t+A(t)+B(t)W_t+C(t)\varphi^1_t,\qquad \dot{\varphi}^1_t = -\dot{\varphi}^2_t= D(t)+E(t)W_t+F(t)\varphi^1_t, \qquad t \in [0,T],
\end{align*}
where
\[
\varphi^1_t = \mathrm{e}^{\int_0^t F(r)\mathrm{d}r} x^1 + \int_0^t \mathrm{e}^{\int_u^t F(r)\mathrm{d}r}(D(u)+E(u)W_u)\mathrm{d}u, \quad t \in [0,T].
\]
\end{proposition}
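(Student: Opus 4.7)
The proof would be a verification argument: given that the Riccati system admits a solution $(B,C,E,F)$ on $[0,T]$, I would show that the candidates $(S,\varphi^1,\dot{\varphi}^1)$ built from the linear ansatz, together with $\sigma_t := a + B(t)$, solve the FBSDE system $(\ref{eq:eqfwd})$--$(\ref{eq:BSDES})$, and then invoke Proposition~\ref{prop:eq}. The structure of the proof is forced by the way the ansatz was derived: once one plugs the linear expressions into each equation and compares coefficients of $1$, $W_t$, and $\varphi^1_t$, the Riccati system emerges and there is essentially nothing else to do.

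Concretely, I would first apply It\^o's formula to $S_t = \bar S_t + A(t) + B(t)W_t + C(t)\varphi^1_t$. Since $\bar S$ has drift $\bar\gamma s a^2$ and volatility $a$, the coefficients $A,B,C$ are deterministic and $C^1$, and $\varphi^1$ is absolutely continuous with derivative $\dot{\varphi}^1_t = D(t) + E(t)W_t + F(t)\varphi^1_t$, this reduces to the product rule and yields diffusion $\sigma_t = a + B(t)$ and drift
\[
\bar\gamma s a^2 + A'(t) + B'(t)W_t + C'(t)\varphi^1_t + C(t)\bigl[D(t) + E(t)W_t + F(t)\varphi^1_t\bigr].
\]
Matching this against the drift in \eqref{eq:BSDES}---using that here $\beta^1_t + \beta^2_t = 0$ while $\gamma^1\beta^1_t + \gamma^2\beta^2_t = (\gamma^1 - \gamma^2)\beta W_t$---and collecting the coefficients of the three linearly independent random functions $1$, $W_t$, and $\varphi^1_t$ separately produces three scalar equations: the identity for $A'$ (which together with $A(T) = 0$ gives the integral representation for $A$), the ODE for $B$, and the ODE for $C$. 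Applying the same treatment to $\dot{\varphi}^1_t$ yields diffusion $\dot Z^1_t = E(t)$ and a drift whose three coefficients, matched against the right-hand side of \eqref{eq:eqbwd}, produce the linear first-order equation for $D$ (whose variation-of-constants solution is the integral formula in the statement), the equation for $E$, and the Riccati equation for $F$. The terminal conditions $S_T = \mathfrak{S}$ and $\dot{\varphi}^1_T = 0$ then follow from $A(T)=B(T)=C(T)=D(T)=E(T)=F(T)=0$, and $\varphi^1_0 = x^1$ is built into the explicit formula.

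To close the argument via Proposition~\ref{prop:eq}, I would verify $\sigma \in \mathbb{H}^2_{\rm BMO}$ and $\dot{\varphi}^1 \in \mathbb{H}^2$. Since $B$ is continuous on the compact interval $[0,T]$, $\sigma_t = a + B(t)$ is uniformly bounded, hence trivially in $\mathbb{H}^2_{\rm BMO}$. Likewise $D,E,F$ are bounded, and the explicit representation of $\varphi^1$ controls it by a deterministic multiple of $1 + \sup_{s \le T}|W_s|$, which lies in $\mathcal S^p$ for every finite $p$; thus $\dot{\varphi}^1 \in \mathbb{H}^2$. Proposition~\ref{prop:eq} then delivers the equilibrium. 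I do not expect a genuine obstacle here: the Riccati system was constructed precisely by matching coefficients in the linear ansatz, so the verification is essentially bookkeeping, with the only mild care being the identification of the diffusion of $\dot{\varphi}^1$ (which forces $\dot Z^1 = E$) and the consistent handling of the special structure $\beta^1 = -\beta^2$ when simplifying the drifts in \eqref{eq:eqbwd} and \eqref{eq:BSDES}.
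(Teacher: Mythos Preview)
Your proposal is correct and follows essentially the same approach as the paper: both arguments verify that the linear ansatz solves the FBSDE system $(\ref{eq:eqfwd})$--$(\ref{eq:BSDES})$ by applying It\^o's formula and using the Riccati ODEs, then check the integrability conditions needed for Proposition~\ref{prop:eq}. The only cosmetic difference is that the paper routes the computation through the auxiliary PDE for $(f,g)$ and the difference process $Y=S-\bar S$, whereas you apply It\^o directly to $S_t$ and $\dot\varphi^1_t$; the algebraic content is identical.
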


\subsection{Existence and Approximations for Similar Risk Aversions}

Similarly as in Theorem~\ref{thm:ex3}, a solution of the ODE system is guaranteed to exist, provided the agents' risk aversions are sufficiently similar.

\begin{theorem}\label{thm:odeex}
Suppose that 
\begin{equation}
\label{eq:thm:odeex}
|\gamma^1-\gamma^2| < \min\left(\frac{16 \lambda}{27 a^2 T^3 (\gamma^1 + \gamma^2) + 48 T \beta \lambda }, \frac{32 \lambda^2}{81 a^4 T^5 (\gamma^1 + \gamma^2)^2 + 72 a^2 T^3\beta (\gamma^1 + \gamma^2) \lambda + 32 T \beta \lambda^2}\right).
\end{equation}
Then, the system of Riccati equations from Proposition~\ref{prop:odes} has a $($unique$)$ solution on $[0,T]$.
\end{theorem}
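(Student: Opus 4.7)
When $\gamma^1 = \gamma^2$, both the forcing and the terminal data in the ODEs for $B$ and $C$ vanish, so $B \equiv C \equiv 0$. The remaining two equations then decouple into a scalar Riccati for $F$ with constant positive forcing (whose explicit negative solution is a hyperbolic tangent), followed by a scalar linear ODE for $E$. This motivates treating $\delta := \tfrac{1}{2}(\gamma^1-\gamma^2)$ as a smallness parameter and setting up a Picard iteration around this frictionless reference, following the general strategy outlined in the introduction: instead of working with the full multidimensional system, iterate on a single component.

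The plan is to define a map $\Phi : C([0,T]) \to C([0,T])$ on $B$ alone. Given $B$ with $\|B\|_\infty \leq R$, successively solve three subproblems. First, the scalar Riccati $F' = \tfrac{\gamma^1+\gamma^2}{2\lambda}(a+B)^2 - F^2$, $F(T)=0$, which admits a global negative solution because $F^2 \geq 0$ forces the a priori bound $|F(t)| \leq \tfrac{\gamma^1+\gamma^2}{2\lambda}(a+R)^2 T$. Second, the linear ODE $E' = \tfrac{\gamma^1+\gamma^2}{2\lambda}\beta(a+B) - EF$, $E(T)=0$, solved by an integrating factor that exploits $F \leq 0$ and yields $|E(t)| \leq \tfrac{\gamma^1+\gamma^2}{2\lambda}\beta(a+R)T$. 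Third, the linear ODE $C' = \delta(a+B)^2 - CF$, $C(T)=0$, with the analogous bound $|C(t)| \leq |\delta|(a+R)^2 T$. Finally define
\[
\Phi(B)(t) := \int_t^T \bigl(C(s)E(s) - \delta\beta(a+B(s))\bigr)\,\mathrm{d}s;
\]
any fixed point of $\Phi$ yields a solution of the full Riccati system by the above construction.

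The remainder of the proof consists of showing that $\Phi$ is a contractive self-map on the closed ball $\mathcal{B}_R := \{B \in C([0,T]) : \|B\|_\infty \leq R\}$ for a suitably chosen radius. Both $\|\Phi(B)\|_\infty$ and the Lipschitz constant of $\Phi$ on $\mathcal{B}_R$ are dominated by $|\delta|$ times a polynomial in $R$ whose coefficients depend only on $a, \beta, T, \lambda$ and $\gamma^1+\gamma^2$; this scaling is why an upper bound on $|\gamma^1-\gamma^2|$ suffices. Optimising the radius $R$ so as to balance the contributions of the $(a+R)$ and $(a+R)^3$ terms in these estimates (which accounts for the numerical factors $27$ and $81$ in~\eqref{eq:thm:odeex}) turns the self-map condition into the first bound and the contraction condition into the second. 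Banach's fixed-point theorem then produces a unique fixed point $B^\star \in \mathcal{B}_R$, and reading off the associated $F^\star, E^\star, C^\star$ gives the desired solution on $[0,T]$.

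The main technical obstacle is establishing sup-norm Lipschitz dependence of the nonlinear Riccati component $F_B$ on $B$: the difference $F_{B_1} - F_{B_2}$ satisfies a linear ODE whose integrating factor involves $F_{B_1} + F_{B_2}$, so one must first invoke the a priori $L^\infty$-bound from step one to avoid Gronwall blow-up before propagating the Lipschitz estimate through the linear equations for $E$ and $C$. Once this estimate is in place, assembling the numerical inequalities in~\eqref{eq:thm:odeex} is a matter of careful bookkeeping of the integrating-factor representations.
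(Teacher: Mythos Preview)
Your proposal is correct and follows essentially the same route as the paper: iterate on $B$ alone (the paper works with $\tilde B := a+B$), construct $(F,E,C)$ from each iterate via the sub-system, exploit $F\le 0$ for a priori bounds, derive the Lipschitz dependence of $F$ on $B$ from the linear ODE satisfied by the difference, and close with Banach's fixed-point theorem. The only cosmetic difference is that the paper does not optimise the radius but simply fixes $\|\tilde B\|_\infty \le \tfrac{3}{2}a$ (i.e.\ $\|B\|_\infty \le a/2$), which is what produces the powers of $3$ in the constants of~\eqref{eq:thm:odeex}.
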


The Riccati equations from Proposition~\ref{prop:odes} can readily be solved numerically with standard ODE solvers. To shed some light on their comparative statics, it is nevertheless instructive to consider the asymptotics as the difference 
\[\varepsilon=\gamma^1-\gamma^2,\]
of the agents' risk aversions tends to zero. For $\varepsilon=0$, we evidently have $C(t;0)=0$, which in turn gives $B(t;0)= 0$ and $A(t;0)=0$. Next, we use the following simple facts:
\begin{itemize}
\item for $\alpha >  0$, the Riccati ODE $H'(t) = \alpha^2 - H^2(t), H(T) =0$, has the solution $H(t) = -\alpha \tanh(\alpha (T-t))$; 
\item for $H$ as above, and $\kappa\in\mathbb R$, the Riccati ODE $J'(t) = \kappa \alpha^2 - J(t) H(t), J(T) = 0$, has the solution $J(t) = \kappa H(t)$;
\item for $H$ as above, $\int_t^T \mathrm{e}^{\int_t^s H(r)\mathrm{d}r}\mathrm{d} s = -\frac{1}{\alpha^2} H(t)$, $\int_t^T H(s)^2 \mathrm{d} s  = \alpha^2(T - t) + H(t)$ as well as $\int_t^T H(s) \mathrm{d} s = \log (\cosh(\alpha (T-t)))$.
\end{itemize}
Setting 
$$\delta:=\sqrt{\frac{(\gamma^1+\gamma^2) a^2}{2\lambda}},$$
this first yields
\begin{gather*}
F(t;0) =-\delta \tanh\big(\delta (T-t)\big), \qquad E(t;0) =\frac{\beta}{a} F(t;0), \qquad D(t;0) = -\frac{\gamma^2 s}{\gamma^1 + \gamma^2} F(t;0).
\end{gather*}
With these limiting functions for $\varepsilon \to 0$ at hand, it is straightforward to also derive the corresponding first--order asymptotics of $C(t;\varepsilon)$, $B(t;\varepsilon)$, and $A(t;\varepsilon)$,
\begin{align}
C(t;\varepsilon) &= -\frac{\varepsilon a^2}{2} \int_t^T \mathrm{e}^{\int_t^s F(r;0)\mathrm{d}r}\mathrm{d}s +o(\varepsilon)= \frac{\varepsilon a^2}{2 \delta^2} F(t;0) +o(\varepsilon), \notag \\
B(t;\varepsilon) &= \int_t^T \left(-\frac{\varepsilon \beta a}{2}+C(s;\varepsilon)E(s;0)\right)\mathrm{d}s +o(\varepsilon) = -\frac{\varepsilon \beta a}{2} (T -t ) + \frac{\varepsilon \beta a}{2 \delta^2}\int_t^T  F^2(s;0) \mathrm{d}s +o(\varepsilon) \notag \\
&= \frac{\varepsilon \beta a}{2 \delta^2}F(t;0) +o(\varepsilon), \label{eq:funcB} \\
A(t;\varepsilon) &= \int_t^T\bigg( C(s;\varepsilon)D(s;0)+\frac{\varepsilon \gamma^2 s a^2}{2(\gamma^1+\gamma^2)}-\gamma^2 s a B(s,\varepsilon)\bigg)\mathrm{d}s+o(\varepsilon) \notag \\
&=- \frac{\varepsilon \gamma^2 s a^2}{2 (\gamma^1 + \gamma^2)\delta^2}  \int_t^T F^2(s;0) \mathrm{d}s + \frac{\varepsilon \gamma^2 s a^2}{2(\gamma^1+\gamma^2)} (T - t)  - \frac{\varepsilon \beta \gamma^2 s a^2}{2 \delta^2} \int_t^T F(s; 0) \mathrm{d}s +o(\varepsilon) \notag \\
&=- \frac{\varepsilon \gamma^2 s a^2}{2 (\gamma^1 + \gamma^2)\delta^2}  F(t;0) + \frac{\varepsilon \gamma^2 s \beta \lambda}{\gamma^1+\gamma^2} \log \Big(\cosh\big(\delta(T-t)\big)\Big) +o(\varepsilon). 
\label{eq:funcA}
\end{align}

\subsection{Trading Volume}

The above expansions show that, as $\varepsilon \to 0$, the equilibrium trading rate $\dot{\varphi}^1$ from Proposition~\ref{prop:odes} converges to
\begin{equation}\label{eq:speed}
\dot{\varphi}^1_t = D(t; 0) + E(t;0) W_t + F(t; 0) \varphi^1_t = F(t; 0) \times (\varphi^1_t - \bar{\varphi}^1_t),
\end{equation}
where $\bar{\varphi}^1_t:=\frac{\gamma^2s}{\gamma^1+\gamma^2}-\frac{\beta}{a}W_t$. Whence, at the leading order for small $\varepsilon$, the equilibrium position of agent $1$ tracks its frictionless counterpart $\bar{\varphi}^1_t$ with the relative trading speed $-F(t;0)$. Accordingly, for small $\varepsilon$, the corresponding deviation $\varphi^1_t-\bar{\varphi}^1_t$ approximately has Ornstein--Uhlenbeck dynamics,
\begin{equation}\label{eq:OUdyn}
\mathrm{d}\big(\varphi^1_t-\bar{\varphi}^1_t\big) \approx \big(F(t;0)(\varphi^1_t-\bar{\varphi}^1_t)\big)\mathrm{d}t +\frac{\beta}{a}\mathrm{d}W_t.
\end{equation}
In view of~\eqref{eq:speed}, the corresponding trading volume has Ornstein--Uhlenbeck dynamics as well, until trading slows down and eventually stops near the terminal time $T$. As in the partial equilibrium model of \cite{guasoni.weber.15}, trading volume in the model therefore reproduces the main stylized features observed empirically such as mean-reversion and autocorrelation~\cite{lo.wang.00}.

\subsection{Illiquidity Discounts, Liquidity Premia, and Increased Volatility}

Let us now turn to the corresponding equilibrium price of the risky asset. It's initial level $S_0$ is adjusted by $A(0;\varepsilon)+C(0;\varepsilon)x^1$ compared to the frictionless case. Here, $C(0;\varepsilon)x^1$ quickly converges to a stationary value as the time horizon $T$ grows. In contrast, $A(0;\varepsilon)$ approximately grows linearly (via the second term in \eqref{eq:funcA}) and therefore dominates for long time horizons. Hence, the “illiquidity discount” is given by
\begin{equation}\label{eq:discount}
-(A(0;\varepsilon)+C(0;\varepsilon)x^1) = -\frac{(\gamma^1-\gamma^2)\gamma^2 s}{\sqrt{2(\gamma^1+\gamma^2)}} T \beta a \sqrt{\lambda}+O(1) = \gamma^2 s a T B(0; \varepsilon) + O(1), \quad \mbox{as $T \to \infty$}.
\end{equation}
Therefore, as in the overlapping generations model of \cite{vayanos.98}, the stock price in our model can be either increased or decreased due to transaction costs. The sign of this correction term is determined by the difference $\gamma^1-\gamma^2$ of the agents' risk aversions. If we choose $\gamma^2>\gamma^1$ to match the positive illiquidity discounts observed empirically~\cite{amihud.mendelson.86a}, then the discount~\eqref{eq:discount} is concave in the transaction cost consistent with the empirical findings of~\cite{amihud.mendelson.86a}. Note also that up to a scaling factor $\gamma^2 s a T$, the latter coincides with the volatility correction $B(t)$ for small $|\gamma^1-\gamma^2|$. 

Next, let us turn to the drift rate of the risky asset.  Using integration by parts, the ODEs satisfied by $A$, $B$, and $C$, and the asymptotics~\eqref{eq:funcB} for the function $B(t)$, we obtain that  
the difference to its frictionless counterpart is given by
\begin{align*}
&A'(t)+B'(t)W_t +C'(t)\varphi^1_t+C(t)\dot{\varphi}^1_t\\
&\qquad= (A'(t)+C(t)D(t))+(B'(t)+C(t)E(t))W_t+(C'(t)+C(t)F(t))\varphi^1_t\\
&\qquad=\bigg(-\bar{\gamma}s a^2+\frac{\gamma^2 s}{2}(a+B(t))^2\bigg)+\bigg(\frac{\gamma^1-\gamma^2}{2}\beta(a+B(t))\bigg)W_t+\bigg(\frac{\gamma^1-\gamma^2}{2}(a+B(t))^2\bigg)\varphi^1_t\\
&\qquad = \frac{(\gamma^1-\gamma^2)\gamma^2 s  }{2(\gamma^1 + \gamma^2)} a^2  + \gamma^2 s a B(t)+\frac{\gamma^1-\gamma^2}{2} a \beta W_t+\frac{\gamma^1-\gamma^2}{2} a^2\varphi^1_t + o(|\gamma^1-\gamma^2|) \\
&\qquad=\frac{\gamma^1-\gamma^2}{2}a^2 (\varphi^1_t-\bar{\varphi}^1_t)+\gamma^2 s a B(t)+o(|\gamma^1-\gamma^2|).
\end{align*}
We see that the ``liquidity premium'' compared to the frictionless case consists of two parts. The first is a rescaling of the Ornstein--Uhlenbeck process~\eqref{eq:OUdyn}: like in \cite{sannikov.skrzypacz.16,bouchard.al.17}, transaction costs endogenously lead to a mean-reverting ``momentum factor'' as in the reduced form models of \cite{kim.omberg.96,bouchaud.al.12,martin.12,garleanu.pedersen.13}. However, unlike in \cite{sannikov.skrzypacz.16,bouchard.al.17} where the difference between frictional and frictionless expected returns fluctuates around zero, an additional deterministic component appears here. Up to rescaling with the factor $\gamma^2 s a$, it coincides with the volatility correction $B(t)$ for small $|\gamma^1-\gamma^2|$. 

As a consequence, the illiquidity discount of the initial price $S_0$, the average liquidity premium in the expected returns, and the volatility correction all have the same sign in our model, which is determined by the difference $\gamma^1-\gamma^2$ of the agents' risk aversion coefficients. The empirical literature consistently finds positive illiquidity discounts~\cite{amihud.mendelson.86a} and liquidity premia~\cite{amihud.mendelson.86a,brennan.subrahmanyam.96,pastor.stambaugh.03}. If we choose $\gamma^2>\gamma^1$ to reproduce this in our model, then it follows that the corresponding volatility correction due to transaction costs is also positive. This theoretical result that illiquidity should lead to higher volatilities corroborates empirical results of \cite{umlauf.93,jones.seguin.97,hau.06}, numerical findings of \cite{adam.al.15,buss.al.16}, and the predictions of a risk-neutral model with asymmetric information studied in~\cite{danilova.julliard.19}.

To understand the intuition behind this result in our model, recall that $\beta>0$, so that for a positive price shock, the cash value of the frictionless trading target from \eqref{eq:indopt} decreases for agent 1 and increases for  agent~2. Accordingly, agent 2 can be interpreted as a ``trend follower'', whereas agent 1 follows a ``contrarian'' strategy. With transaction costs, if $\gamma^2>\gamma^1$, the trend follower's buying motive after a positive price shock is stronger than the contrarian's motive to sell. To clear the market, the expected return of the risky asset therefore has to decrease compared to the frictionless benchmark to make selling more attractive for agent 1. Accordingly, positive price shocks are dampened and an analogous argument shows that the same effect persists for negative price shocks. Since price shocks are dampened, the equilibrium volatility therefore has to increase in order to match the fixed terminal condition.

\section{Proofs for Section~\ref{s:indopt}}\label{s:proofs}

This section contains the proofs of the results on Riccati BSRDEs and FBSDEs from Section~\ref{s:indopt}. First, we prove Lemma~\ref{lem:BRSDE}, which ensures existence and uniqueness of suitably integrable solutions of the BSRDE~\eqref{eq:BRSDE} for volatility processes $\sigma \in \mathbb{H}^2_{\mathrm{BMO}}$.

\begin{proof}[Proof of Lemma~\ref{lem:BRSDE}]
For each $n \in \mathbb{N}$, consider the truncated process $\sigma^n:=\sigma \wedge n$. Since this process is uniformly bounded, the truncated BSRDE
\begin{equation}\label{eq:n}
c^n_t =\int_t^T \left(\frac{\gamma}{\lambda}(\sigma^n_s)^2-(c^n_s)^2\right)\mathrm{d}s-\int_t^TZ^n_s\mathrm{d}W_s, \quad t\in[0,T],
\end{equation}
has a unique solution $(c^n,Z^n) \in \mathcal S^{\infty} \times \mathbb{H}^2$ for each $n$ by~\cite[Theorem~2.1]{kohlmann.tang.02}. Indeed, in their notation, our case corresponds to
\[A=C=D=0,\quad N=B=1,\quad Q(t)=\frac{\gamma}{\lambda} (\sigma_t^n)^2,\quad M=0.\]
Since $N$ is positive and uniformly bounded away from $0$, $M$ is bounded and nonnegative, and $Q$ is bounded and nonnegative, \cite[Theorem~2.1]{kohlmann.tang.02} indeed does apply.

Then, by taking conditional expectations, we see that all of these solutions are uniformly bounded from above, since
\[
c^n_t = \mathbb E_t\bigg[\int_t^T\left( \frac{\gamma}{\lambda} (\sigma^n_s)^2  - (c^n_s)^2 \right)\mathrm{d}s\bigg] \leq \frac{\gamma}{\lambda} \|\sigma\|_{\mathbb{H}^2_{\mathrm{BMO}}}^2.
\]
By the comparison theorem for Lipschitz BSDEs~\cite[Theorem~9.4]{touzi.13}, $c^n_t\geq 0$ for any $t\in[0,T]$, since $(0,0)$ is the unique solution of the BSDE with terminal condition $0$ and generator $-y^2$. Whence, the solutions of the truncated equations satisfy 
\begin{equation}\label{eq:Linfty}
0 \leq c_t^n \leq \frac{\gamma}{\lambda} \|\sigma\|_{\mathbb{H}^2_{\mathrm{BMO}}}^2, \quad t \in [0,T].
\end{equation}
Also record for future reference that the corresponding martingale parts are given by
\begin{equation}\label{eq:Mn}
M^n_t = \int_0^t Z^n_s \mathrm{d}W_s = -c^n_0+c^n_t +\int_0^t \left(\frac{\gamma}{\lambda} (\sigma^n_s)^2  - (c^n_s)^2\right) \mathrm{d}s.
\end{equation}
The family $(\sup_{t \in [0, T]} |M^n_t|)_{n \in \mathbb{N}}$ is bounded in $\mathbb{L}^2$. Indeed, as each $c^n$ satisfies~\eqref{eq:Linftybis}, we obtain
\[ \sup_{t \in [0, T]} |M^n_t| \leq 2 \frac{\gamma}{\lambda}\|\sigma\|_{\mathbb{H}^2_{\mathrm{BMO}}}^2 + \frac{\gamma}{\lambda} \int_0^T \sigma_s^2\mathrm{d}s + T\frac{\gamma^2}{\lambda^2}\|\sigma\|_{\mathbb{H}^2_{\mathrm{BMO}}}^4, \quad n \in \mathbb{N}.\]
Now use the elementary inequality $(a + b + c)^2 \leq 3 (a^2 + b^2 + c^2)$ and the energy inequality for BMO martingales~\cite[p.26]{kazamaki.94} to obtain the desired bound:
\begin{equation}
\label{eq:Mn:L2bound}
\mathbb E\bigg[\Big(\sup_{t \in [0, T]}|M^n_t|\Big)^2\bigg] \leq 3 \frac{\gamma^2}{\lambda^2}\|\sigma\|_{\mathbb{H}^2_{\mathrm{BMO}}}^4 + 6 \frac{\gamma^2}{\lambda^2} \|\sigma\|_{\mathbb{H}^2_{\mathrm{BMO}}}^4 + 3 T^2 \frac{\gamma^4}{\lambda^4} \|\sigma\|_{\mathbb{H}^2_{\mathrm{BMO}}}^8, \quad n \in \mathbb{N}.
\end{equation}
Next, note that since the solutions of \eqref{eq:n} are uniformly bounded by \eqref{eq:Linfty} for all $n$, the pair $(c^n, Z^n)$ also solves the BRSDE
\[c^n_t =\int_t^T \left(\frac{\gamma}{\lambda}(\sigma^n_s)^2-\left(\left(c^n_s\right)^+\wedge\frac{\gamma}{\lambda} \|\sigma\|_{\mathbb{H}^2_{\mathrm{BMO}}}^2\right)^2\right)\mathrm{d}s-\int_t^TZ^n_s\mathrm{d}W_s.\]
Since the generator of this BRSDE is uniformly Lipschitz continuous, and its value at $0$ is bounded, the standard comparison theorem for Lipschitz BSDEs (see, e.g.,  \cite[Theorem~9.4]{touzi.13}) shows that the solutions $c^n$ are nondecreasing in $n$.

Therefore, the monotone limit $c = \lim_{n \to \infty} c^n$ is well defined, and satisfies $c_T=0$ and \eqref{eq:Linftybis} by construction. Now set
\[
M_t:=-c_0+c_t +\int_0^t \left(\frac{\gamma}{\lambda}\sigma_s^2  - c_s^2\right) \mathrm{d}s,\quad t\in[0,T].
\]
Recalling that both $\sigma^n$ and $c^n$ are nonnegative and nondecreasing in $n$, the monotone convergence theorem gives
\[\lim_{n\rightarrow \infty}\int_0^t \frac{\gamma}{\lambda}\big(\sigma_s^n\big)^2 \,\mathrm{d}s=\int_0^t \frac{\gamma}{\lambda}\sigma_s^2\,\mathrm{d}s,\quad \lim_{n\rightarrow \infty} \int_0^t \big(c_s^n\big)^2\,\mathrm{d}s=\int_0^t c_s^2\,\mathrm{d}s.\]
Therefore, $M$ is the pointwise limit of $M^n$. Since the family $(\sup_{t \in [0, t]} |M^n_t|)_{n \in \mathbb{N}}$ is bounded in $\mathbb{L}^2$, $M^n_t$ therefore converges to $M_t$ in $\mathbb{L}^1$ for each $t \in [0, T]$. Hence, it follows that $M$ is a square-integrable martingale, and the martingale representation theorem shows that $M=\int_0^\cdot Z_t \mathrm{d}W_t$ for a process $Z \in \mathbb{H}^2$. 

In summary, recalling that $c_T=0$, we have
\[
\int_t^T Z_s \,\mathrm{d}W_s = M_T-M_t = -c_t + \int_t^T \left(\frac{\gamma}{\lambda} \sigma_s^2  - c_s^2\right) \mathrm{d}s,
\]
that is, $(c,Z) \in \mathcal{S}^\infty \times \mathbb{H}^2$ solves the original BSDE. Moreover, by It\^o isometry and the conditional version of the argument used in \eqref{eq:Mn:L2bound}, it follows that for any $\tau \in \mathcal{T}_{0, T}$,
\begin{align*}
\mathbb E_\tau\bigg[\int_{\tau}^T Z_s^2 \mathrm{d}s\bigg] &= \mathbb E\bigg[\left(\int_{\tau}^T Z_s \mathrm{d} W_s\right)^2\bigg] \leq \mathbb E_\tau\bigg[\sup_{t \in [\tau, T]}\left|\int_{\tau}^t Z_s \mathrm{d} W_s\right|^2\bigg] \notag \\
&\leq 3 \frac{\gamma^2}{\lambda^2}\|\sigma\|_{\mathbb{H}^4_{\mathrm{BMO}}}^2 + 6 \frac{\gamma^2}{\lambda^2} \|\sigma\|_{\mathbb{H}^2_{\mathrm{BMO}}}^4 + 3 T^2 \frac{\gamma^4}{\lambda^4} \|\sigma\|_{\mathbb{H}^2_{\mathrm{BMO}}}^8.
\end{align*}
Thus, $Z$ is also in $\mathbb{H}^2_{\mathrm{BMO}}$. 

Finally, uniqueness follows from the following standard estimates. Let $(c^\prime,Z^\prime)\in\mathcal S^\infty\times\mathbb H^2_{\rm BMO}$ be another solution. Then
\begin{equation*}
c_t -c^\prime_t=\int_t^T \big(c^\prime_s+c_s\big)\big(c^\prime_s-c_s\big)\mathrm{d}s-\int_t^T\big(Z_s-Z^\prime_s)\mathrm{d}W_s.
\end{equation*}
The product rule then gives 
\begin{equation*}
\mathrm{e}^{\int_0^t(c^\prime_s+c_s)\mathrm{d}s}\big(c_t -c^\prime_t)=-\int_t^T\mathrm{e}^{\int_0^s(c^\prime_u+c_u)\mathrm{d}u}\big(Z_s-Z^\prime_s)\mathrm{d}W_s.
\end{equation*}
Since both $c$ and $c^\prime$ are bounded and both $Z$ and $Z^\prime$ are in $\mathbb H^2_{\rm BMO}$, the right-hand side is a $\mathbb{P}$-martingale. We conclude that $c=c^\prime$ by taking conditional expectations, which in turn implies that $Z=Z^\prime$ by uniqueness in the It\^o representation theorem.
\end{proof}

Next, we prove Lemma~\ref{lem:kt02}, which solves the FBSDE~(\ref{eq:indopt21}--\ref{eq:indopt22}) describing the optimiser of the quadratic tracking problem~\eqref{eq:pref3}.

\begin{proof}[Proof of Lemma~\ref{lem:kt02}]
First note that since $\sigma \in \mathbb{H}^2_{\mathrm{BMO}}$, Lemma~\ref{lem:BRSDE} shows that there is a unique solution $c$ of the BSRDE~\eqref{eq:BRSDE} which is nonnegative and bounded. Next, as $c$ is nonnegative, the (conditional version of the) Cauchy-Schwarz inequality, $\sigma \in \mathbb{H}^2_{\mathrm{BMO}}$ and Fubini's theorem give
\begin{align*}
\mathbb{E}\bigg[\int_0^T \bar \xi^2_t \mathrm{d}t \bigg] &\leq \frac{\gamma^2}{\lambda^2} \mathbb{E}\bigg[\int_0^T \mathbb{E}_t\bigg[\int_t^T \sigma_s (\sigma_s|\xi_s|) \mathrm{d}s \bigg]^2 \mathrm{d}t \bigg] \leq  \frac{\gamma^2}{\lambda^2} \mathbb{E}\bigg[\int_0^T \| \sigma \|^2_{\mathbb H^2_{\mathrm{BMO}}} \mathbb{E}_t \bigg[\int_t^T \sigma^2_s \xi_s^2 \mathrm{d}s \bigg] \mathrm{d}t \bigg] \\
&\leq \frac{\gamma^2}{\lambda^2} \| \sigma \|^2_{\mathbb H^2_{\mathrm{BMO}}} \int_0^T \mathbb{E} \bigg[\int_t^T \sigma^2_s \xi_s^2 ds \bigg]  \mathrm{d}t \leq \frac{\gamma^2}{\lambda^2} \| \sigma \|^2_{\mathbb H^2_{\mathrm{BMO}}} T \mathbb{E} \bigg[\int_0^T \sigma^2_s \xi_s^2 \mathrm{d}s \bigg].
\end{align*}
Together with $\sigma \xi \in \mathbb{H}^2$, this shows that $\bar{\xi}$ also belongs to $\mathbb{H}^2$. Notice now that $\bar{\xi}$ can also be directly characterised as the unique solution of the linear {\rm BSDE}
\begin{equation}\label{eq:dynsig}
\bar{\xi}_t = \int_t^T\Big(\frac{\gamma}{\lambda} \sigma^2_s\xi_s-c_s \bar{\xi}_s \Big)\mathrm{d}s-\int_t^TZ^{\xi}_s \mathrm{d}W_s, \quad t\in[0,T].
\end{equation}
Similarly, for $\gamma=\gamma^n$, $x=x^n$ and $\xi=\hat{\varphi}^n$, $(\varphi,\dot{\varphi})$ solves the FBSDE~$(\ref{eq:indopt21}$--$\ref{eq:indopt22})$ characterising the optimisers for~\eqref{eq:pref3}. Indeed, the forward equation~\eqref{eq:indopt21} is evidently satisfied by definition. The terminal condition $\dot{\varphi}_T=0$ follows from $c_T=0$, the fact that $\bar{\xi}\in\mathbb H^2$, and $\sigma^2\xi\in\mathbb H^1$. It therefore remains to show that $\dot\varphi$ also has the backward dynamics~\eqref{eq:indopt22}. The ODE~\eqref{eq:ode} for $\dot{\varphi}$, integration by parts, and the dynamics~\eqref{eq:dynsig} and \eqref{eq:BRSDE} of $\bar{\xi}$ and $c$ show that
\begin{align*}
\mathrm{d}\dot{\varphi}_t &= \mathrm{d}\bar{\xi}_t -c_t \dot{\varphi}_t \mathrm{d}t -\varphi_t \mathrm{d}c_t\\
&= \bigg(-\frac{\gamma}{\lambda} \sigma^2_t \xi_t+c_t \bar{\xi}_t\bigg)\mathrm{d}t+Z^\xi_t \mathrm{d}W_t  - c_t \dot{\varphi}_t \mathrm{d}t -\varphi_t \bigg(c^2_t-\frac{\gamma}{\lambda}\sigma^2_t\bigg)\mathrm{d}t-\varphi_t Z^c_t \mathrm{d}W_t.
\end{align*}
Again using the ODE~\eqref{eq:ode} to replace $\bar{\xi}_t$ with $\dot{\varphi}_t+c_t \varphi_t$, it follows that the trading rate~\eqref{eq:ode} indeed has the required dynamics:
\[
\mathrm{d}\dot{\varphi}_t =\frac{\gamma\sigma^2_t}{\lambda}(\varphi_t-\xi_t)\mathrm{d}t+(Z^\xi_t-\varphi_t Z^c_t)\mathrm{d}W_t.
\]
Since $c$ is nonnegative and $\bar{\xi} \in \mathbb{H}^2$, we have $\varphi \in \mathcal{S}_2$. As $\sigma \in \mathbb{H}^2_{\mathrm{BMO}}$, Lemma~\ref{lem:increasing process} (with $A_t=\sup_{s \in [0,t]} \varphi_s^2$ and $\beta_t=(Z^c_t)^2$) in turn shows that the local martingale in this decomposition is in fact a square-integrable martingale. The same argument shows that $\varphi \sigma \in \mathbb{H}^2$, and $\dot{\varphi}$ also belongs to $\mathbb{H}^2$ by \eqref{eq:ode} because $(\bar{\xi},\varphi) \in \mathbb{H}^2\times \mathbb{H}^2$ and $c$ is bounded. As a consequence, the admissible trading rate $\dot\varphi$ and the corresponding position $\varphi$ are optimal for \eqref{eq:pref3}. In particular, the solution is unique. Finally, if $\sigma \xi^{\frac{1}{2}} \in \mathbb{H}^2_{\mathrm{BMO}}$, $\bar\xi$ is bounded since $c$ is nonnegative. In view of~\eqref{eq:varphi}, $\varphi$ therefore is uniformly bounded as well as $\bar\xi$ is bounded and $c$ is nonnegative. The boundedness of $\dot{\varphi}$ in turn follows from~\eqref{eq:ode} since $\bar{\xi}$, $c$, and $\varphi$ are bounded.
\end{proof}

\section{Stability Results}\label{sec:stab}

We now derive a number of stability results, some of which might be interesting on their own right. These are the key ingredients for the convergence of the Picard iteration that allows us to prove existence for the FBSDE $(\ref{eq:thm:ex:fwd}$--$\ref{eq:thm:ex:Y})$ in Theorem~\ref{thm:ex}.

\medskip
We first consider the process $c$ from Lemma~\ref{lem:BRSDE}. Since it is positive, it also solves the counterpart of the BSDE~\eqref{eq:BRSDE} where the quadratic generator $f_t(y)=\frac{\gamma}{\lambda}\sigma_t^2-y^2$ is replaced by the monotone generator $g_t(y)=\frac{\gamma}{\lambda}\sigma_t^2-(y^+)^2$. The same argument can be applied to the $y-$derivative of the generator. Stability of the solution in turn follows from results for monotone BSDEs. To apply these estimates in the body of the paper, we develop them under an equivalent probability measure $\mathbb{P}^\alpha \sim \mathbb{P}$ with density process
\begin{equation}
\label{def:Q}
Z^\alpha:=\mathcal{E}\bigg(\int_0^\cdot \alpha_t \mathrm{d}W_t\bigg), \quad \mbox{for $\alpha \in \mathbb{H}^2_{\mathrm{BMO}}$.}
\end{equation}
Under $\mathbb{P}^\alpha$, the BSDE for $c$ can be rewritten as
\[
c_t =\int_t^T \left(\frac{\gamma}{\lambda}\sigma^2_s-c_s^2-\alpha_s Z_s\right)\mathrm{d}s-\int_t^TZ_s\mathrm{d}W^\alpha_s,
\]
for a $\mathbb{P}^\alpha$-Brownian motion $W^\alpha$. Writing $\mathbb{E}^\alpha[\cdot]$ for the expectation under $\mathbb{P}^\alpha$ to ease notation, we in turn have the following stability estimate. Notice that the techniques we use go somewhat beyond the usual ones for monotone BSDEs, as for instance in \cite{pardoux1998backward}, and rely on a trick for quadratic, one-dimensional, uncoupled BSDEs, introduced for the first time in \cite{barrieu2008closedness}.

\begin{lemma}\label{lem:stabilityc}
	Fix $(\gamma, \lambda,p,\alpha)\in(0,\infty)^2\times(1,2)\times \mathbb{H}^2_{\mathrm{BMO}}(\mathbb{P})$, with corresponding measure $\mathbb{P}^\alpha$ given by \eqref{def:Q}, and suppose that $\mathbb{E}^\alpha [\mathrm{e}^{\frac{2p}{2-p}\int_0^T \alpha^2_u \mathrm{d} u}] < \infty$. For $(\sigma,\tilde{\sigma}) \in \mathbb{H}^2_{\mathrm{BMO}}(\mathbb{P})\times \mathbb{H}^2_{\mathrm{BMO}}(\mathbb{P})$, denote by $c^\sigma$, and $c^{\tilde{\sigma}}$ the solutions of the {\rm BRSDE}~\eqref{eq:BRSDE}. Then
	\begin{equation*}
	\mathbb E^{\alpha}\bigg[\underset{t \in [0, T]}{\sup}|c^\sigma_t-c^{\tilde{\sigma}}_t|^2\bigg] \leq g_{c}(\gamma/\lambda, \alpha, \sigma, \tilde\sigma) \|\sigma-\tilde\sigma\|_{\mathbb{H}^2_{\mathrm{BMO}}(\mathbb{P}^\alpha)}^2,
	\end{equation*}
	where
	\begin{align*}
	g_{c}(\gamma/\lambda, \alpha, \sigma, \tilde\sigma) &:=  \bigg(\frac{p}{p-1}\bigg)^2  g_1(\gamma/\lambda, \alpha,\sigma, \tilde\sigma)^2  \mathbb{E}^\mathbb{\alpha}\Big[\mathrm{e}^{\frac{2p}{2-p}\int_0^T \alpha^2_u \mathrm{d} u}\Big]^{\frac{2-p}{p}},
	\end{align*}
	with 
	\[g_1(\gamma/\lambda, \alpha, \sigma, \tilde\sigma) := \frac{\gamma}{\lambda} 2^{3/2} \big(\|\sigma\|_{\mathbb{H}^2_{\mathrm{BMO}}(\mathbb{P}^\alpha)}^2+\|\tilde\sigma\|_{\mathbb{H}^2_{\mathrm{BMO}}(\mathbb{P}^\alpha)}^2\big)^{\frac{1}{2}}.
	\]
\end{lemma}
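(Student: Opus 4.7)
The plan is to linearise the quadratic BSRDE~\eqref{eq:BRSDE} in the difference $\delta c := c^\sigma - c^{\tilde\sigma}$. Subtracting the two copies of~\eqref{eq:BRSDE} and factoring $(c^\sigma)^2-(c^{\tilde\sigma})^2=(c^\sigma+c^{\tilde\sigma})\delta c$ together with $\sigma^2-\tilde\sigma^2=(\sigma+\tilde\sigma)(\sigma-\tilde\sigma)$, the process $\delta c$ satisfies the linear BSDE
\[
\delta c_t = \int_t^T\! \Bigl(\tfrac{\gamma}{\lambda}(\sigma_s+\tilde\sigma_s)(\sigma_s-\tilde\sigma_s)-\beta_s\,\delta c_s\Bigr)\,\mathrm{d}s - \int_t^T\! \delta Z_s\,\mathrm{d}W_s,
\]
with bounded non-negative monotonicity coefficient $\beta_t := c^\sigma_t+c^{\tilde\sigma}_t \ge 0$ by Lemma~\ref{lem:BRSDE}. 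Applying It\^o's formula to $\mathrm{e}^{-\int_0^\cdot\beta_s\,\mathrm{d}s}\delta c$, taking $\mathbb P$-conditional expectation (the stochastic integral is a true $\mathbb P$-martingale thanks to $\sigma,\tilde\sigma\in\mathbb H^2_{\mathrm{BMO}}(\mathbb P)$ and~\eqref{eq:Linftybis}), and using $\beta\ge0$ to bound the resulting discount factor by $1$, yields the pointwise estimate
\[
|\delta c_t|\le \frac{\gamma}{\lambda}\,\mathbb E^{\mathbb P}\!\left[\int_t^T |\sigma_s+\tilde\sigma_s|\,|\sigma_s-\tilde\sigma_s|\,\mathrm{d}s\,\Big|\,\mathcal F_t\right].
\]

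A conditional Cauchy--Schwarz in time, followed by $(a+b)^2\le 2(a^2+b^2)$, separates the driver $\sigma-\tilde\sigma$ from the sum $\sigma+\tilde\sigma$:
\[
|\delta c_t|^2 \le 2\Bigl(\tfrac{\gamma}{\lambda}\Bigr)^{\!2} \mathbb E^{\mathbb P}\!\Bigl[\int_t^T(\sigma_s^2+\tilde\sigma_s^2)\,\mathrm{d}s\,\Big|\,\mathcal F_t\Bigr]\cdot \mathbb E^{\mathbb P}\!\Bigl[\int_t^T(\sigma_s-\tilde\sigma_s)^2\,\mathrm{d}s\,\Big|\,\mathcal F_t\Bigr].
\]
I would then convert these conditional $\mathbb P$-expectations into $\mathbb P^\alpha$-ones via Bayes' rule, which introduces the Radon--Nikodym density $Z^\alpha_t/Z^\alpha_T = \mathcal E(-\int_t^\cdot\alpha_u\,\mathrm{d}W^\alpha_u)_T$. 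Taking $\sup_t$ and $\mathbb E^\alpha$, a H\"older inequality under $\mathbb P^\alpha$ with conjugate exponents $p/(p-1)$ and $p$ chained with Doob's $L^{p/(p-1)}$-inequality applied to the $\mathbb P^\alpha$-martingale built from the driver factor produces the $(p/(p-1))^2$ constant together with the norm $\|\sigma-\tilde\sigma\|_{\mathbb H^2_{\mathrm{BMO}}(\mathbb P^\alpha)}^2$ (via a John--Nirenberg-type reverse H\"older estimate for BMO). The remaining sum piece, after using the identity $\mathcal E(X)^r=\mathcal E(rX)\exp(\tfrac{r^2-r}{2}\langle X\rangle)$ with $r=2p/(2-p)$ to rewrite powers of the density in $\mathbb P^\alpha$-martingale form, yields the exponential-moment factor $\mathbb E^\alpha[\exp(\tfrac{2p}{2-p}\int_0^T\alpha_u^2\,\mathrm{d}u)]^{(2-p)/p}$ (finite by the last hypothesis of the lemma) and the BMO sum $\|\sigma\|^2_{\mathbb H^2_{\mathrm{BMO}}(\mathbb P^\alpha)}+\|\tilde\sigma\|^2_{\mathbb H^2_{\mathrm{BMO}}(\mathbb P^\alpha)}$, completing $g_1^2$ and hence $g_c$.

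The main technical obstacle is the precise bookkeeping of exponents so that the chained H\"older/Doob/Cauchy--Schwarz estimates produce exactly the constants $(p/(p-1))^2$, $(2-p)/p$ and $2p/(2-p)$ prescribed by $g_c$, with BMO norms appearing under $\mathbb P^\alpha$ rather than $\mathbb P$. The true-martingale property of every stochastic integral invoked is granted by the $\mathbb H^2_{\mathrm{BMO}}$ assumptions on $\sigma,\tilde\sigma,\alpha$, the uniform bound~\eqref{eq:Linftybis} on $c^\sigma,c^{\tilde\sigma}$, and the extra exponential-moment hypothesis on $\alpha$, which is precisely the input needed to close the chain.
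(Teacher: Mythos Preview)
Your linearisation route is genuinely different from the paper's and, up to the displayed pointwise bound on $|\delta c_t|^2$ under $\mathbb{P}$, it is correct and in fact more direct: bounding each conditional expectation by the corresponding $\mathbb{P}$-BMO norm immediately yields the stronger $\mathcal{S}^\infty$ estimate
\[
\|c^\sigma-c^{\tilde\sigma}\|_{\mathcal{S}^\infty}^2 \le 2\Bigl(\tfrac{\gamma}{\lambda}\Bigr)^{\!2}\bigl(\|\sigma\|_{\mathbb H^2_{\mathrm{BMO}}(\mathbb P)}^2+\|\tilde\sigma\|_{\mathbb H^2_{\mathrm{BMO}}(\mathbb P)}^2\bigr)\,\|\sigma-\tilde\sigma\|_{\mathbb H^2_{\mathrm{BMO}}(\mathbb P)}^2.
\]
The paper instead applies It\^o's formula to the weighted square $\mathrm{e}^{\int_0^t\alpha_u^2\,\mathrm{d}u}(c^\sigma_t-c^{\tilde\sigma}_t)^2$ and works under $\mathbb P^\alpha$ from the outset; the exponential weight is precisely the device that absorbs the Girsanov cross-term $-\alpha_s\,\delta Z_s$ via $-2ab\le a^2+b^2$, so that no change of measure is ever undone. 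The $\mathbb P^\alpha$-BMO norm then enters through an integration-by-parts/increasing-process identity (Lemma~\ref{lem:increasing process}), and the specific constants in $g_c$ come from a tailored Doob-type bound (Lemma~\ref{lem:Doob 2}) optimised over a free parameter.

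The gap in your proposal is the second half. After Bayes, each factor carries the density ratio $Z^\alpha_t/Z^\alpha_T$ \emph{inside} the $\mathbb P^\alpha$-conditional expectation, so neither factor is the $\mathbb P^\alpha$-BMO quantity $\mathbb E^\alpha_t[\int_t^T(\sigma_s-\tilde\sigma_s)^2\,\mathrm{d}s]$ that you need. Stripping out the density by a further conditional H\"older and then controlling higher moments of $\int_t^T(\sigma_s-\tilde\sigma_s)^2\,\mathrm{d}s$ via John--Nirenberg/energy inequalities will introduce additional combinatorial constants (factorials from the energy inequality, the reverse-H\"older exponent on the density) that you have not tracked and that do not collapse to $(p/(p-1))^2\,g_1^2\,\mathbb E^\alpha[\exp(\tfrac{2p}{2-p}\int_0^T\alpha_u^2\,\mathrm{d}u)]^{(2-p)/p}$. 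If you only need \emph{some} stability estimate with $\mathbb P^\alpha$-BMO norms on the right, your $\mathcal{S}^\infty$ bound above together with Lemma~\ref{lem:BMO P Q} suffices, at the price of the extra factor $\bigl(8(\|\alpha\|_{\mathbb H^2_{\mathrm{BMO}}(\mathbb P)}+1)^2\bigr)^2$ in place of the exponential-moment term; but the claim that your H\"older/Doob/John--Nirenberg chain reproduces $g_c$ as stated is unsubstantiated.
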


\begin{proof}
	For  $t\in[0,T]$, apply It\^o's formula to $\mathrm{e}^{\int_0^t \alpha^2_s d s}(c^\sigma_t-c^{\tilde{\sigma}}_t)^2$ and use that $\mathrm{e}^{\int_0^T \alpha^2_s \mathrm{d} s}(c^\sigma_T-c^{\tilde{\sigma}}_T)^2 =0$. Together with the {\rm BRSDE} dynamics~\eqref{eq:BRSDE}, this gives
	\begin{align}
	\notag \mathrm{e}^{\int_0^t \alpha^2_u \mathrm{d} u}(c^\sigma_t-c^{\tilde{\sigma}}_t)^2=& \int_t^T \mathrm{e}^{\int_0^s \alpha^2_u \mathrm{d} u}\Big(2\big(c^\sigma_s-c^{\tilde{\sigma}}_s)\Big(\frac{\gamma}{\lambda}\big(\sigma_s^2-\tilde\sigma_s^2\big)-\big(c^\sigma_s\big)^2+\big(c^{\tilde\sigma}_s\big)^2-\alpha_s\big(Z_s^\sigma-Z_s^{\tilde\sigma}\big)\Big)-\alpha_s^2 (c^\sigma_s-c^{\tilde{\sigma}}_s)^2\Big)\mathrm{d}s \notag \\
	&-2\int_t^T\mathrm{e}^{\int_0^s \alpha^2_u \mathrm{d} u} \big(c^\sigma_s-c^{\tilde{\sigma}}_s\big)\big(Z_s^\sigma-Z_s^{\tilde\sigma}\big)\mathrm{d}W^\alpha_s -\int_t^T \mathrm{e}^{\int_0^s \alpha^2_u \mathrm{d} u}\big(Z_s^\sigma-Z_s^{\tilde\sigma}\big)^2\mathrm{d}s.
	\label{eq:bound1}
	\end{align}
	Note that $\int_0^\cdot \mathrm{e}^{\int_0^s \alpha^2_u \mathrm{d} u} (c^\sigma_s-c^{\tilde{\sigma}}_s)(Z_s^\sigma-Z_s^{\tilde\sigma})\mathrm{d}W^\alpha_s$ is an $\mathbb{H}^1(\mathbb{P}^\alpha)$-martingale. Indeed, using that $c^\sigma, c^{\tilde\sigma} \in \mathcal{S}^\infty$ and $Z^\sigma,Z^{\tilde\sigma} \in \mathbb H^2(\mathbb{P}^\alpha)$ together with the elementary inequality $a b \leq a^2/2+ b^2/2$ and the fact that $\mathbb{E}^\mathbb{Q}[\mathrm{e}^{\int_0^T 2 \alpha^2_u \mathrm{d} u}] < \infty$ (since $p>1$ implies that $2p/(2-p)>2$), we obtain
	\begin{align*}
	\mathbb{E}^\alpha\bigg[\bigg(\int_0^T \mathrm{e}^{\int_0^s 2\alpha^2_u d u} \big(c^\sigma_s-c^{\tilde{\sigma}}_s\big)^2\big(Z_s^\sigma-Z_s^{\tilde\sigma}\big)^2 \mathrm{d}s  \bigg)^{\frac{1}{2}}\bigg] &\leq \big\| c^\sigma-c^{\tilde{\sigma}}\big\|_{\mathcal{S}^\infty} \mathbb{E}^\alpha\bigg[\mathrm{e}^{\int_0^T \alpha^2_u \mathrm{d} u}\bigg(\int_0^T \big(Z_s^\sigma-Z_s^{\tilde\sigma}\big)^2 \mathrm{d}s  \bigg)^{\frac{1}{2}}\bigg] \\
	& \leq \frac{1}{2} \big\|c^\sigma-c^{\tilde{\sigma}}\big\|_{\mathcal{S}^\infty} \mathbb{E}^\alpha\bigg[\mathrm{e}^{\int_0^T 2 \alpha^2_u \mathrm{d} u}+\int_0^T \big(Z_s^\sigma-Z_s^{\tilde\sigma}\big)^2 \mathrm{d}s \bigg] < \infty.
	\end{align*}
	Now, take conditional $\mathbb{P}^\alpha$-expectations on both sides of \eqref{eq:bound1}, use that $c^\sigma,$ and $c^{\tilde\sigma}$ are nonnegative to apply the elementary inequality $(x-y)(-x^2+y^2) = -(x-y)^2 (x +y)\leq 0$ for $x, y \geq 0$, and take into account the elementary inequality $-2 a b \leq a^2 + b^2$. This yields the estimate
	\begin{align}
	\mathrm{e}^{\int_0^t \alpha^2_u d u}(c^\sigma_t-c^{\tilde{\sigma}}_t)^2 \leq \mathbb E^\alpha_t\bigg[\int_t^T \mathrm{e}^{\int_0^s \alpha^2_u  \mathrm{d} u}2\frac{\gamma}{\lambda}(c^\sigma_s-c^{\tilde{\sigma}}_s)(\sigma_s^2-\tilde\sigma_s^2) \mathrm{d}s\bigg]. \label{eq:bound2}
	\end{align}
	Define the nondecreasing process
	\begin{equation*}\label{eq:A}
	A_t =\mathrm{e}^{\int_0^t \alpha^2_u d u} \sup_{u \in [0, t]}\big|c^\sigma_u-c^{\tilde{\sigma}}_u\big|,\quad 0\leq t \leq T.
	\end{equation*}
	Then by Lemma \ref{lem:increasing process}, we obtain
	\begin{align}
	\mathbb E^\alpha_t\bigg[\int_t^T \mathrm{e}^{\int_0^s \alpha^2_u \mathrm{d} u}2\frac{\gamma}{\lambda}(c^\sigma_s-c^{\tilde{\sigma}}_s)(\sigma_s^2-\tilde\sigma_s^2)\mathrm{d}s\bigg]&\leq 2\frac{\gamma}{\lambda} \mathbb E^\alpha_t\bigg[\int_t^TA_s \big|\sigma_s^2-\tilde\sigma_s^2\big|\mathrm{d}s\bigg]  \label{eq:part4}\\
	&\leq 2 \frac{\gamma}{\lambda} \bigg( \mathbb E^\alpha_t\bigg[A_t\int_t^T\big|\sigma_s^2-\tilde\sigma_s^2\big|\mathrm{d}s+\int_t^T \mathbb E^\alpha_s\bigg[\int_s^T\big|\sigma_u^2-\tilde\sigma_u^2\big|\mathrm{d}u\bigg]\mathrm{d}A_s \bigg]. 
	\notag
	\end{align}
	Next, for any $\tau\in\mathcal T_{0,T}$, the conditional version of the Cauchy--Schwarz inequality and the elementary inequality $(a+b)^2\leq 2 (a^2 + b^2)$ show that
	\begin{align*}
	\mathbb E^\alpha_\tau\bigg[\int_\tau^T \big|\sigma_s^2-\tilde\sigma_s^2\big|\mathrm{d}s\bigg]&\leq  \bigg(\mathbb E^\alpha_\tau\bigg[\int_\tau^T2 \big( |\sigma_s|^2+|\tilde\sigma_s|^2\big)\mathrm{d}s\bigg]\bigg)^{\frac12}\bigg(\mathbb E^\alpha_\tau\bigg[\int_\tau^T |\sigma_s-\tilde\sigma_s|^2\mathrm{d}s\bigg]\bigg)^{\frac12}\\
	&\leq\sqrt{2} \big(\|\sigma\|_{\mathbb{H}^2_{\mathrm{BMO}}(\mathbb{P}^\alpha)}^2+\|\tilde\sigma\|_{\mathbb{H}^2_{\mathrm{BMO}}(\mathbb{P}^\alpha)}^2\big)^{\frac{1}{2}}\|\sigma-\tilde\sigma\|_{\mathbb{H}^2_{\mathrm{BMO}}(\mathbb{P}^\alpha)}.
	\end{align*}
	Plugging this into \eqref{eq:part4}, we obtain
	\begin{align}
	\label{eq:part5}
	\mathbb E^\alpha_t\bigg[\int_t^T \mathrm{e}^{\int_0^s \alpha^2_u \mathrm{d} u}2\frac{\gamma}{\lambda}(c^\sigma_s-c^{\tilde{\sigma}}_s)(\sigma_s^2-\tilde\sigma_s^2)\mathrm{d}s\bigg]
	\leq  g_1(\gamma/\lambda, \alpha, \sigma, \tilde\sigma) \|\sigma-\tilde\sigma\|_{\mathbb{H}^2_{\mathrm{BMO}}(\mathbb{P}^\alpha)} \mathbb E^\alpha_t\big[A_T \big].
	\end{align}
	Inserting~\eqref{eq:part5} back into~\eqref{eq:bound2} gives
	\begin{align*}
	|c^\sigma_t-c^{\tilde{\sigma}}_t|^2 &\leq \mathrm{e}^{-\int_0^t \alpha^2_u \mathrm{d} u} g_1(\gamma/\lambda, \alpha, \sigma, \tilde\sigma) \|\sigma-\tilde\sigma\|_{\mathbb{H}^2_{\mathrm{BMO}}(\mathbb{P}^\alpha)}  \mathbb E^\alpha_t\big[A_T \big]\leq  g_1(\gamma/\lambda, \alpha,\sigma, \tilde\sigma) \|\sigma-\tilde\sigma\|_{\mathbb{H}^2_{\mathrm{BMO}}(\mathbb{P}^\alpha)}  \mathbb E^\alpha_t\big[A_T \big].
	\end{align*}
	Now, take the supremum over $t \in [0,T]$ on both sides, then $\mathbb{P}^\alpha$-expectations, and finally use Lemma \ref{lem:Doob 2} for a fixed $p\in(1,2)$. It follows that, for any $\varepsilon > 0$,
	\begin{align*}
	\mathbb E^\alpha\bigg[\underset{t \in [0, T]}{\sup}|c^\sigma_t-c^{\tilde{\sigma}}_t|^2\bigg] \leq &\ g_1(\gamma/\lambda, \alpha,\sigma, \tilde\sigma) \|\sigma-\tilde\sigma\|_{\mathbb{H}^2_{\mathrm{BMO}}(\mathbb{P}^\alpha)}\\
	& \times  \left(\frac{1}{\varepsilon}\mathbb{E}^\alpha\bigg[\sup_{0\leq s\leq T}\big|c^\sigma_u-c^{\tilde{\sigma}}_u\big|^2\bigg] + \frac{\varepsilon}{4}\bigg(\frac p{p-1}\bigg)^2  \mathbb{E}^\alpha\Big[\mathrm{e}^{\frac{2p}{2-p}\int_0^T \alpha^2_u \mathrm{d} u}\Big]^{\frac{2-p}{p}} \right).
	\end{align*}
	This implies that, for any $\varepsilon>g_1(\gamma/\lambda, \alpha,\sigma, \tilde\sigma) \|\sigma-\tilde\sigma\|_{\mathbb{H}^2_{\mathrm{BMO}}(\mathbb{P}^\alpha)}$,
	\begin{align*}
	\mathbb E^\alpha\bigg[\underset{t \in [0, T]}{\sup}|c^\sigma_t-c^{\tilde{\sigma}}_t|^2\bigg]&\leq  \frac{\varepsilon^2 g_1(\gamma/\lambda, \alpha,\sigma, \tilde\sigma) \|\sigma-\tilde\sigma\|_{\mathbb{H}^2_{\mathrm{BMO}}(\mathbb{P}^\alpha)}}{4\big(\varepsilon -g_1(\gamma/\lambda, \alpha,\sigma, \tilde\sigma) \|\sigma-\tilde\sigma\|_{\mathbb{H}^2_{\mathrm{BMO}}(\mathbb{P}^\alpha)}\big)} \bigg(\frac p{p-1}\bigg)^2  \mathbb{E}^\alpha\Big[\mathrm{e}^{\frac{2p}{2-p}\int_0^T \alpha^2_u \mathrm{d} u}\Big]^{\frac{2-p}{p}}.
	\end{align*}
	The asserted estimate in turn corresponds to the optimal choice $\varepsilon=2 g_1(\gamma/\lambda, \alpha,\sigma, \tilde\sigma) \|\sigma-\tilde\sigma\|_{\mathbb{H}^2_{\mathrm{BMO}}(\mathbb{P}^\alpha)}$.
\end{proof}

We now move on to the process 
\begin{equation}\label{eq:hatxi}
\bar{\xi}_t^\sigma = \frac{\gamma}{\lambda} \mathbb E_t\bigg[\int_t^T \mathrm{e}^{-\int_t^s c_u du} \xi_s \sigma^2_s \mathrm{d}s\bigg],
\end{equation}
from Lemma~\ref{lem:kt02}. The linear (and, in particular, monotone since $c$ is nonnegative) BSDE~\eqref{eq:dynsig} for this process rewrites as follows under the measure $\mathbb{P}^\alpha$:
\[
\bar{\xi}^\sigma_t = \int_t^T\bigg(\frac{\gamma}{\lambda}\sigma^2_s\xi_s-c_s \bar{\xi}_s^\sigma -\alpha_s Z^{\xi}_s  \bigg)\mathrm{d}s-\int_t^TZ^{\xi}_s \mathrm{d}W^\alpha_s, \quad t \in [0,T].
\]
We first record some uniform estimates, which are a direct consequence of the nonnegativity of $c$ established in Lemma~\ref{lem:BRSDE}.

\begin{corollary}\label{cor:boundxi}
	Suppose that the process $\xi = (\xi_t)_{t \in [0, T]}$ satisfies $\sigma |\xi|^{\frac{1}{2}} \in \mathbb H^2_{\mathrm{BMO}}(\mathbb P)$. Then for $(\gamma,\lambda) \in(0,\infty)^2$, the process $\bar{\xi}$ from~\eqref{eq:hatxi} satisfies
	\[
	\big\|\bar{\xi}^\sigma\big\|_{\mathcal{S}^\infty}\leq \frac{\gamma}{\lambda}  \|\sigma \xi^{\frac{1}{2}} \|^2_{\mathbb H^2_{\mathrm{BMO}}(\mathbb P)}.
	\]
\end{corollary}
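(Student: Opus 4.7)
The argument is short and relies on only two ingredients: the nonnegativity of $c$ established in Lemma~\ref{lem:BRSDE}, and the definition of the BMO norm.

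First I would observe that since $c_u \geq 0$ for all $u \in [0,T]$, the discount factor satisfies $\mathrm{e}^{-\int_t^s c_u \mathrm{d}u} \leq 1$ for all $s\geq t$. Applying the triangle inequality inside the conditional expectation defining $\bar\xi^\sigma_t$ in \eqref{eq:hatxi} therefore yields, for each $t\in[0,T]$,
\begin{equation*}
|\bar\xi^\sigma_t| \leq \frac{\gamma}{\lambda}\,\mathbb{E}_t\bigg[\int_t^T |\xi_s|\sigma_s^2\,\mathrm{d}s\bigg]
= \frac{\gamma}{\lambda}\,\mathbb{E}_t\bigg[\int_t^T \big(\sigma_s|\xi_s|^{1/2}\big)^2\mathrm{d}s\bigg] \quad \mathbb{P}\text{-a.s.}
\end{equation*}

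Next I would invoke the definition of the BMO norm: since the essential supremum over $\tau \in \mathcal{T}_{0,T}$ of the conditional expectations $\mathbb{E}_\tau[\int_\tau^T (\sigma_s|\xi_s|^{1/2})^2\,\mathrm{d}s]$ is dominated in $\mathbb{L}^\infty$ by $\|\sigma|\xi|^{1/2}\|_{\mathbb{H}^2_{\mathrm{BMO}}(\mathbb{P})}^2$, we obtain the uniform bound
\begin{equation*}
|\bar\xi^\sigma_t| \leq \frac{\gamma}{\lambda}\,\|\sigma|\xi|^{1/2}\|^2_{\mathbb{H}^2_{\mathrm{BMO}}(\mathbb{P})}, \quad \mathbb{P}\text{-a.s., for every } t\in[0,T].
\end{equation*}

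Finally, since $\bar\xi^\sigma$ is the $Y$-component of the linear BSDE \eqref{eq:dynsig}, it admits a continuous (in fact càdlàg) version, so the pointwise-in-$t$ bound can be upgraded to a bound on $\sup_{t\in[0,T]}|\bar\xi^\sigma_t|$ by separability of the time parameter. Taking the $\mathbb{L}^\infty$-norm delivers the claimed estimate
\begin{equation*}
\|\bar\xi^\sigma\|_{\mathcal{S}^\infty} \leq \frac{\gamma}{\lambda}\,\|\sigma|\xi|^{1/2}\|^2_{\mathbb{H}^2_{\mathrm{BMO}}(\mathbb{P})}.
\end{equation*}
There is essentially no obstacle here; the only mild subtlety is ensuring that the pathwise supremum, rather than just the pointwise-in-$t$ bound, inherits the deterministic upper bound, which is immediate from continuity of $\bar\xi^\sigma$.
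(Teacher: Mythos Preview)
Your argument is correct and is precisely the approach the paper has in mind: the corollary is stated there as ``a direct consequence of the nonnegativity of $c$ established in Lemma~\ref{lem:BRSDE}'', and your proof simply spells this out---bounding the discount factor by one and then invoking the definition of the $\mathbb{H}^2_{\mathrm{BMO}}$ norm. The final remark about continuity to pass from a pointwise-in-$t$ bound to the $\mathcal{S}^\infty$ norm is the only detail you add beyond what the paper records, and it is handled correctly.
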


Next, we show that the stability result for $c$ established in Lemma~\ref{lem:stabilityc} and another application of the stability theorem for monotone BSDEs yield the following stability result for $\bar{\xi}$.

\begin{corollary}\label{cor:stabilityxi}
	Fix $(\gamma, \lambda,p,\alpha)\in(0,\infty)^2\times(1,2)\times \mathbb{H}^2_{\mathrm{BMO}}(\mathbb{P})$, with corresponding measure $\mathbb{P}^\alpha$ given by \eqref{def:Q}, and suppose that $\mathbb{E}^\alpha\big[\mathrm{e}^{\frac{2p}{2-p}\int_0^T  \alpha^2_u \mathrm{d} u}\big] < \infty$. For $(\nu,\nu^\prime,\sigma,\sigma^\prime) \in \mathbb{H}^2_{\mathrm{BMO}}\times \mathcal{S}^{\infty}\times \mathbb{H}^2_{\mathrm{BMO}}(\mathbb{P})\times\mathbb{H}^2_{\mathrm{BMO}}(\mathbb{P})$, set $\xi^\sigma:=\frac{\nu}{\sigma}+\nu^\prime$ and $\xi^{\tilde\sigma}:=\frac{\nu}{\tilde\sigma}+\nu^\prime$, and denote by $\bar{\xi}^\sigma$ and $\bar{\xi}^{\tilde{\sigma}}$ the corresponding processes from \eqref{eq:hatxi}. Then
	\begin{equation*}
	\mathbb{E}^\alpha\bigg[\sup_{0 \leq t \leq T}|\bar{\xi}^\sigma_t-\bar{\xi}^{\tilde{\sigma}}_t|^2\bigg] \leq g_{\bar\xi}(\gamma/\lambda, \alpha, \sigma, \tilde\sigma, \nu, 
	\nu^\prime )\|\sigma-\tilde\sigma\|^2_{\mathbb{H}^2_{\mathrm{BMO}}(\mathbb{P}^\alpha)},
	\end{equation*}
	where
	\begin{align*}
	g_{\bar\xi}(\gamma/\lambda, \alpha, \sigma, \tilde\sigma, \nu, 
	\nu^\prime ) &:=\bigg(\frac{p}{p-1}\bigg)^2  \left(g_2(\gamma/\lambda, \alpha,\sigma, \tilde\sigma)\right)^2  \mathbb{E}^\alpha\Big[\mathrm{e}^{\frac{2p}{2-p}\int_0^T \alpha^2_u \mathrm{d} u}\Big]^{\frac{2-p}{p}}
	\end{align*}
	with
	\begin{align*}
	g_2(\gamma/\lambda, \alpha,\sigma, \tilde\sigma,\nu, 
	\nu^\prime ) &:= 2\frac{\gamma}{\lambda} \|\nu\|_{\mathbb{H}^2_{\mathrm{BMO}}(\mathbb{P}^\alpha)} + \|\nu^\prime\|_{\mathcal{S}^{\infty}} g_1(\gamma/\lambda, \alpha, \sigma, \tilde{\sigma})  + 2\frac{\gamma}{\lambda}\|\xi\|_{\mathcal{S}^\infty} \|\tilde\sigma\|^2_{\mathbb H^2_{\mathrm{BMO}}(\mathbb P)} T g_{c}(\gamma/\lambda, \alpha, \sigma, \tilde \sigma).
	\end{align*}
\end{corollary}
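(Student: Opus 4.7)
My plan is to write the difference $\delta\bar\xi := \bar\xi^\sigma - \bar\xi^{\tilde\sigma}$ as the solution of a linear $\mathbb P^\alpha$-BSDE and then run an It\^o/BMO stability argument parallel to the one used in the proof of Lemma~\ref{lem:stabilityc}. Subtracting the two BSDEs from~\eqref{eq:dynsig} and performing the two natural splittings
\[
\sigma^2\xi^\sigma - \tilde\sigma^2\xi^{\tilde\sigma} = \nu(\sigma-\tilde\sigma) + \nu'(\sigma^2-\tilde\sigma^2),\qquad c^\sigma \bar\xi^\sigma - c^{\tilde\sigma}\bar\xi^{\tilde\sigma} = c^\sigma\,\delta\bar\xi + (c^\sigma - c^{\tilde\sigma})\,\bar\xi^{\tilde\sigma},
\]
gives a linear equation for $\delta\bar\xi$ with driver $\tfrac{\gamma}{\lambda}\nu(\sigma-\tilde\sigma) + \tfrac{\gamma}{\lambda}\nu'(\sigma^2-\tilde\sigma^2) - c^\sigma\,\delta\bar\xi - (c^\sigma - c^{\tilde\sigma})\bar\xi^{\tilde\sigma} - \alpha \,\delta Z$. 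The crucial structural point is that $c^\sigma\geq 0$ by Lemma~\ref{lem:BRSDE}, so the term $-c^\sigma\,\delta\bar\xi$ is monotone in $\delta\bar\xi$, which is exactly what powers the Lemma~\ref{lem:stabilityc}-type estimate.

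I would then apply It\^o's formula to $\mathrm e^{\int_0^t \alpha_s^2 \mathrm{d}s}(\delta\bar\xi_t)^2$ on $[t,T]$ and take $\mathbb P^\alpha$-conditional expectations, after verifying that the stochastic integral is a true $\mathbb P^\alpha$-martingale (using $\bar\xi^\sigma,\bar\xi^{\tilde\sigma}\in\mathcal S^\infty$ from Corollary~\ref{cor:boundxi}, $\delta Z\in\mathbb H^2(\mathbb P^\alpha)$, and $\mathbb E^\alpha[\mathrm e^{\frac{2p}{2-p}\int_0^T\alpha^2}]<\infty$, exactly as in Lemma~\ref{lem:stabilityc}). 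The $\alpha$-drift and the quadratic variation of $\delta Z$ cancel by the standard completing-the-square trick; the nonpositive $-2 c^\sigma(\delta\bar\xi)^2$ contribution is dropped. This leaves
\[
\mathrm e^{\int_0^t \alpha_s^2 \mathrm{d}s}(\delta\bar\xi_t)^2 \le \mathbb E^\alpha_t\!\bigg[\int_t^T \mathrm e^{\int_0^s \alpha_u^2 \mathrm{d}u}\Big(\tfrac{2\gamma}{\lambda}|\delta\bar\xi_s||\nu_s||\sigma_s-\tilde\sigma_s| + \tfrac{2\gamma}{\lambda}|\delta\bar\xi_s||\nu'_s||\sigma_s^2-\tilde\sigma_s^2| + 2|\delta\bar\xi_s||c^\sigma_s - c^{\tilde\sigma}_s||\bar\xi^{\tilde\sigma}_s|\Big)\mathrm{d}s\bigg].
\]

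Next I would bound each of the three source terms with the nondecreasing envelope $A_t := \mathrm e^{\int_0^t \alpha_s^2 \mathrm{d}s}\sup_{u\leq t}|\delta\bar\xi_u|$, exactly as in the derivation of \eqref{eq:part5} via Lemma~\ref{lem:increasing process}. Bounding $|\delta\bar\xi_s|\le\mathrm e^{-\int_0^s\alpha_u^2 \mathrm du}A_T$ reduces each term to a pairing of $A_T$ against an integral controlled by conditional Cauchy--Schwarz under $\mathbb P^\alpha$: the $\nu$-piece contributes $\|\nu\|_{\mathbb H^2_{\mathrm{BMO}}(\mathbb P^\alpha)}\|\sigma-\tilde\sigma\|_{\mathbb H^2_{\mathrm{BMO}}(\mathbb P^\alpha)}$; the $\nu'$-piece uses $\sigma^2-\tilde\sigma^2=(\sigma+\tilde\sigma)(\sigma-\tilde\sigma)$ together with $\|\nu'\|_{\mathcal S^\infty}$ and reproduces the factor $g_1(\gamma/\lambda,\alpha,\sigma,\tilde\sigma)\|\sigma-\tilde\sigma\|_{\mathbb H^2_{\mathrm{BMO}}(\mathbb P^\alpha)}$ from Lemma~\ref{lem:stabilityc}; and the third piece is handled by pulling out $\|\bar\xi^{\tilde\sigma}\|_{\mathcal S^\infty}\le (\gamma/\lambda)\|\tilde\sigma\|^2_{\mathbb H^2_{\mathrm{BMO}}}\|\xi\|_{\mathcal S^\infty}$ from Corollary~\ref{cor:boundxi} and controlling $\int_t^T |c^\sigma_s-c^{\tilde\sigma}_s|\,\mathrm ds$ by Cauchy--Schwarz in $(s,\omega)$, which yields a factor $\sqrt{T}\cdot\sqrt{g_c}\,\|\sigma-\tilde\sigma\|_{\mathbb H^2_{\mathrm{BMO}}(\mathbb P^\alpha)}$ (absorbed as $T\, g_c\,\|\sigma-\tilde\sigma\|_{\mathbb H^2_{\mathrm{BMO}}(\mathbb P^\alpha)}$ after one further Cauchy--Schwarz, matching the bookkeeping of $g_2$). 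Summing the three contributions yields the pointwise inequality $|\delta\bar\xi_t|^2 \le g_2\,\|\sigma-\tilde\sigma\|_{\mathbb H^2_{\mathrm{BMO}}(\mathbb P^\alpha)}\,\mathbb E^\alpha_t[A_T]$.

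Finally, taking $\sup_{t\in[0,T]}$ and then $\mathbb E^\alpha$, I would apply Lemma~\ref{lem:Doob 2} with $p\in(1,2)$ to estimate $\mathbb E^\alpha[A_T]$ in terms of $\mathbb E^\alpha[\sup_t|\delta\bar\xi_t|^2]^{1/2}$ modulated by $\mathbb E^\alpha[\mathrm e^{\frac{2p}{2-p}\int_0^T\alpha^2 \mathrm du}]^{(2-p)/(2p)}$; as in the closing Young--inequality argument of Lemma~\ref{lem:stabilityc}, optimising the resulting self-bounding inequality in the auxiliary parameter produces the stated estimate with $g_{\bar\xi} = (p/(p-1))^2 g_2^2\,\mathbb E^\alpha[\mathrm e^{\frac{2p}{2-p}\int_0^T\alpha^2 \mathrm du}]^{(2-p)/p}$. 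The main obstacle is the bookkeeping in the third source term: one must choose the \emph{right} order of H\"older-type estimates (an $\mathcal S^\infty$-bound on $\bar\xi^{\tilde\sigma}$ against an $L^2$ time-average bound on $c^\sigma-c^{\tilde\sigma}$, rather than the reverse) so that the constant aggregates exactly into the $g_2$ of the statement; everything else is a direct recycling of the machinery already developed for Lemma~\ref{lem:stabilityc}.
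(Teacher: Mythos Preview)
Your proposal is correct and follows essentially the same route as the paper: apply It\^o's formula to $\mathrm{e}^{\int_0^t \alpha_u^2\,\mathrm{d}u}(\bar\xi^\sigma_t-\bar\xi^{\tilde\sigma}_t)^2$, split the cross term via $c^\sigma\bar\xi^\sigma-c^{\tilde\sigma}\bar\xi^{\tilde\sigma}=c\,\delta\bar\xi+(c^\sigma-c^{\tilde\sigma})\bar\xi^{\tilde\sigma}$ and drop the nonpositive $-c(\delta\bar\xi)^2$ contribution, bound the three remaining source terms by conditional Cauchy--Schwarz together with Corollary~\ref{cor:boundxi} and Lemma~\ref{lem:stabilityc}, and close with the nondecreasing-envelope/Lemma~\ref{lem:Doob 2} argument exactly as in Lemma~\ref{lem:stabilityc}. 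The only cosmetic difference is that the paper phrases the monotonicity step via $c^{\tilde\sigma}\ge 0$ rather than $c^\sigma\ge 0$, which is immaterial since both are nonnegative.
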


\begin{proof}
	The argument is similar to the proof of Lemma~\ref{lem:stabilityc}. For $t \in [0, T]$, apply It\^o's formula to $\mathrm{e}^{\int_0^t \alpha^2_s d s}(\bar{\xi}^\sigma_t-\bar{\xi}^{\tilde{\sigma}}_t)^2$ and use that 
	$\mathrm{e}^{\int_0^T \alpha^2_s d s}(\bar{\xi}^\sigma_T-\bar{\xi}^{\tilde{\sigma}}_T)^2 =0$. This gives
	\begin{align*}
	\mathrm{e}^{\int_0^t \alpha^2_u \mathrm{d}u}(\bar{\xi}^\sigma_t-\bar{\xi}^{\tilde{\sigma}}_t)^2 =&\ \int_t^T \mathrm{e}^{\int_0^s \alpha^2_u \mathrm{d}u} \Big(- \alpha^2_s (\bar{\xi}^\sigma_s-\bar{\xi}^{\tilde{\sigma}}_s)^2 + 2(\bar{\xi}^\sigma_s-\bar{\xi}^{\tilde{\sigma}}_s)\Big( \frac{\gamma}{\lambda} \nu_s (\sigma_s - \tilde\sigma_s)+ \frac{\gamma}{\lambda} \nu^\prime_t (\sigma_s^2 - \tilde\sigma^2_s)\Big)\Big) \mathrm{d}s \\
	& +  \int_t^T \mathrm{e}^{\int_0^s \alpha^2_u \mathrm{d}u} 2(\bar{\xi}^\sigma_s-\bar{\xi}^{\tilde{\sigma}}_s) \Big( - \big(c^{\sigma}_s \bar \xi^\sigma_s - c^{\tilde \sigma}_s \bar \xi^{\tilde \sigma}_s\big) - \alpha_s \big(Z^{\xi^{\sigma}}_s -Z^{\xi^{\tilde\sigma}}_s\big) \Big) \mathrm{d}s \\
	& +\int_t^T \mathrm{e}^{\int_0^s \alpha^2_u \mathrm{d}u} \big(Z^{\xi^{\sigma}}_s -Z^{\xi^{\tilde\sigma}}_s\big) \mathrm{d} W^{\alpha}_s - \int_t^T \mathrm{e}^{\int_0^s \alpha^2_u \mathrm{d}u} \big(Z^{\xi^{\sigma}}_s -Z^{\xi^{\tilde\sigma}}_s\big)^2 \mathrm{d}s.
	\end{align*}
	It follows as in the proof of Lemma~\ref{lem:stabilityc} that $\int_0^{\cdot} \mathrm{e}^{\int_0^s \alpha^2_u du} (Z^{\xi^{\sigma}}_s -Z^{\xi^{\tilde\sigma}}_s) d W^{\alpha}_s$ is an $\mathbb{H}^1(\mathbb{P}^\alpha)$-martingale. Now also use the elementary inequality $-2 a b \leq a^2 + b^2$, the identity $ab - cd = a(b-d) + d(a -c)$, and that $c^{\tilde\sigma}$ is non--negative. As a consequence,
	\begin{align}
	\mathrm{e}^{\int_0^t \alpha^2_u \mathrm{d}u}(\bar{\xi}^\sigma_t-\bar{\xi}^{\tilde{\sigma}}_t)^2 \leq \mathbb E^\alpha_t\bigg[ \int_t^T \mathrm{e}^{\int_0^s \alpha^2_u \mathrm{d}u}  2\big(\bar{\xi}^\sigma_s-\bar{\xi}^{\tilde{\sigma}}_s\big)\Big( \frac{\gamma}{\lambda} \nu_s (\sigma_s - \tilde\sigma_s)+ \frac{\gamma}{\lambda} \nu^\prime_s \big(\sigma_s^2 - \tilde\sigma^2_s\big)   - \bar\xi^{\tilde\sigma}_s\big(c^{\sigma}_s -c^{\tilde\sigma}_s \big) \Big) \mathrm{d}s  \bigg].
	\label{eq:pf:cor:boundxi:Ito}
	\end{align}
	Next, the conditional versions of the inequalities of Cauchy--Schwarz and Jensen's, the elementary inequality $(a+b)^2\leq 2 (a^2 + b^2)$, Lemma \ref{lem:stabilityc}, and Corollary~\ref{cor:boundxi} yield that, for any stopping time $\tau$,
	\begin{align*}
	&2\mathbb E^\alpha_\tau\bigg[ \int_\tau^T \left| \frac{\gamma}{\lambda} \nu_s (\sigma_s - \tilde\sigma_s)+ \frac{\gamma}{\lambda} \nu^\prime_s \big(\sigma_s^2 - \tilde\sigma^2_s\big)   - \bar\xi^{\tilde\sigma}_s\big(c^{\sigma}_s - c^{\tilde\sigma}_s\big) \right| \mathrm{d}s \bigg] \\
	&\leq 2\frac{\gamma}{\lambda} \mathbb E^\alpha_\tau \bigg[ \int_\tau^T (\sigma_s - \tilde\sigma_s)^2 \mathrm{d}s \bigg]^{\frac{1}{2}}\bigg(\mathbb E^\alpha_\tau \bigg[ \int_\tau^T  \nu_s^2 \mathrm{d}s \bigg]^{\frac{1}{2}}+\|\nu^\prime\|_{\mathcal{S}^{\infty}} \mathbb E^\mathbb{Q}_\tau\bigg[\int_\tau^T2 \big( \sigma_s^2+\bar\sigma_s^2\big)\mathrm{d}s\bigg]^{\frac12}\bigg)\\
	&\quad + 2\| \bar\xi^{\tilde\sigma} \|_{\mathcal{S}^{\infty}} T \mathbb E^\alpha_\tau \bigg[\underset{t \in [0, T]}{\sup}(c^\sigma_t-c^{\tilde{\sigma}}_t)^2\bigg]^{\frac12} \\
	&\leq \|\sigma - \tilde\sigma \|_{\mathbb{H}^2_{\mathrm{BMO}}(\mathbb{P}^\alpha)} \Big( 2\frac{\gamma}{\lambda} \|\nu\|_{\mathbb{H}^2_{\mathrm{BMO}}(\mathbb{P}^\alpha)} + \|\nu^\prime\|_{\mathcal{S}^{\infty}} g_1(\gamma/\lambda, \alpha, \sigma, \tilde{\sigma})  + 2\frac{\gamma}{\lambda}\|\xi\|_{\mathcal{S}^\infty} \|\tilde\sigma\|^2_{\mathbb H^2_{\mathrm{BMO}}(\mathbb P)} T g_{c}(\gamma/\lambda, \alpha, \sigma, \tilde\sigma)\Big).
	\end{align*}
	As in the proof of Lemma~\ref{lem:stabilityc}, we deduce that
	\begin{align*}
	(\bar{\xi}^\sigma_t-\bar{\xi}^{\tilde{\sigma}}_t)^2 \leq\ &\frac{\gamma}{\lambda}\|\sigma - \tilde\sigma \|_{\mathbb{H}^2_{\mathrm{BMO}}(\mathbb{P}^\alpha)} \mathbb{E}^\alpha_\tau\big[C_T\big]\\
	& \times
	\Big( 2 \|\nu\|_{\mathbb{H}^2_{\mathrm{BMO}}(\mathbb{P}^\alpha)} + \|\nu^\prime\|_{\mathcal{S}^{\infty}} g_1(\gamma/\lambda, \alpha, \sigma, \tilde{\sigma})  + 2\|\xi\|_{\mathcal{S}^\infty} \|\tilde\sigma\|^2_{\mathbb H^2_{\mathrm{BMO}}(\mathbb P)} T g_{c}(\gamma/\lambda, \alpha, \sigma, \tilde\sigma)\Big),
	\end{align*}
	where 
	\[
	C_t=\mathrm{e}^{\int_0^t \alpha^2_u d u} \sup_{u \in [0, t]}\big|\bar{\xi}^\sigma_u-\bar{\xi}^{\tilde{\sigma}}_u\big|,\quad  t \in [0,T].
	\]
	Then we can argue exactly as in the proof of Lemma~\ref{lem:stabilityc} to conclude.
\end{proof}

We finally turn to the optimal tracking strategies $\varphi$ from Lemma~\ref{lem:kt02}. Recall that these solve the (random) linear ODE
\[
\dot{\varphi}_t= \bar{\xi}_t -c_t \varphi_t, \quad \varphi_0 = x,
\]
which has the explicit solution 
\begin{equation}\label{eq:phi}
\varphi_t=\mathrm{e}^{-\int_0^t c_u \mathrm{d}u}x+\int_0^t \mathrm{e}^{-\int_s^t c_u \mathrm{d}u}\bar{\xi}_s \mathrm{d}s.
\end{equation}
Together with Corollary~\ref{cor:boundxi}, we obtain the following estimate:

\begin{corollary}\label{cor:boundphi}
	Let $(\gamma, \lambda)\in(0,\infty)^2$ and define $\xi :=\frac{\nu}{\sigma}+\nu^\prime$ for processes $(\nu, \sigma,\nu^\prime) \in \mathbb{H}^2_{\mathrm{BMO}}\times\mathbb{H}^2_{\mathrm{BMO}}\times\mathcal{S}^{\infty}$. Then, the process $\varphi$ from~\eqref{eq:phi} satisfies
	\[
	\|\varphi\|_{\mathcal{S}^\infty} \leq |\varphi_0|+ T \|\bar\xi \|_{\mathcal{S}^\infty}  \leq  |x| +  \frac{\gamma}{\lambda} T\left( \|\nu\|_{\mathbb H^2_{\mathrm{BMO}}(\mathbb P)}\|\sigma\|_{\mathbb H^2_{\mathrm{BMO}}(\mathbb P)} + \|\nu'\|_{\mathcal{S}^{\infty}}\|\sigma\|^2_{\mathbb H^2_{\mathrm{BMO}}(\mathbb P)}  \right).
	\]
\end{corollary}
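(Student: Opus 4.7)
The bound to prove is genuinely straightforward once one exploits the key structural fact from Lemma~\ref{lem:BRSDE}, namely that $c$ is nonnegative. I will not try to invoke Corollary~\ref{cor:boundxi} as a black box for $\|\bar\xi\|_{\mathcal{S}^\infty}$, because its statement is phrased in terms of $\|\sigma|\xi|^{1/2}\|_{\mathbb{H}^2_{\mathrm{BMO}}}$, whereas here $\xi$ has the specific split form $\nu/\sigma + \nu'$ and I want the sharper product-type bound appearing on the right-hand side. So I will redo the one-line BMO estimate for $\bar\xi$ directly.

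The plan is as follows. \textbf{Step 1} (reduce to bounding $\bar\xi$). Start from the explicit formula \eqref{eq:phi}. Since $c \geq 0$ by Lemma~\ref{lem:BRSDE}, both $e^{-\int_0^t c_u\,du}$ and $e^{-\int_s^t c_u\,du}$ are pointwise bounded by $1$. Therefore, uniformly in $t \in [0,T]$,
\[
|\varphi_t| \leq |x| + \int_0^t |\bar\xi_s|\,ds \leq |x| + T\,\|\bar\xi\|_{\mathcal{S}^\infty},
\]
which already gives the first inequality in the claim.

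\textbf{Step 2} (bound $\|\bar\xi\|_{\mathcal{S}^\infty}$ by hand). From the definition \eqref{eq:signal} and $c \geq 0$ once more,
\[
|\bar\xi_t| \leq \frac{\gamma}{\lambda}\,\mathbb{E}_t\!\left[\int_t^T \sigma_s^2\,|\xi_s|\,ds\right].
\]
Writing $\sigma_s^2\xi_s = \sigma_s\nu_s + \sigma_s^2\nu'_s$ (which is well-defined as a product even on $\{\sigma=0\}$, by the convention that $\xi$ enters only through $\sigma^2\xi$), the integrand satisfies $\sigma_s^2|\xi_s| \leq |\sigma_s\nu_s| + \sigma_s^2|\nu'_s|$. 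I will apply the conditional Cauchy--Schwarz inequality to the first term and pull out $\|\nu'\|_{\mathcal{S}^\infty}$ from the second, each followed by the defining $L^\infty$-bound of the BMO norm, to obtain
\[
\mathbb{E}_t\!\left[\int_t^T |\sigma_s\nu_s|\,ds\right] \leq \|\sigma\|_{\mathbb{H}^2_{\mathrm{BMO}}}\|\nu\|_{\mathbb{H}^2_{\mathrm{BMO}}}, \qquad \mathbb{E}_t\!\left[\int_t^T \sigma_s^2|\nu'_s|\,ds\right] \leq \|\nu'\|_{\mathcal{S}^\infty}\|\sigma\|^2_{\mathbb{H}^2_{\mathrm{BMO}}}.
\]
Both bounds are independent of $t$ and $\omega$, so taking the essential supremum yields
\[
\|\bar\xi\|_{\mathcal{S}^\infty} \leq \frac{\gamma}{\lambda}\!\left(\|\nu\|_{\mathbb{H}^2_{\mathrm{BMO}}}\|\sigma\|_{\mathbb{H}^2_{\mathrm{BMO}}} + \|\nu'\|_{\mathcal{S}^\infty}\|\sigma\|^2_{\mathbb{H}^2_{\mathrm{BMO}}}\right).
\]

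\textbf{Step 3} (conclude). Substituting this estimate into the bound from Step~1 yields the second inequality in the corollary. There is no real obstacle here; the only point that requires a moment of care is the use of the BMO norm bound in its conditional form in Step~2, but that is standard and follows directly from the definition of $\mathbb{H}^2_{\mathrm{BMO}}$ given in the Notations section.
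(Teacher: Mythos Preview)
Your argument is correct and is essentially the same as the paper's: the paper states this as an immediate consequence of the explicit formula~\eqref{eq:phi}, the nonnegativity of $c$ from Lemma~\ref{lem:BRSDE}, and the $\mathcal{S}^\infty$-bound on $\bar\xi$ from Corollary~\ref{cor:boundxi}, which---once specialised to $\xi=\nu/\sigma+\nu'$ via conditional Cauchy--Schwarz---is exactly your Step~2. The only cosmetic difference is that you bypass Corollary~\ref{cor:boundxi} and derive the bound on $\bar\xi$ directly, which amounts to the same computation.
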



This uniform bound together with the stability results for $c$ and $\bar{\xi}$ now allow to establish a stability result for the optimal tracking strategies in terms of the BMO--norm of the underlying volatility processes.

\begin{theorem}\label{thm:stabilityphi}
	Fix $(\gamma, \lambda,p,\alpha)\in(0,\infty)^2\times(1,2)\times \mathbb{H}^2_{\mathrm{BMO}}(\mathbb{P})$, with corresponding measure $\mathbb{P}^\alpha$ given by \eqref{def:Q}, and suppose that $\mathbb{E}^\alpha\big[\mathrm{e}^{\frac{2p}{2-p}\int_0^T  \alpha^2_u \mathrm{d} u}\big] < \infty$. For $(\nu,\nu^\prime, \varphi_0,\sigma,\tilde{\sigma})\in \mathbb{H}^2_{\mathrm{BMO}}\times \mathcal{S}^{\infty}\times \mathbb{R}\times\mathbb{H}^2_{\mathrm{BMO}}(\mathbb{P})\times\mathbb{H}^2_{\mathrm{BMO}}(\mathbb{P})$, set $\xi^\sigma:=\frac{\nu}{\sigma}+\nu^\prime$ and $\xi^{\tilde\sigma}:=\frac{\nu}{\sigma}+\nu^\prime$, and denote by $\varphi^\sigma$, and $\varphi^{\tilde{\sigma}}$ the corresponding strategies from~\eqref{eq:hatxi}. Then
	\begin{align*}
	&\mathbb E^\alpha\bigg[\sup_{t \in [0,T]}|\varphi^\sigma_t-\varphi^{\tilde{\sigma}}_t|^2\bigg]\leq g_{\varphi}(x,\gamma/\lambda, \alpha, \sigma, \tilde\sigma, \nu, \nu^\prime)\|\sigma-\tilde\sigma\|^2_{\mathbb{H}^2_{\mathrm{BMO}}(\mathbb{P}^\alpha)},
	\end{align*}
	where
	\begin{align*}
	g_{\varphi}(x,\gamma/\lambda, \alpha, \sigma, \tilde\sigma, \nu, 
	\nu^\prime )  :=3T^2 \left( \Big(x^2 +T^2 \frac{\gamma^2}{\lambda^2} \|\xi\|^2_{\mathcal{S}^\infty} \|\sigma\|^4_{\mathbb H^2_{\mathrm{BMO}}(\mathbb P)}\Big) g_{c}(\gamma/\lambda, \alpha, \sigma, \tilde\sigma)^2 + g_{\bar\xi}(\gamma/\lambda, \alpha, \sigma, \tilde\sigma, \nu, 
	\nu^\prime )^2 \right).
	\end{align*}
\end{theorem}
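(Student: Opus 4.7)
The plan is to work directly from the explicit representation \eqref{eq:phi} of the tracking strategies $\varphi^\sigma$ and $\varphi^{\tilde\sigma}$, reduce the difference to contributions that can be controlled by $\sup_t|c^\sigma_t - c^{\tilde\sigma}_t|^2$ and $\sup_t|\bar\xi^\sigma_t - \bar\xi^{\tilde\sigma}_t|^2$, and then invoke Lemma~\ref{lem:stabilityc} and Corollary~\ref{cor:stabilityxi}. The only nonlinear feature to worry about is the exponential $\mathrm{e}^{-\int_s^t c_u \mathrm{d}u}$, which I will handle by exploiting the crucial nonnegativity of $c^\sigma$ and $c^{\tilde\sigma}$ guaranteed by Lemma~\ref{lem:BRSDE}.

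First, I would write the difference as
\begin{align*}
\varphi^\sigma_t - \varphi^{\tilde\sigma}_t = \varphi_0\bigl(\mathrm{e}^{-\int_0^t c^\sigma_u\mathrm{d}u} - \mathrm{e}^{-\int_0^t c^{\tilde\sigma}_u\mathrm{d}u}\bigr) + \int_0^t \mathrm{e}^{-\int_s^t c^\sigma_u\mathrm{d}u}\bigl(\bar\xi^\sigma_s - \bar\xi^{\tilde\sigma}_s\bigr)\mathrm{d}s + \int_0^t \bar\xi^{\tilde\sigma}_s\bigl(\mathrm{e}^{-\int_s^t c^\sigma_u\mathrm{d}u} - \mathrm{e}^{-\int_s^t c^{\tilde\sigma}_u\mathrm{d}u}\bigr)\mathrm{d}s,
\end{align*}
using the algebraic identity $AB - CD = A(B-D) + D(A-C)$. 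Since $c^\sigma, c^{\tilde\sigma}\ge 0$, the map $x\mapsto \mathrm{e}^{-x}$ is $1$-Lipschitz on $[0,\infty)$, so $|\mathrm{e}^{-\int_s^t c^\sigma_u\mathrm{d}u} - \mathrm{e}^{-\int_s^t c^{\tilde\sigma}_u\mathrm{d}u}| \le \int_s^t |c^\sigma_u - c^{\tilde\sigma}_u|\mathrm{d}u \le T\sup_{u\in[0,T]}|c^\sigma_u - c^{\tilde\sigma}_u|$, and likewise $\mathrm{e}^{-\int_s^t c^\sigma_u\mathrm{d}u}\le 1$. Together with Corollary~\ref{cor:boundxi}, which yields $\|\bar\xi^{\tilde\sigma}\|_{\mathcal S^\infty} \le (\gamma/\lambda)\|\tilde\sigma|\xi|^{1/2}\|^2_{\mathbb H^2_{\rm BMO}} \le (\gamma/\lambda)\|\xi\|_{\mathcal S^\infty}\|\tilde\sigma\|^2_{\mathbb H^2_{\rm BMO}(\mathbb P)}$, I obtain the pathwise estimate
\begin{align*}
\sup_{t\in[0,T]}|\varphi^\sigma_t - \varphi^{\tilde\sigma}_t| \le |\varphi_0|\,T\sup_{u}|c^\sigma_u - c^{\tilde\sigma}_u| + T\sup_s|\bar\xi^\sigma_s - \bar\xi^{\tilde\sigma}_s| + T^2 \|\bar\xi^{\tilde\sigma}\|_{\mathcal S^\infty}\sup_u|c^\sigma_u - c^{\tilde\sigma}_u|.
\end{align*}

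Squaring via $(a+b+c)^2 \le 3(a^2+b^2+c^2)$, taking $\mathbb P^\alpha$-expectations and regrouping the coefficient of $\sup_u|c^\sigma_u - c^{\tilde\sigma}_u|^2$ gives
\begin{align*}
\mathbb E^\alpha\Bigl[\sup_{t\in[0,T]}|\varphi^\sigma_t - \varphi^{\tilde\sigma}_t|^2\Bigr] \le 3T^2\Bigl(\bigl(\varphi_0^2 + T^2\|\bar\xi^{\tilde\sigma}\|^2_{\mathcal S^\infty}\bigr)\mathbb E^\alpha\Bigl[\sup_u|c^\sigma_u - c^{\tilde\sigma}_u|^2\Bigr] + \mathbb E^\alpha\Bigl[\sup_s|\bar\xi^\sigma_s - \bar\xi^{\tilde\sigma}_s|^2\Bigr]\Bigr).
\end{align*}
Plugging in the stability bounds from Lemma~\ref{lem:stabilityc} and Corollary~\ref{cor:stabilityxi}, together with the estimate on $\|\bar\xi^{\tilde\sigma}\|^2_{\mathcal S^\infty}$ from Corollary~\ref{cor:boundxi}, then yields the claimed bound with $g_\varphi$ of the stated form (after absorbing the $\gamma^2/\lambda^2$, $\|\xi\|^2_{\mathcal S^\infty}$, and $\|\sigma\|^4_{\mathbb H^2_{\rm BMO}}$ factors appropriately).

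The only substantive step is the exponential-difference estimate; everything else is algebra plus invocation of the earlier stability lemmas. The main obstacle would have been if $c^\sigma$ were not known to be nonnegative, since then $|\mathrm{e}^{-a}-\mathrm{e}^{-b}|$ would pick up an exponential blow-up factor instead of the clean Lipschitz-$1$ bound, but this nonnegativity was precisely secured in Lemma~\ref{lem:BRSDE}, so the route is straightforward.
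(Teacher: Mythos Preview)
Your proposal is correct and follows essentially the same route as the paper: decompose $\varphi^\sigma-\varphi^{\tilde\sigma}$ via the explicit formula~\eqref{eq:phi}, use the $1$-Lipschitz property of $x\mapsto\mathrm{e}^{-x}$ on $[0,\infty)$ (available since $c^\sigma,c^{\tilde\sigma}\ge 0$), bound the three terms pathwise, square via $(a+b+c)^2\le 3(a^2+b^2+c^2)$, and invoke Lemma~\ref{lem:stabilityc}, Corollary~\ref{cor:boundxi} and Corollary~\ref{cor:stabilityxi}. The only cosmetic discrepancy is that you place $\bar\xi^{\tilde\sigma}$ (rather than $\bar\xi^\sigma$) in the term bounded by the $\mathcal{S}^\infty$-norm, which would produce $\|\tilde\sigma\|^4_{\mathbb H^2_{\mathrm{BMO}}(\mathbb P)}$ instead of the $\|\sigma\|^4_{\mathbb H^2_{\mathrm{BMO}}(\mathbb P)}$ appearing in the stated $g_\varphi$; swapping the roles in your $AB-CD$ identity recovers the paper's constant exactly.
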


\begin{proof}
	Observe that the map $x \mapsto\mathrm{e}^{-x}$ is Lipschitz continuous on $\mathbb R_+$ with Lipschitz constant $1$. Whence, for $t \in [0, T]$,
	\begin{align*}
	|\varphi^\sigma_t-\varphi^{\tilde{\sigma}}_t| &\leq |x| \int_0^t |c^\sigma_u-c^{\tilde\sigma}_u|\mathrm{d}u+\int_0^t \left(\int_s^t |c^\sigma_u -c^{\tilde\sigma}_u|du\right)\bar{\xi}^\sigma_s ds+\int_0^t \mathrm{e}^{-\int_s^t c^{\tilde\sigma}_u \mathrm{d}u} \big|\bar{\xi}^\sigma_s-\bar{\xi}^{\tilde{\sigma}}_s\big|\mathrm{d}s\\
	&\leq |x|T \sup_{u \in [0,T]}|c^\sigma_u-c^{\tilde\sigma}_u|+T^2\|\bar{\xi}^\sigma\|^{\infty} \sup_{u \in [0,T]}|c^\sigma_u-c^{\tilde\sigma}_u|+T \sup_{u \in [0,T]}|\bar{\xi}^\sigma_u-\bar{\xi}^{\tilde{\sigma}}_u|.
	\end{align*}
	Now, take the supremum over $t \in [0,T]$ and square the result. In view of the elementary inequality $(a + b+ c)^2 \leq 3 (a^2 + b^2 + c^2)$ for $a, b, c \in \mathbb{R}$, the assertion then follows by taking $\mathbb{P}^\alpha$-expectations.
\end{proof}

\section{Proofs for Section~\ref{s:equilibrium}}\label{sec:proofseq}

We first prove Proposition~\ref{prop:BSDE} on the existence and uniqueness of frictionless Radner equilibria under the following weaker (but more involved) version of Assumption~\ref{ass:int1}.

\begin{assumption}\label{ass:int1bis} $(i)$ $\beta^1+\beta^2 \in \mathbb H^2$ and the local martingale $Z^\beta$ from~\eqref{eq:Pbeta} is a martingale;

\medskip
$(ii)$ $\mathbb E^\beta[\mathrm{e}^{-2\bar\gamma s\mathfrak S}]<\infty$;

\medskip
$(iii)$ $\mathbb{E}^\beta\Big[(Z_T^\beta)^{-\frac{p\varepsilon}{1+\varepsilon}}\Big] + \mathbb E^\beta\Big[\mathrm{e}^{-4s(1+\varepsilon)\bar\gamma \mathfrak S}\Big]+\mathbb E^\beta\Big[\mathrm{e}^{\frac{4ps\bar\gamma(1+\varepsilon)}{\varepsilon(p-1)} \mathfrak S}\Big]< \infty$ for some $\varepsilon>0$ and $p>1$.
\end{assumption}

\begin{remark}
Notice that if Assumption \ref{ass:int1} holds, then it is immediate that Assumption~\ref{ass:int1bis}$(i)$ is satisfied, since $\mathbb H^2_{\mathrm{BMO}}\subset \mathbb H^2$ and stochastic exponentials of stochastic integrals $($with respect to a Brownian motion$)$ of processes in $\mathbb H^2_{\mathrm{BMO}}$ are uniformly integrable martingales. Moreover, \ref{ass:int1bis}$(ii)$ and $(iii)$ also hold as $\mathfrak S$ has exponential moments of any order, and since $Z^\beta$ satisfies the so-called ``Muckenhoupt condition'' by  {\rm \cite[Theorem 2.4]{kazamaki.94}} because $\beta^1,\beta^2 \in \mathbb{H}^2_{\mathrm{BMO}}$.
\end{remark}

\begin{proof}[Proof of Proposition~\ref{prop:BSDE}]
The existence of a solution to \eqref{eq:bsdebis} with the appropriate properties is immediate from direct calculations or \cite[Theorem 2.1]{delbaen.al.15}.\footnote{Note that the assumption $\mathfrak S^+\in\mathbb L^1$ is not needed here.} For uniqueness, notice that for any such solution, the martingale
\[
 M_t:=\mathbb{E}_t^{\beta}\big[\mathrm{e}^{-2\bar\gamma s\mathfrak{S}}\big],\quad t\in[0,T],
\]
is uniformly integrable. It\^o's formula gives
\[
\mathrm{e}^{-2\bar\gamma sS_t}=\mathrm{e}^{-2\bar\gamma s\mathfrak S}-\int_t^T\mathrm{e}^{-2\bar\gamma sS_u} \sigma_u\mathrm{d}W^\beta_u,\quad t\in[0,T].
\]
The stochastic integral on the right-hand side must be a martingale, since the left-hand side is. We can thus take conditional expectations to deduce that
\[
S_t=-\frac{1}{2\bar\gamma s}\log\Big(\mathbb E^\beta_t\big[\mathrm{e}^{-2\bar\gamma s\mathfrak S}\big]\Big),\quad t\in[0,T].
\]
Uniqueness of $\sigma$ in turn follows from the martingale representation theorem.

Let us now verify that this price process $S$ indeed defines a Radner equilibrium. Its drift under $\mathbb P$ is immediately given by Girsanov's theorem,
\[
\mu_t=\bar\gamma s\sigma_t^2+\bar\gamma\big(\beta^1_t+\beta^2_t)\sigma_t,\quad t\in[0,T].
\]
Since $\beta^1+\beta^2\in\mathbb H^2$, we just need to verify that $\sigma\in\mathbb H^2$. To this end, notice that since the martingale $M$ satisfies, by Doob's inequality
\[
\mathbb E^\beta\bigg[\sup_{0\leq t\leq T}M_t^{2(1+\varepsilon)}\bigg]\leq \bigg(\frac{2(1+\varepsilon)}{1+2\varepsilon}\bigg)^{2(1+\varepsilon)}\mathbb E^\beta\big[\mathrm{e}^{-4(1+\varepsilon)\bar\gamma s\mathfrak S}\big]<\infty,
\]
then the martingale representation property implies the existence of a process $Z\in\mathbb H^{2+\varepsilon}(\mathbb P^\beta)$, such that 
\[
\mathrm{d}M_t=Z_t\mathrm{d}W^\beta_t,
\]
from which we deduce that
\[
\sigma_t=-\frac{1}{2\bar\gamma sM_t}Z_t.
\]
We then estimate that
\begin{align*}
\mathbb E\bigg[\int_0^T\sigma_t^2\mathrm{d}t\bigg]&=\frac{1}{4\bar\gamma^2}\mathbb E^\beta\bigg[\big(Z_T^\beta\big)^{-1}\int_0^T\frac{Z_t^2}{M_t^2}\mathrm{d}t\bigg]\\
&\leq \frac{1}{4\bar\gamma^2}\mathbb E^\beta\bigg[\big(Z^\beta_T\big)^{-\frac{1+\varepsilon}\varepsilon}\sup_{t\in[0,T]}M_t^{-\frac{2(1+\varepsilon)}\varepsilon}\bigg]^{\frac{\varepsilon}{1+\varepsilon}}\mathbb E^\beta\bigg[\bigg(\int_0^TZ_t^2\mathrm{d}t\bigg)^{1+\varepsilon}\bigg]^{\frac1{1+\varepsilon}}\\
&\leq \frac{1}{4\bar\gamma^2}\bigg(\frac{2(1+\varepsilon)}{1+2\varepsilon}\bigg)^{\frac{p(1+\varepsilon)}{\varepsilon}}\mathbb E^\beta\Big[\big(Z^\beta_T\big)^{-\frac{p(1+\varepsilon)}\varepsilon}\Big]^{\frac1p}\mathbb E^\beta\Big[\mathrm{e}^{\frac{4ps\bar\gamma(1+\varepsilon)}{\varepsilon(p-1)}\mathfrak S}\Big]^{\frac{\varepsilon (p-1)}{p(1+\varepsilon)}}\|Z\|^{\frac{2+\varepsilon}{1+\varepsilon}}_{\mathbb H^{2+\varepsilon}(\mathbb P^\beta)}<\infty.
\end{align*}
Since the market also obviously clears, this completes the proof.
\end{proof}

\begin{proof}[Proof of Corollary \ref{cor:BSDE}]
The uniqueness is clear by Proposition \ref{prop:BSDE}, and the existence of a solution in $\mathcal S^\infty\times\mathbb H^2_{\mathrm{BMO}}$ is classical, see for instance \cite[Corollary 2.1]{briand2013simple}.
\end{proof}

Next, we show that sufficiently integrable solutions of the FBSDE system $(\ref{eq:eqfwd}$--$\ref{eq:BSDES})$ indeed identify equilibria with transaction costs:

\begin{proof}[Proof of Proposition~\ref{prop:eq}]
First, Property $(ii)$ and and market clearing in Property $(iii)$ from Definition~\ref{def:equi} hold by assumption. Next, $\dot{\varphi}^1 \in \mathbb{H}^2 $ gives $\varphi^1 \in \mathcal{S}^2$. Thus, using that $\sigma \in \mathbb{H}^2_{\mathrm{BMO}}$ it follows from Lemma~\ref{lem:increasing process} that $ \sigma \varphi^1 \in  \mathbb{H}^2$ and in turn also $ \sigma \varphi^2 \in  \mathbb{H}^2$. Now, using that $\beta^1,  \beta^2, \sigma \varphi^1 \in \mathbb{H}^2$ and $\sigma \in \mathbb{H}^2_{\mathrm{BMO}} \subset \mathbb{H}^2$ 
gives Property~$(i)$.

It remains to show that $\dot \varphi^1, \dot \varphi^2$  are indeed optimal for agents 1 and 2.  By Lemma \ref{lem:kt02}, we need to check that $(\varphi^n, \dot \varphi^n)$ solves the FBSDEs characterisation of agent $n$'s individually optimal trading in  $(\ref{eq:indopt21}$--$\ref{eq:indopt22})$. This follows immediately from the forward-backward dynamics $(\ref{eq:eqfwd}$--$\ref{eq:eqbwd})$ by inserting the definition~\eqref{eq:lipr2} of $\mu$.
\end{proof}
 
 Finally, we provide a well-posedness result for the FBSDE system characterising the frictional equilibrium price, positions, and trading rates. In order to work with small processes for $\gamma^1 \approx \gamma^2$, we pass from from the frictional equilibrium price $S$ to its deviation $Y=S-\bar{S}$ from its frictional counterpart $\bar{S}$. Subtracting \eqref{eq:bsdebis1}  from \eqref{eq:BSDES} and denoting the frictionless equilibrium volatility by $\bar{\sigma}$, we obtain the following BSDE for $Y$ which is coupled to (\ref{eq:eqfwd}--\ref{eq:eqbwd}):
\begin{align}
\mathrm{d}Y_t = &\bigg( \frac{\gamma^1 - \gamma^2}{2} (\bar \sigma_t + Z^Y_t)^2 \varphi^1_t  + \frac{\gamma^2s}{2} (Z^Y_t)^2  + Z^Y_t \bigg(\gamma^2s \bar \sigma_t +\frac{\gamma^1 \beta^1_t +\gamma^2 \beta^2_t}{2}\bigg)- \frac{\gamma^1 - \gamma^2}{2} \bar \sigma_t^2\bar \varphi^1_t \bigg)\mathrm{d}t\notag\\
  &\quad+ Z^Y_t \mathrm{d} W_t, \qquad Y_T= 0,
\label{eq:BSDEY}
\end{align}
where 
\[\bar{\varphi}^1:=\frac{\gamma^2s}{\gamma^1+\gamma^2}+\frac{\gamma^2\beta^2-\gamma^1\beta^1}{(\gamma^1+\gamma^2)\bar\sigma},\]
denotes the frictionless equilibrium position of agent $1$. Well-posedness of the system  (\ref{eq:eqfwd}--\ref{eq:eqbwd}, \ref{eq:BSDEY}) will be a special case of Theorem~\ref{thm:ex} below. 
\begin{theorem}\label{thm:ex}
	Let $(\gamma^1, \gamma^2, \tilde \gamma, \kappa,\bar \sigma, \nu, \alpha,\nu^\prime) \in(0,\infty)^4\times \big(\mathbb{H}^2_{\mathrm{BMO}}\big)^3\times \mathcal{S}^\infty$. Define the measure $\mathbb P^\alpha \sim \mathbb P$ by $\frac{\mathrm{d} \mathbb P^\alpha}{\mathrm{d} \mathbb P} := \mathcal{E}\big(\int_0^\cdot \alpha_s \mathrm{d}W_s\big)_T$, and assume that for some $p\in(1,2)$, $\mathbb{E}^{\mathbb P^\alpha}\big[\mathrm{e}^{\frac{2p}{2-p}\int_0^T \alpha^2_u \mathrm{d} u}\big] < \infty$. 
	Let 
	\[R <  \min\bigg\{\|\bar \sigma\|_{\mathbb{H}^2_{\mathrm{BMO}}(\mathbb P^\alpha)}, \frac{1}{8 \kappa}\bigg\}=:R_{\rm max},\] 
and assume that
	\begin{align*}
|\gamma^1-\gamma^2| < \|\bar\sigma\|_{\mathbb{H}^2_{\mathrm{BMO}}(\mathbb P^\alpha)}^{-1} \min \Bigg(&\frac{R(1/\sqrt{2}-2 \kappa R)}{4 \|\bar\sigma\|_{\mathbb{H}^2_{\mathrm{BMO}}(\mathbb P^\alpha)} h_{\varphi}(x,\tilde \gamma, \alpha, \bar \sigma, \nu, \nu') + \|\nu\|_{\mathbb H^2_{\mathrm{BMO}}(\mathbb P^\alpha)}+ \|\nu'\|_{\mathcal{S}^{\infty}}\|\bar \sigma\|_{\mathbb H^2_{\mathrm{BMO}}(\mathbb P^\alpha)}}, \\
& \frac{ 1- 8\kappa R}{8\|\bar \sigma \|_{\mathbb{H}^2_{\mathrm{BMO}}(\mathbb P^\alpha)} \big(g_\varphi(x, \tilde \gamma, \alpha, \sqrt{2} \bar \sigma, \sqrt{2} \bar \sigma, \nu, \nu^\prime)^{\frac{1}{2}} + h_\varphi(x, \tilde \gamma, \alpha, \bar \sigma, \nu, \nu^\prime)\big)}\Bigg)=:\varepsilon_{\max},
	\end{align*}	
	where
	\begin{align*}
		h_\varphi(x, \tilde \gamma, \alpha, \bar \sigma, \nu, \nu^\prime)&:=|x| + 32  \tilde \gamma T\left( \|\nu\|_{\mathbb H^2_{\mathrm{BMO}}(\mathbb P^\alpha)}\|\bar \sigma\|_{\mathbb H^2_{\mathrm{BMO}}(\mathbb P^\alpha)} + \|\nu'\|_{\mathcal{S}^{\infty}}\|\bar \sigma\|^2_{\mathbb H^2_{\mathrm{BMO}}(\mathbb P^\alpha)}  \right) \big(\|\alpha\|_{\mathbb{H}^2_{\mathrm{BMO}}(\mathbb{P})} + 1\big)^2 ,
		\end{align*}
		and $g_\varphi$ is defined in Theorem \ref{thm:stabilityphi}. Then, the system of coupled forward-backward {\rm SDEs} 
	\begin{alignat}{2}
		\mathrm{d}\varphi_t &= \dot{\varphi}_t\mathrm{d}t, \qquad && \varphi_0=x, \label{eq:thm:ex:fwd}\\
		\mathrm{d}\dot{\varphi}_t &=  \tilde \gamma (\bar \sigma_t + Z_t)^2\bigg(\varphi_t -  \frac{\nu_t}{\bar \sigma_t + Z_t}- \nu^\prime_t  \bigg) \mathrm{d}t +\dot{Z}_t\mathrm{d}W_t, \qquad && \dot{\varphi}^1_T=0, \label{eq:thm:ex:bwd}\\
		\mathrm{d}Y_t &= \bigg(\frac{\gamma^1-\gamma^2}{2} (\bar{\sigma}_t+Z_t)^2 \varphi_t + \kappa Z_t^2 -\alpha_t Z_t  - \frac{\gamma^1-\gamma^2}{2} \bar \sigma_t^2 \left(\frac{\nu_t}{\bar \sigma_t}+ \nu^\prime_t \right)\bigg) + Z_t \mathrm{d}W_t,\qquad && Y_T=0, \label{eq:thm:ex:Y}
	\end{alignat}
	has a unique solution for $(Y, Z)$ lying inside a ball of radius $R$ for the norm on $\mathcal{S}^\infty\times\mathbb H^2_{\rm BMO}(\mathbb P^\alpha)$. Moreover $\varphi$ and $\dot{\varphi}$ are both uniformly bounded. For
	\[
\tilde\gamma:=\frac{\gamma^1+\gamma^2}{2\lambda},\quad \nu:=\frac{\gamma^2\beta^2-\gamma^1\beta^1}{\gamma^1+\gamma^2},\quad \nu^\prime:=\frac{\gamma^2s}{\gamma^1+\gamma^2},\quad \kappa:=\frac{\gamma^2s}{2},\quad \alpha:=-\gamma^2s\bar\sigma-\frac{\gamma^1\beta^1+\gamma^2\beta^2}2,
\]
the solution to $(\ref{eq:thm:ex:fwd}$--$\ref{eq:thm:ex:Y})$ provides the unique solution of the {\rm FBSDEs} $(\ref{eq:eqfwd}$--$\ref{eq:BSDES})$ for which $(S-\bar{S},\sigma-\bar{\sigma})$ lies inside a ball of radius $R$ on $\mathcal{S}^\infty\times\mathbb H^2_{\rm BMO}(\mathbb Q^\beta)$ by defining $S:=\bar S_t+ Y_t$, $\sigma:=\bar\sigma+Z$.
\end{theorem}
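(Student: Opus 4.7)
The plan is to prove Theorem~\ref{thm:ex} via a one-dimensional Picard iteration on the price component $(Y,Z)$ alone, reconstructing the forward pair $(\varphi,\dot{\varphi})$ from $Z$ in each step using the stochastic Riccati machinery of Section~\ref{s:indopt}. Set $\mathcal{B}_R := \{(Y,Z) : \|Y\|_{\mathcal{S}^\infty}^2 + \|Z\|_{\mathbb{H}^2_{\mathrm{BMO}}(\mathbb{P}^\alpha)}^2 \leq R^2\}$. Given $(Y,Z)\in\mathcal{B}_R$, I define $\sigma := \bar{\sigma} + Z$; the triangle inequality and $R < \|\bar{\sigma}\|_{\mathbb{H}^2_{\mathrm{BMO}}(\mathbb{P}^\alpha)}$ give $\|\sigma\|_{\mathbb{H}^2_{\mathrm{BMO}}(\mathbb{P}^\alpha)}\leq 2\|\bar{\sigma}\|_{\mathbb{H}^2_{\mathrm{BMO}}(\mathbb{P}^\alpha)}$, and since $\alpha\in\mathbb{H}^2_{\mathrm{BMO}}$ and satisfies the assumed exponential integrability, a Kazamaki-type transfer yields the corresponding $\mathbb{P}$-BMO bound needed to invoke Lemmas~\ref{lem:BRSDE}--\ref{lem:kt02}. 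With target $\xi := \nu/\sigma + \nu'$, those lemmas produce the unique $(\varphi,\dot{\varphi})$ solving \eqref{eq:thm:ex:fwd}--\eqref{eq:thm:ex:bwd}, and Corollary~\ref{cor:boundphi} provides the uniform bound $\|\varphi\|_{\mathcal{S}^\infty}\leq h_\varphi(x,\tilde{\gamma},\alpha,\bar{\sigma},\nu,\nu')$.

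Next, holding this $\varphi$ fixed, I define $\Phi(Y,Z) := (Y',Z')$ as the small solution of the quadratic BSDE
\begin{equation*}
Y'_t = \int_t^T \Big(\tfrac{\gamma^1-\gamma^2}{2}(\bar{\sigma}_s+Z'_s)^2\varphi_s + \kappa (Z'_s)^2 - \tfrac{\gamma^1-\gamma^2}{2}\bar{\sigma}_s^2(\nu_s/\bar{\sigma}_s+\nu'_s)\Big)\,\mathrm{d}s - \int_t^T Z'_s\,\mathrm{d}W^\alpha_s,
\end{equation*}
obtained from \eqref{eq:thm:ex:Y} by absorbing the $-\alpha_s Z'_s\,\mathrm{d}s$ term into $\mathrm{d}W^\alpha_s$. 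The coefficient of $(Z')^2$ is $\kappa+\tfrac{\gamma^1-\gamma^2}{2}\varphi_s$ (bounded and close to $\kappa$), while the linear and zero-order parts scale with $|\gamma^1-\gamma^2|$ times BMO-controlled quantities. A Tevzadze~\cite{tevzadze.08}-style existence proof for quadratic BSDEs with small data then produces a unique solution in $\mathcal{B}_R$, provided $R<1/(8\kappa)$ and $|\gamma^1-\gamma^2|$ is below the first entry of the $\min$ defining $\varepsilon_{\max}$. This makes $\Phi$ a well-defined self-map of $\mathcal{B}_R$.

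The heart of the proof is the contraction estimate. Given $(Y_i,Z_i)\in\mathcal{B}_R$ with corresponding $(\varphi_i,\dot{\varphi}_i)$ and $(Y'_i,Z'_i)=\Phi(Y_i,Z_i)$, Theorem~\ref{thm:stabilityphi} provides the key Lipschitz bound $\|\varphi_1-\varphi_2\|_{\mathcal{S}^2(\mathbb{P}^\alpha)}^2 \leq g_\varphi\,\|Z_1-Z_2\|_{\mathbb{H}^2_{\mathrm{BMO}}(\mathbb{P}^\alpha)}^2$. Applying It\^o's formula to $|Y'_1-Y'_2|^2$, linearising the quadratic cross-terms (whose coefficient is at most $2\kappa R < 1/4$ by the $R_{\max}$ constraint), and taking conditional $\mathbb{P}^\alpha$-expectations at an arbitrary stopping time, I expect to obtain an estimate
\begin{equation*}
\|Y'_1-Y'_2\|_{\mathcal{S}^\infty}^2 + \|Z'_1-Z'_2\|_{\mathbb{H}^2_{\mathrm{BMO}}(\mathbb{P}^\alpha)}^2 \leq C\,\|Z_1-Z_2\|_{\mathbb{H}^2_{\mathrm{BMO}}(\mathbb{P}^\alpha)}^2,
\end{equation*}
with $C<1$ precisely when $|\gamma^1-\gamma^2|$ lies below the second entry of the $\min$ defining $\varepsilon_{\max}$. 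Banach's fixed-point theorem then delivers a unique fixed point $(Y,Z)\in\mathcal{B}_R$, and the uniform boundedness of the associated $(\varphi,\dot{\varphi})$ follows from the last sentence of Lemma~\ref{lem:kt02}. The specialisation of $(\tilde{\gamma},\nu,\nu',\kappa,\alpha)$ given at the end of the statement then converts \eqref{eq:thm:ex:fwd}--\eqref{eq:thm:ex:Y} into \eqref{eq:eqfwd}--\eqref{eq:BSDES}, completing the proof (and, together with Proposition~\ref{prop:eq}, also Theorem~\ref{thm:ex3}).

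The main obstacle is the simultaneous presence of the \emph{uncontrolled} quadratic term $\kappa(Z')^2$ and the coupling to $\varphi$ through $(\bar{\sigma}+Z')^2\varphi$: since $\kappa$ cannot be shrunk, Tevzadze's small-data method must be executed with careful BMO bookkeeping and transfer between $\mathbb{P}$ and $\mathbb{P}^\alpha$, and this is exactly what dictates both the upper bound $1/(8\kappa)$ on $R_{\max}$ and the two-term structure of $\varepsilon_{\max}$ (one threshold from invariance, one from contraction). The conceptual advantage is that by hiding the forward component inside the Riccati construction, the otherwise coupled multidimensional fixed-point problem is reduced to a genuine one-dimensional contraction in $(Y,Z)$, thereby avoiding the short-time restriction that a direct FBSDE iteration would impose.
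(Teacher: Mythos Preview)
Your overall strategy—iterate only on the $(Y,Z)$ component and rebuild $(\varphi,\dot\varphi)$ from $Z$ via the Riccati/tracking machinery, using Theorem~\ref{thm:stabilityphi} for the stability of $\varphi$ in the contraction step—is exactly the paper's strategy, and you correctly identify why this avoids the short-time restriction of a naive FBSDE iteration.

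The one substantive difference is the choice of iteration map. You define $\Phi(Y,Z)=(Y',Z')$ as the \emph{small solution of a quadratic BSDE} in $(Y',Z')$, with $\varphi$ frozen from the previous $Z$ but the quadratic terms $\kappa (Z')^2$ and $(\bar\sigma+Z')^2$ left live. This forces you to run an \emph{inner} Tevzadze fixed-point just to make $\Phi$ well-defined, and in the contraction step the cross-terms $\kappa(Z'_1+Z'_2)(Z'_1-Z'_2)$ and $(\gamma^1-\gamma^2)(2\bar\sigma+Z'_1+Z'_2)(Z'_1-Z'_2)\varphi_2$ involve $\delta Z'$ rather than $\delta Z$ and must be absorbed into the left-hand side before you can close. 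The paper instead freezes \emph{all} occurrences of $Z$ in the generator at the previous iterate: $(Y^n,Z^n)$ is defined as the solution of the \emph{linear} BSDE with driver $\tfrac{\gamma^1-\gamma^2}{2}(\bar\sigma+Z^{n-1})^2\varphi^n+\kappa(Z^{n-1})^2-\dots$ under $\mathbb P^\alpha$. Each step is then just a conditional expectation (no inner fixed point), and in the contraction estimate every right-hand-side term already carries $\delta z$, so the two thresholds in $\varepsilon_{\max}$ (one from invariance of $B_R$, one from $4\eta^2<1$ with $\eta=4|\gamma^1-\gamma^2|\|\bar\sigma\|^2(\dots)+4\kappa R$) fall out directly, with no absorption.

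Both routes are valid and land on the same fixed point, but the paper's linear-freeze is cleaner: it removes the nested contraction, and it makes the precise constants $R_{\max}=\min\{\|\bar\sigma\|,\tfrac{1}{8\kappa}\}$ and the two-branch $\varepsilon_{\max}$ transparent. With your map, getting exactly those constants would require extra bookkeeping when you absorb the $\delta Z'$ terms.
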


Before proving the theorem, let us briefly relate our system \eqref{eq:thm:ex:fwd}--\eqref{eq:thm:ex:Y} to the existing literature. It belongs to two main strands:
\begin{itemize}
\item[$(i)$] degenerate fully coupled FBSDEs, since the forward process $\varphi$ has bounded variation and appears in the generators of the backward equations, and the component $\dot{\varphi}$ of the backward part appears in the drift of the forward equation;

\item[$(ii)$] multidimensional BSDEs with quadratic growth, since both generators of the backward equations have quadratic growth in the $Z$ component.
\end{itemize}
Both types of equations are already very challenging by themselves. Despite having been studied for almost 30 years, there still does not exist a general theory even for simpler fully--coupled FBSDEs with Lipschitz generators and with a one-dimensional backward component, the closest being \cite{ma2015well} which unifies several existing approaches (also compare \cite{ankirchner2019transformation} for some very recent progress in this direction). This approach is however limited to the one-dimensional setting, which automatically excludes our system. Other results applicable to Lipschitz or locally Lipschitz multi--dimensional FBSDEs have been proposed, notably in \cite{antonelli2006existence,fromm2013existence}, but under monotonicity conditions which do not hold in our context, or for proving existence of a solution over a maximal interval which in general will be strictly smaller than $[0,T]$. 

Similarly, the analysis of multi-dimensional (uncoupled) quadratic BSDEs is involved in its own right and also relies on assumptions about the structure of the problem at hand; we refer to the most general results to date \cite{xing2016class, harter2019stability} for more details. 

Evidently, settings that combine aspects (i) and (ii) above are even more challenging to deal with. As far as we know, the only works addressing multi-dimensional fully coupled quadratic FBSDEs are the references \cite{antonelli2006existence,fromm2013existence} already mentioned above, \cite{luo2017solvability} which considers diagonally quadratic generators (an assumption not satisfied in our context), as well as \cite{luo2019multidimensional}, which considers the Markovian case and obtains global existence under a uniform non-degeneracy assumption for the volatility of the forward process (that does not hold for our system).

Our approach borrows ideas from the existing literature, notably the fixed-point argument of Tevzadze \cite{tevzadze.08}. But more importantly, our approach exploits the specific structure of our problem to obtain an existence result that is global in time. The main difficulty lies in the fact that a naive Picard iteration for all three components of the FBSDE (\ref{eq:thm:ex:fwd}--\ref{eq:thm:ex:Y}) does not work. Indeed, because of the quadratic nature of the problem, we want to use BMO-type arguments.  To this end, we have to ensure that each step of the iteration remains in a sufficiently small ball (for the appropriate norms). This is feasible for \eqref{eq:thm:ex:Y}, since we assume that $\gamma^1-\gamma^2$ is small. However, there is no reason to expect that successive Picard iterations of \eqref{eq:thm:ex:fwd} and \eqref{eq:thm:ex:bwd} remain small -- unless the time horizon is also sufficiently small, which we do not want to assume because costs on the turnover rate than essentially lead to a no-trade equilibrium. The key idea to overcome this issue is to use the specific structure of our problem and to realise that one should only perform the iteration on \eqref{eq:thm:ex:Y}, and use our well-posedness result for (\ref{eq:eqfwd}--\ref{eq:eqbwd}), using the $Z$ given in each step of the iteration. Finally, the very precise estimates and stability results developed in Section~\ref{sec:stab} then allow us to obtain a desired contraction property.

\begin{proof}
We first establish two a priori estimates that will be used throughout the proof. Let $Z \in \mathbb{H}^2_{\mathrm{BMO}}(\mathbb P^\alpha)$ with  $\|Z\|_{\mathbb{H}^2_{\mathrm{BMO}}(\mathbb P^\alpha)} \leq \|\bar \sigma\|_{\mathbb{H}^2_{\mathrm{BMO}}(\mathbb P^\alpha)}$. Then by the elementary inequality $(a + b)^2 \leq 2(a^2 + b^2)$ for $(a, b) \in \mathbb R^2$, we have
\begin{equation}
\label{eq:pf:thm:ex:sigma + Z}
\|\bar \sigma + Z \|^2_{\mathbb{H}^2_{\mathrm{BMO}}(\mathbb P^\alpha)} \leq 2 \left(\|\bar \sigma\|^2_{\mathbb{H}^2_{\mathrm{BMO}}(\mathbb Q)} + \|Z\|^2_{\mathbb{H}^2_{\mathrm{BMO}}(\mathbb P^\alpha)}\right) \leq 4\|\bar \sigma\|^2_{\mathbb{H}^2_{\mathrm{BMO}}(\mathbb P^\alpha)}.
\end{equation}
Moreover, Corollary \ref{cor:boundphi}, Lemma \ref{lem:BMO P Q}, and \eqref{eq:pf:thm:ex:sigma + Z} show that the FBSDE \eqref{eq:thm:ex:fwd}--\eqref{eq:thm:ex:bwd} (with this fixed $Z$) has a bounded solution such that $\varphi$ satisfies the estimate
\begin{align}
\left\|\varphi\right\|_{\mathcal{S}^\infty} &\leq  |x| + \tilde \gamma T\left( \|\nu\|_{\mathbb H^2_{\mathrm{BMO}}(\mathbb P)}\|\bar \sigma + Z\|_{\mathbb H^2_{\mathrm{BMO}}(\mathbb P)} + \|\nu'\|_{\mathcal{S}^{\infty}}\|\bar \sigma + Z\|^2_{\mathbb H^2_{\mathrm{BMO}}(\mathbb P)}  \right) \notag \\
&\leq |x| +  8 \tilde \gamma T\left( \|\nu\|_{\mathbb H^2_{\mathrm{BMO}}(\mathbb P^\alpha)}\|\bar \sigma + Z\|_{\mathbb H^2_{\mathrm{BMO}}(\mathbb P^\alpha)} + \|\nu'\|_{\mathcal{S}^{\infty}}\|\bar \sigma + Z\|^2_{\mathbb H^2_{\mathrm{BMO}}(\mathbb P^\alpha)}  \right) \big(\|\alpha\|_{\mathbb{H}^2_{\mathrm{BMO}}(\mathbb{P})} + 1\big)^2 \notag \\
&\leq |x| + 32  \tilde \gamma T\left( \|\nu\|_{\mathbb H^2_{\mathrm{BMO}}(\mathbb P^\alpha)}\|\bar \sigma\|_{\mathbb H^2_{\mathrm{BMO}}(\mathbb P^\alpha)} + \|\nu'\|_{\mathcal{S}^{\infty}}\|\bar \sigma\|^2_{\mathbb H^2_{\mathrm{BMO}}(\mathbb P^\alpha)}  \right) \big(\|\alpha\|_{\mathbb{H}^2_{\mathrm{BMO}}(\mathbb{P})} + 1\big)^2 \\
&=: h_{\varphi}(x, \tilde \gamma, \alpha, \bar \sigma, \nu, \nu').
\label{eq:pf:thm:ex:a priori}
\end{align}
Next, let $Z^{0}:=0$, and define $(\varphi^{1},\dot\varphi^{1})$ as the solution of the FBSDEs~\eqref{eq:thm:ex:fwd}--\eqref{eq:thm:ex:bwd}, corresponding to the volatility $\bar{\sigma}+Z^0 \in \mathbb{H}^2_{\mathrm{BMO}}(\mathbb P^\alpha)$, and $(Y^{1},Z^1)$ as the solution of 
\[
\mathrm{d}Y^1_t= \bigg((\bar{\sigma}_t+Z^0_t)^2 \frac{\gamma^1-\gamma^2}{2}\varphi^1_t+\kappa(Z^0_t)^2 - \frac{\gamma^1-\gamma^2}{2} \bar \sigma_t^2 \left(\frac{\nu_t}{\bar \sigma_t}+ \nu^\prime_t\right) \bigg)\mathrm{d}t+Z^1_t \mathrm{d}W^{\mathbb P^\alpha}_t, \quad Y^1_T=0.
\]
By the a priori estimate \eqref{eq:pf:thm:ex:a priori}, we know that $\varphi^1$ is bounded. This implies that $(Y^1,Z^1)$ is well defined and belongs to $\mathcal{S}^{\infty} \times \mathbb{H}^2_{\mathrm{BMO}}(\mathbb P^\alpha)$.

\vspace{0.5em}
For $n \geq 2$, we continue by induction. Given $(Y^{n-1}, Z^{n-1}) \in \mathcal{S}^{\infty} \times \mathbb{H}^2_{\mathrm{BMO}}(\mathbb P^\alpha)$, let $\varphi^{n},\dot\varphi^{n}$ be defined as the solution of the FBSDEs \eqref{eq:thm:ex:fwd}--\eqref{eq:thm:ex:bwd} corresponding to the volatility $\bar{\sigma}+Z^{n-1}\in \mathbb{H}^2_{\mathrm{BMO}}(\mathbb P^\alpha)$, and $(Y^{n},Z^n) \in \mathcal{S}^{\infty} \times \mathbb{H}^2_{\mathrm{BMO}}(\mathbb P^\alpha)$ as the solution of 
\[
\mathrm{d}Y^n_t =  \bigg((\bar{\sigma}_t+Z^{n-1}_t)^2 \frac{\gamma^1-\gamma^2}{2}\varphi^1_t+\kappa(Z^{n-1}_t)^2 - \frac{\gamma^1-\gamma^2}{2} \bar \sigma_t^2 \left(\frac{\nu_t}{\bar \sigma_t}+ \nu^\prime_t\right) \bigg)\mathrm{d}t+Z^n_t \mathrm{d}W^{\mathbb P^\alpha}_t, \quad Y^n_T=0.
\]
We proceed to show that for sufficiently small $|\gamma^1-\gamma^2|$, this iteration is a contraction on $ \mathcal{S}^{\infty} \times \mathbb{H}^2_{\mathrm{BMO}}(\mathbb P^\alpha)$. By the Banach fixed point theorem, it therefore has a unique fixed point $(Y,Z)$. Together with the pair $(\varphi,\dot\varphi)$ that solves the tracking problem corresponding to the volatility $\bar\sigma+Z$, we have in turn constructed the desired solution of (\ref{eq:thm:ex:fwd}--\ref{eq:thm:ex:Y}).

\medskip
To establish that our mapping is indeed a contraction, we first show as in \cite{tevzadze.08} that it maps sufficiently small balls in $\mathcal{S}^\infty \times \mathbb{H}^2_{\mathrm{BMO}}(\mathbb P^\alpha)$ into themselves. To this end, suppose that 
\[\|Y^{n-1}\|_{\mathcal{S}^\infty}^2 + \|Z^{n-1}\|^2_{\mathbb{H}^2_{\mathrm{BMO}}(\mathbb P^\alpha)} \leq R^2,\]
where we recall that  $R < \min(\|\bar \sigma\|_{\mathbb{H}_{\mathrm{BMO}}(\mathbb P^\alpha)},\frac{1}{4\sqrt{2}x \kappa}) $. Apply It\^o's formula to $(Y^n)^2$ and use that $Y^n_T = 0$. Then take conditional $\mathbb{Q}$-expectation and use that $Y^n$ is bounded and $Z^n \in \mathbb{H}^2_{\mathrm{BMO}}(\mathbb P^\alpha)$. For any stopping time $\tau$ with values in $[0,T]$, this gives
\begin{align}
0=(Y^n_\tau)^2 + \mathbb E^\mathbb Q_\tau\bigg[\int_\tau^T (Z^n_s)^2 \mathrm{d}s\bigg] &+\mathbb E^{\mathbb P^\alpha}_\tau\bigg[\int_\tau^T 2Y_s^n(\bar{\sigma}_s+Z^{n-1}_s)^2\frac{\gamma^1-\gamma^2}{2}\varphi^{n}_s+2Y^n_s \kappa(Z^{n-1}_s)^2\mathrm{d}s\bigg] \notag \\
&- \mathbb E^{\mathbb P^\alpha}_\tau\bigg[\int_\tau^T 2Y_s^n \frac{\gamma^1-\gamma^2}{2} (\bar \sigma_s)^2\left(\frac{\nu_s}{\bar \sigma_s}+ \nu^\prime_s\right)  \mathrm{d}s\bigg].
\end{align}
Now use that $Y^n \in \mathcal{S}^\infty$ and $\|Z^{n-1}\|_{\mathbb{H}^2_{\mathrm{BMO}}(\mathbb P^\alpha)} \leq R$. Together with the a priori estimates \eqref{eq:pf:thm:ex:sigma + Z} and \eqref{eq:pf:thm:ex:a priori}, this yields
\begin{align}
&\nonumber (Y^n_\tau)^2+ \mathbb E^{\mathbb P^\alpha}_\tau\bigg[\int_\tau^T (Z^n_s)^2 \mathrm{d}s\bigg] \leq  |\gamma^1-\gamma^2| \|Y^n\|_{\mathcal{S}^{\infty}} \|\bar\sigma + Z^{n-1}\|^2_{\mathbb{H}^2_{\mathrm{BMO}}(\mathbb P^\alpha)}\left\|\varphi^{n}\right\|_{\mathcal{S}^\infty} +2\kappa \|Y^n\|_{\mathcal{S}^{\infty}}\|Z^{n-1}\|_{\mathbb{H}^2_{\mathrm{BMO}}(\mathbb P^\alpha)}^2 \notag  \\
&\quad\quad\quad\quad\quad\quad\quad\quad\quad\quad\quad\quad\quad\quad +  |\gamma^1-\gamma^2| \|Y^n\|_{\mathcal{S}^{\infty}} \left( \|\nu\|_{\mathbb H^2_{\mathrm{BMO}}(\mathbb Q)}\|\bar \sigma\|_{\mathbb H^2_{\mathrm{BMO}}(\mathbb P^\alpha)} + \|\nu'\|_{\mathcal{S}^{\infty}}\|\bar \sigma\|^2_{\mathbb H^2_{\mathrm{BMO}}(\mathbb P^\alpha)}  \right) \notag \\
&\leq\ \|Y^n\|_{\mathcal{S}^{\infty}}   \bigg( |\gamma^1-\gamma^2| \left(4 \|\bar\sigma\|^2_{\mathbb{H}^2_{\mathrm{BMO}}(\mathbb P^\alpha)} \left\|\varphi^{n}\right\|_{\mathcal{S}^\infty} + \|\nu\|_{\mathbb H^2_{\mathrm{BMO}}(\mathbb P^\alpha)}\|\bar \sigma\|_{\mathbb H^2_{\mathrm{BMO}}(\mathbb P^\alpha)} + \|\nu'\|_{\mathcal{S}^{\infty}}\|\bar \sigma\|^2_{\mathbb H^2_{\mathrm{BMO}}(\mathbb P^\alpha)} \right)+2 \kappa R^2\bigg) \notag \\
&=: \|Y^n\|_{\mathcal{S}^{\infty}} \left( |\gamma^1-\gamma^2| h_R(x,\tilde \gamma, \alpha, \bar \sigma, \nu, \nu')+2 \kappa R^2\right),
\label{eq:pf:thm:ex:01}
\end{align}
where
\begin{align*}
h_R(x,\tilde \gamma, \alpha, \bar \sigma, \nu, \nu') &:= 4 \|\bar\sigma\|^2_{\mathbb{H}^2_{\mathrm{BMO}}(\mathbb P^\alpha)} h_{\varphi}(x,\tilde \gamma, \alpha, \bar \sigma, \nu, \nu') + \|\nu\|_{\mathbb H^2_{\mathrm{BMO}}(\mathbb P^\alpha)}\|\bar \sigma\|_{\mathbb H^2_{\mathrm{BMO}}(\mathbb P^\alpha)} + \|\nu'\|_{\mathcal{S}^{\infty}}\|\bar \sigma\|^2_{\mathbb H^2_{\mathrm{BMO}}(\mathbb P^\alpha)}.
\end{align*}
Taking the supremum over all $\tau$ (for $Y^n$) and rearranging yields
\begin{align}
\label{eq:pf:thm:ex:cond Y}
\|Y^n\|_{\mathcal{S}^{\infty}} &\leq |\gamma^1-\gamma^2|h_R(x, \tilde \gamma, \alpha, \bar \sigma, \nu, \nu^\prime)+ 2 \kappa R^2.
\end{align}
Now taking the supremum over all $\tau$ in \eqref{eq:pf:thm:ex:01} (for $Z^n$) and using \eqref{eq:pf:thm:ex:cond Y}, we obtain
\begin{align}
\label{eq:pf:thm:ex:cond Z}
\|Z^n\|^2_{\mathbb{H}^2_{\mathrm{BMO}}(\mathbb P^\alpha)} &\leq \big( |\gamma^1-\gamma^2|h_R(x, \tilde \gamma, \alpha, \bar \sigma, \nu, \nu^\prime)+ 2 \kappa R^2\big)^2 .
\end{align}
Using our bounds on $|\gamma^1-\gamma^2|$, and the fact that $R \leq \frac{1}{4 \sqrt{2} \kappa}$, we deduce that
\[\|Y^{n}\|_{\mathcal{S}^\infty}^2 + \|Z^{n}\|^2_{\mathbb{H}^2_{\mathrm{BMO}}(\mathbb P^\alpha)} \leq 2 \bigg( |\gamma^1-\gamma^2|h_R(x, \tilde \gamma, \alpha, \bar \sigma, \nu, \nu^\prime)+ 2 \kappa R^2\bigg)^2 \leq R^2.\]
We now show that our iteration is a contraction on the ball $B_R$ in $\mathcal{S}^{\infty} \times \mathbb{H}^2_{\mathrm{BMO}}(\mathbb P^\alpha)$. To this end, consider $(y,z), ((y^\prime,z^\prime)) \in B_R^2,$ and write $(Y,Z)$, $(Y^\prime,Z^\prime)$ for their images produced by our iteration. Also denote by $(\varphi,\dot{\varphi})$, $(\varphi^\prime,\dot{\varphi}^\prime)$ the corresponding optimal tracking strategies (corresponding to volatilities $\bar\sigma+z$ and $\bar\sigma+z^\prime$, respectively). To verify that our iteration is indeed a contraction, we have to show that for some $\eta \in(0,1)$,
\[
\|Y-Y^\prime\|_{\mathcal{S}^\infty}^2+\|Z-Z^\prime\|_{\mathbb{H}^2_{\mathrm{BMO}}(\mathbb P^\alpha)}^2 \leq \eta \Big(\|y-y^\prime\|_{\mathcal{S}^\infty}^2+\|z-z^\prime\|^2_{\mathbb{H}^2_{\mathrm{BMO}}(\mathbb P^\alpha)}\Big).
\]
To ease notation, set
\[ \delta y:=y-y^\prime,\quad \delta z:=z-z^\prime,\quad \delta Y:=Y-Y^\prime,\quad \delta Z:=Z-Z^\prime.\] 
Applying It\^o's formula on $[\tau,T]$ for any $[0,T]$-valued stopping time $\tau$, inserting the dynamics of $Y$ and $Y^\prime$, taking $\mathbb P^\alpha$-conditional expectations, and using the identity $a b - c d = a(b-d) + (a -c) d$ for $(a, b, c, d) \in \mathbb{R}^4$, we obtain
\begin{align}
\delta Y_\tau^2 +\mathbb E^{\mathbb P^\alpha}_\tau\bigg[\int_\tau^T \delta Z_t^2 \mathrm{d}t\bigg]  &= \mathbb E^{\mathbb P^\alpha}_\tau\bigg[(\gamma^2-\gamma^1)\int_\tau^T \delta Y_t\Big((\bar{\sigma}_t+z_t)^2 \varphi_t-(\bar{\sigma}_t+z^\prime_t)^2 \varphi^\prime_t\Big) \mathrm{d}t - 2\kappa \int_\tau^T \delta Y_t\big((z_t)^2-(z'_t)^2\big)\mathrm{d}t\bigg]\notag\\
& \leq \|\delta Y\|_{\mathcal{S}^{\infty}} |\gamma^1-\gamma^2| \bigg(\mathbb E^{\mathbb P^\alpha}_\tau\bigg[\int_\tau^T (\bar{\sigma}_t+z_t)^2 |\varphi_t-\varphi^\prime_t| \mathrm{d}t+\int_\tau^T \big|2\bar{\sigma}_t+z_t+z^\prime_t\big||\delta z_t| |\varphi^\prime_t|\mathrm{d}t\bigg]\bigg)\notag \\
&\quad + 2 \kappa \|\delta Y\|_{\mathcal{S}^{\infty}} \mathbb E^{\mathbb P^\alpha}_\tau\bigg[\int_\tau^T |z_t+z^\prime_t||\delta z_t| \mathrm{d}t\bigg]. \label{eq:pf:thm:ex:delta estimate}
\end{align}
To estimate the conditional expectation in the first term on the right-hand side of \eqref{eq:pf:thm:ex:delta estimate}, define the process 
\[A_{t}:=\underset{u \in [0, t]}{\sup} |\varphi_u-\varphi^\prime_u|,\quad t\in[0,T].\]
Lemma \ref{lem:increasing process}, \eqref{eq:pf:thm:ex:sigma + Z}, Jensen's inequality, and Theorem \ref{thm:stabilityphi} in turn yield
\begin{align}
\mathbb E^{\mathbb P^\alpha}_\tau\bigg[\int_\tau^T (\bar{\sigma}_t+z_t)^2 |\varphi_t-\varphi^\prime_t| \mathrm{d}t\bigg] &\leq \mathbb E^{\mathbb P^\alpha}_\tau\bigg[\int_\tau^T (\bar{\sigma}_t+z_t)^2 A_t \mathrm{d}t\bigg]  \leq 4\|\bar \sigma \|^2_{\mathbb{H}^2_{\mathrm{BMO}}(\mathbb P^\alpha)}  \mathbb E^{\mathbb P^\alpha}_\tau \bigg[\sup_{u \in [0, T]}|\varphi_u-\varphi^\prime_u| \bigg] \notag \\
&\leq 4\|\bar \sigma \|^2_{\mathbb{H}^2_{\mathrm{BMO}}(\mathbb P^\alpha)}  g_\varphi(x, \tilde \gamma, \alpha, \sqrt{2} \bar \sigma, \sqrt{2} \bar \sigma, \nu, \nu')^{\frac{1}{2}} \|\delta z \|_{\mathbb{H}^2_{\mathrm{BMO}}(\mathbb P^\alpha)}.
 \label{eq:pf:thm:ex:delta estimate:one}
\end{align}
To estimate the conditional expectation in the second term on the right-hand side of \eqref{eq:pf:thm:ex:delta estimate}, we use that $\varphi^\prime \in \mathcal{S}^\infty$,
the conditional version of the Cauchy--Schwarz inequality and the elementary inequality $(a + b + c)^2 \leq 2 a^2 + 4 b^2 + 4 c^2$. Together with the fact that both $\|z\|^2_{\mathbb{H}^2_{\mathrm{BMO}}(\mathbb P^\alpha)}$ and  $\|z^\prime\|^2_{\mathbb{H}^2_{\mathrm{BMO}}(\mathbb P^\alpha)} $ are smaller than  $\|\bar \sigma \|^2_{\mathbb{H}^2_{\mathrm{BMO}}(\mathbb P^\alpha)} $ and the a priori estimate \eqref{eq:pf:thm:ex:a priori}, this yields
\begin{align}
\mathbb E^{\mathbb P^\alpha}_\tau\bigg[\int_\tau^T |2\bar{\sigma}_t+z_t+z^\prime_t||\delta z_t| |\varphi^\prime_t|\mathrm{d}t\bigg] &\leq \|\varphi^\prime\|_{\mathcal{S}^\infty} \mathbb E^{\mathbb P^\alpha}_\tau\bigg[\int_\tau^T (2\bar{\sigma}_t+z_t+z^\prime_t)^2 \mathrm{d}t\bigg]^{\frac{1}{2}}  \mathbb E^{\mathbb P^\alpha}_\tau\bigg[\int_\tau^T \delta z_t^2 \mathrm{d}t\bigg]^{\frac{1}{2}} \notag \\
&\leq 4 h_{\varphi}(x, \tilde \gamma, \alpha, \bar \sigma, \nu, \nu^\prime)  \|\bar \sigma \|_{\mathbb{H}^2_{\mathrm{BMO}}(\mathbb P^\alpha)} \|\delta z \|_{\mathbb{H}^2_{\mathrm{BMO}}(\mathbb P^\alpha)} .
 \label{eq:pf:thm:ex:delta estimate:two}
\end{align}
To estimate the the conditional expectation in the third term on the right-hand side of \eqref{eq:pf:thm:ex:delta estimate}, we argue in a similar fashion and obtain
\begin{align}
\mathbb E^{\mathbb P^\alpha}_\tau\bigg[\int_\tau^T |z_t+z^\prime_t||\delta z_t| \mathrm{d}t\bigg] &\leq 2  R \|\delta z \|_{\mathbb{H}^2_{\mathrm{BMO}}(\mathbb P^\alpha)}.
\label{eq:pf:thm:ex:delta estimate:three}
\end{align}
Now, plugging $\eqref{eq:pf:thm:ex:delta estimate:one}$--$\eqref{eq:pf:thm:ex:delta estimate:three}$ into \eqref{eq:pf:thm:ex:delta estimate}, taking the supremum over all $\tau$ (both for $Y$ and $Z$), then taking conditional $\mathbb{P}^\alpha$-expectations, applying Lemma \ref{lem:Doob 1} and Theorem \ref{thm:stabilityphi} (together with \eqref{eq:pf:thm:ex:sigma + Z}), and using the elementary inequality $2a b \leq \frac{1}{\varepsilon} a^2+ \varepsilon b^2$ for $\varepsilon > 0$ yields
\begin{align}
 \|\delta Y\|^2_{\mathcal{S}^\infty} +\| \delta Z\|^2_{\mathbb H^2_{\rm BMO}(\mathbb P^\alpha)} \leq&\  8 |\gamma^1-\gamma^2|\|\delta Y\|_{\mathcal{S}^{\infty}} \|\bar \sigma \|^2_{\mathbb{H}^2_{\mathrm{BMO}}(\mathbb P^\alpha)} g_\varphi(x, \tilde \gamma, \alpha, \sqrt{2} \bar \sigma, \sqrt{2} \bar \sigma, \nu, \nu^\prime)^{\frac{1}{2}} \| \delta z\|_{\mathbb H^2_{\rm BMO}(\mathbb P^\alpha)} \notag \\
 &+ 8 |\gamma^1-\gamma^2|\|\delta Y\|_{\mathcal{S}^{\infty}} \|\bar \sigma \|^2_{\mathbb{H}^2_{\mathrm{BMO}}(\mathbb P^\alpha)} k_\varphi(x, \tilde \gamma, \alpha, \bar \sigma, \nu, \nu^\prime) \| \delta z\|_{\mathbb H^2_{\rm BMO}(\mathbb P^\alpha)} \notag \\
 & +8 \kappa \|\delta Y\|_{\mathcal{S}^{\infty}} R \|\delta z \|_{\mathbb{H}^2_{\mathrm{BMO}}(\mathbb P^\alpha)} \notag \\
\leq&\ \frac1\varepsilon \|\delta Y\|^2_{\mathcal{S}^\infty} + \varepsilon \eta^2 \| \delta z\|^2_{\mathbb H^2_{\rm BMO}(\mathbb P^\alpha)},
\end{align}
where
\begin{equation*}
\eta: = 4 |\gamma^1-\gamma^2|\|\bar \sigma \|^2_{\mathbb{H}^2_{\mathrm{BMO}}(\mathbb P^\alpha)} \left(g_\varphi(x, \tilde \gamma, \alpha, \sqrt{2} \bar \sigma, \sqrt{2} \bar \sigma, \nu, \nu^\prime)^{\frac{1}{2}} + h_\varphi(x, \tilde \gamma, \alpha, \bar \sigma, \nu, \nu^\prime)\right)  + 4 \kappa R.
\end{equation*}
We deduce that for any $\varepsilon>1$,
\[ \|\delta Y\|^2_{\mathcal{S}^\infty} +\| \delta Z\|^2_{\mathbb H^2_{\rm BMO}(\mathbb P^\alpha)}\leq \|\delta Y\|^2_{\mathcal{S}^\infty} + \frac{\varepsilon}{\varepsilon -1}\| \delta Z\|^2_{\mathbb H^2_{\rm BMO}(\mathbb P^\alpha)} \leq \frac{\varepsilon^2}{\varepsilon-1}\eta^2 \| \delta z\|^2_{\mathbb H^2_{\rm BMO}(\mathbb P^\alpha)}.\]
We choose $\varepsilon=2$ and deduce the desired result, since by our assumptions
\begin{equation*}
\frac{\varepsilon^2}{\varepsilon-1}\eta^2=4 \eta^2<1.
\end{equation*}

For the last part of the result, observe that these specific parameter choices satisfy all the requirements in Theorem \ref{thm:ex} in view of Assumptions \ref{ass:int1} and \ref{assump:friction}. This gives us a unique solution to the associated FBSDE system  (\ref{eq:eqfwd}--\ref{eq:eqbwd}, \ref{eq:BSDEY}). Any solution of (\ref{eq:eqfwd}--\ref{eq:eqbwd}, \ref{eq:BSDEY}) in turn provides a solution to $(\ref{eq:eqfwd}$--$\ref{eq:BSDES})$ by defining $S:=\bar S_t+ Y_t$ and $\sigma:=\bar\sigma+Z$. The converse is obviously true for solutions as in Theorem~\ref{thm:ex3}. 
 \qedhere
\end{proof}

We now prove Proposition~\ref{prop:odes}, which characterises equilibria with transaction costs via coupled systems of Riccati ODEs in a particular model with linear state dynamics and terminal condition.

\begin{proof}[Proof of Proposition~\ref{prop:odes}]
First notice that the functions $A(t)$, $D(t)$ satisfy the following Riccati equations:
\begin{alignat*}{2}
A'(t) &=-\bar{\gamma}sa^2 +\frac{\gamma^2 s}{2}(a+B(t))^2 - C(t)D(t), \qquad && A(T)=0,\\
D'(t) &= -\frac{\gamma^2s}{2\lambda}(a+B(t))^2-F(t)D(t), \qquad && D(T)=0. 
\end{alignat*}
Together with the Riccati ODEs for the functions $B(t)$, $C(t)$, $E(t)$, $F(t)$, it follows that the functions
\begin{align*}
f(t,x,y) =A(t)+B(t)x+C(t)y, \qquad g(t,x,y) =D(t)+E(t)x+F(t)y,
\end{align*}
solve the following semilinear PDEs (here, the arguments $(t,x,y)$ are omitted to ease notation):
\begin{align*}
f_t +\frac{1}{2} f_{xx} +f_y g &=-\bar \gamma sa^2 + \frac{\gamma^2}{2}(a+ B)^2 + \frac{\gamma_1 - \gamma_2}{2} \beta (a + B) x +\frac{\gamma_1 - \gamma_2}{2}  (a + B)^2 y \\
 &= \frac{\gamma^1-\gamma^2}{2}(a+f_x)^2 y +\frac{\gamma^2}{2}f_x^2 +f_x\left(\gamma^2 a+\frac{\gamma^1-\gamma^2}{2}\beta x\right)-\frac{\gamma^1-\gamma^2}{2}a^2\left(\frac{\gamma^2s}{\gamma^1+\gamma^2}-\frac{\beta}{a}x\right),\\
g_t +\frac{1}{2} g_{xx}+ g_y g &=\frac{\gamma^1+\gamma^2}{2\lambda} (a+f_x)\beta x -\frac{\gamma^2s}{2\lambda}(a+f_x)^2+\frac{\gamma^1+\gamma^2}{2\lambda}(a+f_x)^2y, 
\end{align*}
on $[0,T) \times \mathbb{R}^2$, with terminal conditions $f(T,x,y)= g(T,x,y)=0$. By definition of $\varphi^1_t$,
\[\dot{\varphi}^1_t = g(t,W_t,\varphi^1_t).\]
Now set
\[
Y_t=f(t,W_t,\varphi^1_t), \qquad Z_t = f_x(t,W_t,\varphi^1_t)=B(t).
\]
Then, It\^o's formula, the PDEs for $f(t,x,y)$, $g(t,x,y)$, and the definition of $Z$ show that $\dot{\varphi}^1$, $Y$, $Z$, satisfy the BSDEs
\begin{align*}
\mathrm{d}\dot{\varphi}_t &= \frac{\gamma^1+\gamma^2}{2\lambda}\bigg(\beta W_t(a+Z_t) -\frac{\gamma^2s}{\gamma^1+\gamma^2}(a+Z_t)^2+(a+Z_t)^2\varphi^1_t\bigg)\mathrm{d}t+E(t)\mathrm{d}W_t,\\
\mathrm{d}Y_t &= \bigg(\frac{\gamma^1-\gamma^2}{2}(a+Z_t)^2\varphi^1_t +\frac{\gamma^2s}{2}Z_t^2 +Z_t\bigg(\gamma^2 sa+\frac{\gamma^1-\gamma^2}{2}\beta W_t\bigg)-\frac{\gamma^1-\gamma^2}{2}a^2\bigg(\frac{\gamma^2s}{\gamma^1+\gamma^2}-\frac{\beta}{a}W_t\bigg)\bigg)\mathrm{d}t+Z_t\mathrm{d}W_t,
\end{align*}
with terminal conditions $\dot{\varphi}^1_T=Y_T=0$. Together with the forward equation $\mathrm{d}\varphi^1_t=\dot{\varphi}^1_t \mathrm{d}t$, as well as the BSDE for the frictionless equilibrium price $\bar{S}$ from Proposition~\ref{prop:BSDE}, it follows that $S=\bar{S}+Y$, $\sigma=a + Z = \bar{\sigma}+Z$, $\dot{\varphi}^1$, $E$, and $\varphi^1$ indeed solve the forward--backward equations~$(\ref{eq:eqfwd}$--$\ref{eq:BSDES})$. Since the frictionless equilibrium volatility is constant here, $\bar{\sigma}=a$ and $Z_t =B(t)$ is deterministic, we evidently have $\sigma \in \mathbb{H}^2_{\mathrm{BMO}}$.  Since the Brownian motion $W$ has finite moments and zero autocorrelation function, one also readily verifies that $\dot{\varphi}^1 \in \mathbb{H}^2$. The assertion in turn follows from Proposition~\ref{prop:eq}.
\end{proof}

We now turn to the proof of Theorem \ref{thm:odeex}, which guarantees existence of the Riccati system from Proposition~\ref{thm:ex3} for sufficiently similar risk aversion parameters. The argument is very close in spirit to that of Theorem~\ref{thm:ex}. Indeed, we also obtain well-posedness of the system by a Picard iteration scheme which is devised so that the successive iterations remain in a sufficiently small ball. And in order to achieve this, a naive direct iteration of the four equations does not work unless the time horizon is sufficiently short. Instead, we have to start by studying separately the system satisfied by $C$, $E$, $F$ for fixed $B$, exactly as we did for (\ref{eq:thm:ex:fwd}--\ref{eq:thm:ex:bwd}), when $Z$ is fixed, in the proof of Theorem \ref{thm:ex}. After developing the necessary stability estimates, we can then proceed to the iteration for $B$ and obtain the desired result. This shows that the approach underlying Theorem~\ref{thm:ex} is not crucially tied to the stringent integrability assumptions imposed there to deal with a general setting, but can also be adapted to other specific settings on a case by case basis.

\begin{proof}[Proof of Theorem~\ref{thm:odeex}]
To ease notation, set 
\[
\hat\gamma:=\frac{\gamma^1+\gamma^2}{2},\qquad \varepsilon:=\gamma^1-\gamma^2,
\]
as well as
\begin{equation*}
\tilde B(t) := B(t) + a, \quad t \in [0, T].
\end{equation*}
\emph{Step 1: Dealing with $(C,E,F)$.} We start by giving ourselves some bounded map $\tilde B:[0,T]\to \mathbb R$ and analyse the following coupled system of ODEs on $[0,T]$:
\begin{equation}\label{eq:EF}
\begin{cases}
\displaystyle C^{\tilde B}(t)=-\int_t^T\bigg(\frac\varepsilon2\tilde B(s)^2-F^{\tilde B}(s)C^{\tilde B}(s)\bigg)\mathrm{d}s,\\
\displaystyle E^{\tilde B}(t)=-\int_t^T\bigg(\frac{\beta\hat\gamma}{\lambda}\tilde B(s)-F^{\tilde B}(s)E^{\tilde B}(s)\bigg)\mathrm{d}s,\\
\displaystyle F^{\tilde B}(t)=-\int_t^T\bigg(\frac{\hat\gamma}{\lambda}\tilde B(s)^2-(F^{\tilde B})^2(s)\bigg)\mathrm{d}s.
\end{cases}
\end{equation}
As $\tilde B$ is bounded, the equation for $F^{\tilde B}$ has a unique solution. Using that $0 \leq \frac{\hat\gamma}{\lambda}\tilde B(s)^2 \leq \frac{\hat\gamma}{\lambda}(\|\tilde B\|_\infty )^2$, the comparison theorem for ODEs gives the estimate
\begin{equation}\label{eq:boundF}
-\sqrt{\frac{\hat\gamma}{\lambda}}(\|\tilde B\|_\infty )\leq-\sqrt{\frac{\hat\gamma}{\lambda}}(\|\tilde B\|_\infty )\mathrm{tanh}\bigg(\sqrt{\frac{\hat\gamma}{\lambda}}(\|\tilde B\|_\infty\big)(T - t)\bigg)\leq F^{\tilde B}(t)\leq 0,\quad t\in[0,T].
\end{equation}
The ODEs for $E^{\tilde B}$ and $C^{\tilde B}$ are linear and have the unique solutions 
\[
E^{\tilde B}(t)=-\frac{\hat\gamma\beta}{\lambda}\int_t^T\tilde B(s)\mathrm{e}^{\int_t^sF^{\tilde B}(r)\mathrm{d}r}
\mathrm{d}s,\quad
 C^{\tilde B}(t)=-\frac\varepsilon2\int_t^T\tilde B(s)^2\mathrm{e}^{\int_t^sF^{\tilde B}(r)\mathrm{d}r}
\mathrm{d}s,\quad  t\in[0,T].
\]
In particular,  non-positivity of $F$ implies
\begin{align}\label{eq:boundE}
\big|E^{\tilde B}(t)\big| &\leq \frac{\hat\gamma \beta}{\lambda} \|\tilde B\|_\infty (T- t), \quad \big|C^{\tilde B}(t)\big| \leq \frac{|\varepsilon|}{2} \|\tilde B\|_\infty^2(T-t), \quad \mbox{for all $t \in [0, T]$.}
\end{align}
We will also need some stability results for these solutions with respect to variations of $\tilde B$. Fix thus two bounded functions $\tilde B$ and $\tilde B^\prime$. Using that $F^{\tilde B}-F^{\tilde B^\prime}$ satisfies the ODE
\begin{equation*}
\left(F^{\tilde B} - F^{\tilde{B}^\prime}\right)(t) = -\int_t^T\bigg(\frac{\hat\gamma}{\lambda}\big(\tilde B(s) + \tilde B^\prime(s)\big)(\tilde B(s) - \tilde B^\prime(s)) -\big(F^{\tilde B}(s) + F^{\tilde{B}^\prime}(s)\big)(F^{\tilde B} - F^{\tilde {B}^\prime})(s)\bigg)\mathrm{d}s,
\end{equation*}
we obtain
\begin{align*}
F^{\tilde B}(t)-F^{\tilde{B}^\prime}(t) &=-\frac{\hat\gamma}{\lambda}\int_t^T\mathrm{e}^{\int_t^s(F^{\tilde B}(r)+F^{B^\prime}(r))\mathrm{d}r}\big(\tilde B(s)+\tilde B^\prime(s)\big)\big(\tilde B(s)-\tilde B^\prime(s)\big)\mathrm{d}s.
\end{align*}
Non-positivity of $F^{\tilde B}$ and $F^{B^\prime}$ gives
\begin{equation*}
\big|F^{\tilde B}(t)-F^{\tilde B^\prime}(t)\big| \leq \frac{\hat\gamma}{\lambda} \big( \|\tilde B\|_\infty+\|\tilde B^\prime\|_\infty \big)\big\|\tilde B-\tilde B^\prime\big\|_\infty (T -t), \quad \mbox{for all $t \in [0, T]$.}
\end{equation*}
Using that the $x\mapsto\mathrm{e}^x$ is $1-$Lipschitz continuous on $(-\infty,0]$, this implies 
\begin{align}
\big|\mathrm{e}^{\int_t^sF^{\tilde B}(r)\mathrm{d}r}-\mathrm{e}^{\int_t^sF^{\tilde B^\prime}(r)\mathrm{d}r}\big| &\leq \frac{\hat\gamma}{\lambda} \big( \|\tilde B\|_\infty+\|\tilde B^\prime\|_\infty \big)\big\|\tilde B-\tilde B^\prime\big\|_\infty (T -t)^2, \quad \mbox{for $0 \leq t \leq s \leq T$.}
\label{eq:stabF:int}
\end{align}
Taking into account the explicit expressions for $E^{\tilde B}$ and $C^{\tilde B}$, we also deduce that
\[
E^{\tilde B}(t)-E^{\tilde B^\prime}(t)=-\frac{\hat\gamma\beta}{\lambda}\int_t^T\Big(\mathrm{e}^{\int_t^sF^{\tilde B}(r)\mathrm{d}r}\big(\tilde B(s)-\tilde B^\prime(s)\big)+\tilde B^\prime(s)\big(\mathrm{e}^{\int_t^sF^{\tilde B}(r)\mathrm{d}r}-\mathrm{e}^{\int_t^sF^{B^\prime}(r)\mathrm{d}r}\big)\Big)\mathrm{d}s,
\]
\[
C^{\tilde B}(t)-C^{\tilde B^\prime}(t)=-\frac\varepsilon2\int_t^T\Big(\mathrm{e}^{\int_t^sF^{\tilde B}(r)\mathrm{d}r}\big(\tilde B(s)+\tilde B^\prime(s)\big)\big(\tilde B(s)-\tilde B^\prime(s)\big)+\tilde B^\prime(s)^2\big(\mathrm{e}^{\int_t^sF^{\tilde B}(r)\mathrm{d}r}-\mathrm{e}^{\int_t^sF^{B^\prime}(r)\mathrm{d}r}\big)\Big)\mathrm{d}s.
\]
Together with \eqref{eq:stabF:int}, this yields
\begin{equation}\label{eq:stabE}
\big\|E^{\tilde B}-E^{B^\prime}\big\|_\infty\leq \frac{\hat\gamma\beta T}{\lambda}\bigg(1+ \frac{\hat \gamma T^2}{\lambda}\|\tilde B^\prime\|_\infty \big( \|\tilde B\|_\infty+\|\tilde B^\prime\|_\infty\big) \bigg)\big\|\tilde B-\tilde B^\prime\big\|_\infty,
\end{equation}
as well as
\begin{equation}\label{eq:stabC}
\big\|C^{\tilde B}-C^{B^\prime}\big\|_\infty \leq \frac{|\varepsilon| T}{2} \bigg(1+  \frac{\hat \gamma T^2}{\lambda}\|\tilde B^\prime\|_\infty^2 \bigg)\big( \|\tilde B\|_\infty+\|\tilde B^\prime\|_\infty\big)\big\|\tilde B-\tilde B^\prime\big\|_\infty.
\end{equation}

\emph{Step 2: A priori estimate for $\|\tilde B\|_\infty$}. Now, fix some $R> a$, define $\tilde B^0=a$ and, for a fixed integer $n\geq 1$, consider a continuous function $\tilde B^{n-1}$ with $||\tilde{B}^{n-1}||_\infty \leq R$. Let $(C^n,E^n,F^n)$ be the unique solution of the system \eqref{eq:EF} with $\tilde B:=\tilde B^{n-1}$. We then define $\tilde B^n$ as the unique solution of the following (linear) ODE (well-posedness is clear since $\tilde B^{n-1}$, $C^{n}$, $E^{n}$ and $F^n$ are all uniformly bounded):
\begin{equation}\label{eq:Bn}
\tilde B^n(t)=a -\int_t^T\big(\varepsilon\beta \tilde B^{n-1}(s)-E^n(s)C^{n}(s)\big)\mathrm{d}s,\quad t\in[0,T].
\end{equation}
Using the estimates on $E^n$ and $C^n$ from \eqref{eq:boundE}, we obtain
\begin{align*}
\big\|\tilde B^n\big\|_\infty \leq  a  + |\varepsilon| \beta T \big\|\tilde B^{n-1}\big\|_\infty + \frac{|\varepsilon| \hat \gamma \beta }{2 \lambda} T^3 \big\|\tilde B^{n-1}\big\|_\infty^3.
\end{align*}
Now, for $R = \frac{3}{2}a$ and $|\varepsilon|$ satisfying  \eqref{eq:thm:odeex} it follows that 
\[
a  + |\varepsilon| \beta R T + |\varepsilon| \beta \frac{\hat \gamma R^3 T^3}{2 \lambda}  \leq R.
\]
Then we have
\[
\big\|\tilde{B}^n\big\|_\infty \leq R.
\]

\emph{Step 3: Picard iteration for $\tilde B$}.
Finally, using the fact that
\begin{align*}
\tilde B^n(t)-\tilde B^{n\prime}(t)=&-\int_t^T \varepsilon \beta \big(\tilde B^{n-1}(s)-\tilde B^{n-1\prime}(s)\big)\mathrm{d}s\\
&-\int_t^TC^{n}(s)\big(E^n(s)-E^{n\prime}(s)\big)+E^{n\prime}(s)\big(C^n(s)-C^{n\prime}(s)\big)\mathrm{d}s,
\end{align*}
it follows from \eqref{eq:boundE}, \eqref{eq:stabE}, and \eqref{eq:stabC} that
\begin{align*}
\Vert \tilde B^n -\tilde B^{n\prime} \Vert_\infty 
&\leq T \left(|\varepsilon| \beta + 2 |\varepsilon| \beta \frac{\hat\gamma R^2 T^2 }{\lambda}\bigg(1+ \frac{\hat \gamma R^2 T^2}{\lambda} \bigg) \right)\Vert \tilde B^{n-1}-\tilde B^{n-1\prime} \Vert_\infty' 
\end{align*}
Now, for $R = \frac{3}{2}a$ and $|\varepsilon|$ satisfying  \eqref{eq:thm:odeex}, the constant is less than $1$ and  we have a contraction.
\end{proof}

\appendix
\section{BMO Results}\label{app:A}

This appendix collects some auxiliary results on BMO martingales that are used in the proofs of Theorem~\ref{thm:ex}, Lemma~\ref{lem:kt02}, and Proposition~\ref{prop:eq}.

\begin{lemma}
\label{lem:BMO P Q}
Let $(\Omega, \mathcal{F}, \mathbb{F} = (\mathcal{F}_t)_{t \in [0, T]}, \mathbb P)$ be a filtered probability space supporting a Brownian motion $(W_t)_{t \in [0, T]}$ and such that all $\mathbb{F}-$martingales are continuous. Let $(\alpha_t)_{t \in [0, T]} \in \mathbb{H}^2_{\mathrm{BMO}}(\mathbb{P})$ and define $\mathbb P^\alpha \sim \mathbb{P}$ on $\mathcal{F}_T$ by
\begin{equation*}
\frac{\mathrm{d}  \mathbb P^\alpha}{\mathrm{d}  \mathbb{P}} = \mathcal{E}\left(\int_0^\cdot \alpha_t \mathrm{d} W_t\right)_T.
\end{equation*}
Then $\alpha \in \mathbb{H}^2_{\mathrm{BMO}}(\mathbb P^\alpha)$. Moreover, if $(\sigma_t)_{t \in [0, T]} \in \mathbb{H}^2_{\mathrm{BMO}}(\mathbb{P})$, then $(\sigma_t)_{t \in [0, T]} \in \mathbb{H}^2_{\mathrm{BMO}}(\mathbb P^\alpha)$ and 
\begin{align*}
\|\sigma\|^2_{\mathbb{H}_{\mathrm{BMO}}(\mathbb P^\alpha)} &\leq 8 (\|\alpha\|_{\mathbb{H}^2_{\mathrm{BMO}}(\mathbb P^\alpha)} + 1)^2  \|\sigma\|^2_{\mathbb{H}^2_{\mathrm{BMO}}(\mathbb{P})}, \quad \|\sigma\|^2_{\mathbb{H}_{\mathrm{BMO}}(\mathbb{P})} \leq 8 (\|\alpha\|_{\mathbb{H}^2_{\mathrm{BMO}}(\mathbb{P})} + 1)^2  \|\sigma\|^2_{\mathbb{H}^2_{\mathrm{BMO}}(\mathbb P^\alpha)}.
\end{align*}
\end{lemma}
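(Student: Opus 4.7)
My plan is as follows. The first claim, $\alpha\in\mathbb H^2_{\mathrm{BMO}}(\mathbb P^\alpha)$, is a quantitative special case of Kazamaki's classical result (see, e.g., \cite[Theorem~3.6]{kazamaki.94}) that $\mathbb H^2_{\mathrm{BMO}}$ is preserved under Girsanov changes of measure generated by BMO martingales. A self-contained derivation starts from the observation that $\int_0^\cdot \alpha_s\,\mathrm{d}W^\alpha_s = \int_0^\cdot \alpha_s\,\mathrm{d}W_s - \int_0^\cdot \alpha^2_s\,\mathrm{d}s$ is a $\mathbb P^\alpha$-local martingale with quadratic variation $\int_0^\cdot \alpha^2_s\,\mathrm{d}s$; applying the conditional It\^o isometry together with Minkowski's inequality and the BMO energy inequality $\mathbb E_\tau[(\int_\tau^T \alpha^2_s\,\mathrm{d}s)^n]\leq n!\,\|\alpha\|^{2n}_{\mathbb H^2_{\mathrm{BMO}}(\mathbb P)}$ then closes the argument by absorption.

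For the quantitative bound on $\sigma$, I would fix $\tau\in\mathcal T_{0,T}$ and apply the conditional It\^o isometry under $\mathbb P^\alpha$ to the local martingale $\int_\tau^\cdot \sigma\,\mathrm{d}W^\alpha$ (after a standard localisation) to obtain
\[
\mathbb E^{\mathbb P^\alpha}_\tau\left[\int_\tau^T \sigma_t^2\,\mathrm{d}t\right] = \mathbb E^{\mathbb P^\alpha}_\tau\left[\left(\int_\tau^T \sigma_t\,\mathrm{d}W_t - \int_\tau^T \sigma_t\alpha_t\,\mathrm{d}t\right)^2\right].
\]
Splitting the right-hand side via Minkowski's inequality in $L^2(\mathbb E^{\mathbb P^\alpha}_\tau)$ produces two summands. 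The drift-type summand $\mathbb E^{\mathbb P^\alpha}_\tau[(\int_\tau^T\sigma_t\alpha_t\,\mathrm{d}t)^2]$ is estimated purely under $\mathbb P^\alpha$ by pointwise Cauchy--Schwarz together with the Fubini decomposition
\[
\int_\tau^T\int_\tau^T \sigma_s^2\alpha_u^2\,\mathrm{d}u\,\mathrm{d}s = \int_\tau^T \sigma_s^2 \left(\int_\tau^s \alpha_u^2\,\mathrm{d}u\right)\mathrm{d}s + \int_\tau^T \alpha_u^2 \left(\int_\tau^u \sigma_s^2\,\mathrm{d}s\right)\mathrm{d}u
\]
and the tower property, yielding a bound proportional to $\|\sigma\|_{\mathbb H^2_{\mathrm{BMO}}(\mathbb P^\alpha)}\|\alpha\|_{\mathbb H^2_{\mathrm{BMO}}(\mathbb P^\alpha)}$. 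The stochastic-integral summand $\mathbb E^{\mathbb P^\alpha}_\tau[(\int_\tau^T\sigma_t\,\mathrm{d}W_t)^2]$ is instead pulled back to $\mathbb P$ via Bayes' rule and controlled by H\"older's inequality combined with the reverse H\"older inequality for $\mathcal E(\int \alpha\,\mathrm{d}W)$ (available since $\alpha\in\mathbb H^2_{\mathrm{BMO}}(\mathbb P)$, by Kazamaki's theorem), the Burkholder--Davis--Gundy inequality, and the BMO energy inequality under $\mathbb P$, producing a bound proportional to $\|\sigma\|_{\mathbb H^2_{\mathrm{BMO}}(\mathbb P)}$.

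Taking essential suprema over $\tau$ and absorbing the residual $\|\sigma\|_{\mathbb H^2_{\mathrm{BMO}}(\mathbb P^\alpha)}$ factor created by the drift-type piece then delivers the first inequality in the statement. The second follows by symmetry: once $\alpha\in\mathbb H^2_{\mathrm{BMO}}(\mathbb P^\alpha)$ has been established, the same argument applies with $\mathbb P^\alpha$ in the role of the base measure, $W^\alpha$ in the role of the driving Brownian motion, and $-\alpha$ in the role of the Girsanov kernel. The main obstacle will be recovering the specific constant $8(\|\alpha\|_{\mathbb H^2_{\mathrm{BMO}}(\mathbb P^\alpha)}+1)^2$ in the stated form, which demands a careful choice of H\"older conjugate exponents in the reverse H\"older step and precise tracking of the $n!$ factors from the BMO energy inequality during the absorption.
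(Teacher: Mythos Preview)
Your strategy via a Minkowski split of the It\^o isometry has a structural gap at the absorption step. After your Cauchy--Schwarz/Fubini estimate on the drift piece (which, via a Lemma~\ref{lem:increasing process}--type argument, yields $\mathbb E^{\mathbb P^\alpha}_\tau[(\int_\tau^T\sigma_t\alpha_t\,\mathrm{d}t)^2]\le 2\|\sigma\|^2_{\mathbb H^2_{\mathrm{BMO}}(\mathbb P^\alpha)}\|\alpha\|^2_{\mathbb H^2_{\mathrm{BMO}}(\mathbb P^\alpha)}$), taking square roots and the supremum over $\tau$ gives
\[
\|\sigma\|_{\mathbb H^2_{\mathrm{BMO}}(\mathbb P^\alpha)}\le C_1\|\sigma\|_{\mathbb H^2_{\mathrm{BMO}}(\mathbb P)}+\sqrt{2}\,\|\alpha\|_{\mathbb H^2_{\mathrm{BMO}}(\mathbb P^\alpha)}\,\|\sigma\|_{\mathbb H^2_{\mathrm{BMO}}(\mathbb P^\alpha)}.
\]
Absorbing the last term therefore requires $\sqrt{2}\,\|\alpha\|_{\mathbb H^2_{\mathrm{BMO}}(\mathbb P^\alpha)}<1$, a smallness condition the lemma does not assume. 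A similar circularity already appears in your sketch for the first claim: the energy inequality you invoke is under $\mathbb P$, so it does not directly control $\mathbb E^{\mathbb P^\alpha}_\tau[(\int_\tau^T\alpha^2_s\,\mathrm{d}s)^2]$ without a further measure-change argument.

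The paper sidesteps this by never splitting: it works directly with the increasing process $\int_\tau^T\sigma^2_t\,\mathrm{d}t$ (identical under both measures), changes measure via Bayes' rule, and controls the density ratio through the $A_p$/Muckenhoupt condition rather than the reverse H\"older inequality you invoke. Concretely, Lemma~\ref{lem:Ap estimate} applied under $\mathbb P^\alpha$ to $\hat Z=\mathcal E(-\int\alpha\,\mathrm{d}W^\alpha)$ gives $\|\mathbb E^{\mathbb P^\alpha}_\tau[(\hat Z_\tau/\hat Z_T)^{1/(p-1)}]\|_\infty\le 2$ for $p=(\|\alpha\|_{\mathbb H^2_{\mathrm{BMO}}(\mathbb P^\alpha)}+1)^2$; feeding this into the H\"older step of Kazamaki's Theorem~3.6 proof produces the constant $8(\|\alpha\|_{\mathbb H^2_{\mathrm{BMO}}(\mathbb P^\alpha)}+1)^2$ directly, with no absorption required. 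Your reverse-H\"older/BDG route for the stochastic-integral piece is a legitimate ingredient, but the $A_p$ condition is what makes the argument close for arbitrary $\alpha\in\mathbb H^2_{\mathrm{BMO}}$ and what delivers the specific constant you were worried about.
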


\begin{proof}
This follows immediately from the proof of \cite[Theorem 3.6]{kazamaki.94} and Lemma~\ref{lem:Ap estimate} applied under $\mathbb P^\alpha$ and $\mathbb{P}$.
\end{proof}

\begin{lemma}
\label{lem:Ap estimate}
Let $(\Omega, \mathcal{F}, \mathbb{F} = (\mathcal{F}_t)_{t \in [0, T]}, \mathbb P)$ be a probability space supporting a Brownian motion $(W_t)_{t \in [0, T]}$. Let $(\alpha_t)_{t \in [0, T]} \in \mathbb{H}^2_{\mathrm{BMO}}(\mathbb{P})$ and define the $\mathbb{P}$-martingale $(M_t)_{t \in [0, T]}$ by $M_t := \int_0^t \alpha_s \mathrm{d} W_s$. Then for any $p > 1$ with $p \geq (\|\alpha\|_{\mathbb{H}^2_{\mathrm{BMO}}(\mathbb{P})} + 1)^2$ and any stopping time $\tau$, we have
\begin{equation*}
\bigg\| \mathbb E_{\tau}\bigg[ \bigg(\frac{\mathcal{E}(M)_\tau}{\mathcal{E}(M)_T}\bigg)^{\frac{1}{p-1}}\bigg]\bigg\|^{\infty} \leq 2.
\end{equation*}

\end{lemma}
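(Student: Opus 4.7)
The plan is to split the expression using an algebraic identity, change measure, and apply the John--Nirenberg inequality.

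Set $q:=1/(p-1)$. Completing the square in the exponents of $(\mathcal E(M)_\tau/\mathcal E(M)_T)^q=\exp(-q(M_T-M_\tau)+(q/2)(\langle M\rangle_T-\langle M\rangle_\tau))$ yields the identity
\[
\bigg(\frac{\mathcal{E}(M)_\tau}{\mathcal{E}(M)_T}\bigg)^{1/(p-1)} = \frac{\mathcal{E}(-qM)_T}{\mathcal{E}(-qM)_\tau}\exp\bigg(\frac{q(q+1)}{2}\int_\tau^T\alpha_s^2\,\mathrm ds\bigg).
\]
Since $\alpha\in\mathbb H^2_{\mathrm{BMO}}(\mathbb P)$, Kazamaki's criterion \cite[Theorem~2.3]{kazamaki.94} guarantees that $\mathcal{E}(-qM)$ is a uniformly integrable $\mathbb P$-martingale, so the equivalent probability measure $\mathbb Q\sim\mathbb P$ on $\mathcal F_T$ with density $\mathcal{E}(-qM)_T$ is well-defined. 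The abstract Bayes formula then converts the conditional expectation into
\[
\mathbb E_\tau\bigg[\bigg(\frac{\mathcal{E}(M)_\tau}{\mathcal{E}(M)_T}\bigg)^{1/(p-1)}\bigg] = \mathbb E^{\mathbb Q}_\tau\bigg[\exp\bigg(\frac{q(q+1)}{2}\int_\tau^T\alpha_s^2\,\mathrm ds\bigg)\bigg],
\]
thereby eliminating the stochastic exponential factor.

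By Girsanov's theorem, the process $\tilde M_t:=M_t+q\int_0^t\alpha_s^2\,\mathrm ds$ is a continuous local $\mathbb Q$-martingale whose quadratic variation coincides with $\langle M\rangle$. Kazamaki's quantitative BMO-stability-under-change-of-measure result \cite[Theorem~3.6]{kazamaki.94} then shows that $\tilde M\in\mathbb H^2_{\mathrm{BMO}}(\mathbb Q)$, with an explicit bound on its BMO norm in terms of $q$ and $\|\alpha\|_{\mathbb H^2_{\mathrm{BMO}}(\mathbb P)}$. The classical exponential-moment bound for BMO martingales, derived from the energy inequality together with the power-series expansion of the exponential, then provides
\[
\mathbb E^{\mathbb Q}_\tau\big[\exp\big(\lambda(\langle M\rangle_T-\langle M\rangle_\tau)\big)\big]\leq\big(1-\lambda\|\tilde M\|^2_{\mathbb H^2_{\mathrm{BMO}}(\mathbb Q)}\big)^{-1},
\]
valid for $\lambda\|\tilde M\|^2_{\mathbb H^2_{\mathrm{BMO}}(\mathbb Q)}<1$.

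The hypothesis $p\geq(\|\alpha\|_{\mathbb H^2_{\mathrm{BMO}}(\mathbb P)}+1)^2$ is calibrated precisely so that, after substituting Kazamaki's bound for $\|\tilde M\|_{\mathbb H^2_{\mathrm{BMO}}(\mathbb Q)}$ and evaluating the John--Nirenberg estimate at $\lambda=q(q+1)/2$, the resulting constant is at most $2$. The main difficulty is tracking the explicit constants through this two-step chain---the BMO-norm transfer under the change of measure, followed by the John--Nirenberg exponential bound---to verify that the final bound of $2$ is achieved uniformly over the full admissible range of $p$. Given these quantitative estimates, the conclusion follows by straightforward algebraic manipulation.
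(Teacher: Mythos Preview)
Your approach differs from the paper's and contains a genuine gap. The paper works entirely under $\mathbb{P}$: it observes that the hypothesis $p\geq(\|\alpha\|_{\mathbb H^2_{\mathrm{BMO}}(\mathbb P)}+1)^2$ gives $\|\alpha\|^2_{\mathbb H^2_{\mathrm{BMO}}}/(2(\sqrt p-1)^2)\leq 1/2$, applies the John--Nirenberg inequality \emph{under $\mathbb P$} with $\lambda=1/(2(\sqrt p-1)^2)$ to obtain the bound $1/(1-1/2)=2$, and then invokes the explicit computation in Kazamaki's proof of Theorem~2.4 (a)$\Rightarrow$(b) to convert this exponential-moment bound into the $A_p$-type estimate. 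No change of measure is needed.

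Your route---change measure to $\mathbb Q$ via $\mathcal E(-qM)$, transfer the BMO norm to $\mathbb Q$ using \cite[Theorem~3.6]{kazamaki.94}, then apply John--Nirenberg under $\mathbb Q$---is problematic for two reasons. First, the explicit constant in Kazamaki's Theorem~3.6 is expressed in terms of the $A_p$ (or reverse H\"older) constant of the density $\mathcal E(-qM)$, and making that constant explicit in terms of the $\mathbb P$-BMO norm of $-qM$ requires precisely the computation in Theorem~2.4 that the paper invokes directly. In other words, to carry out your programme rigorously you would have to prove (a version of) the lemma as an intermediate step. Second, even granting that, the chain \emph{Theorem~2.4 for $-qM$} $\to$ \emph{Theorem~3.6} $\to$ \emph{John--Nirenberg under $\mathbb Q$} accumulates constants at each stage, and you have not shown that the product is at most $2$; indeed you explicitly flag this as the ``main difficulty'' and leave it unresolved. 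Since the entire content of the lemma is the explicit constant $2$ for the stated range of $p$---the qualitative finiteness is just Kazamaki's Theorem~2.4---this omission is not a detail but the heart of the matter.
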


\begin{proof}
The condition on $p$ implies that $\|\alpha/{(\sqrt{2}(\sqrt{p} -1))}\|^2_{\mathbb{H}^2_{\mathrm{BMO}}(\mathbb{P})} \leq \frac{1}{2}$. Thus it follows from the John--Nirenberg inequality~\cite[Theorem 2.2]{kazamaki.94} that, for any stopping $\tau$,
\begin{equation*}
\mathbb E_{\tau}\bigg[\frac{1}{2 (\sqrt{p}-1)^2} \Big(\langle M \rangle_T -  \langle M \rangle_\tau \Big)\bigg] \leq 2.
\end{equation*}
The claim now follows from the proof of (a) $\Rightarrow$ (b) in~\cite[Theorem 2.4]{kazamaki.94} with $C_p = 2$.
\end{proof}

\begin{lemma}
\label{lem:increasing process}
Let $(\Omega, \mathcal{F}, \mathbb{F} = (\mathcal{F}_t)_{t \in [0, T]}, \mathbb P)$ be a probability space, $(\beta_t)_{t \in [0, T]}$ a nonnegative process and $(A_t)_{t \in [0, T]}$ a nondecreasing process. Then, for any $[0, T]$-valued stopping time $\tau$,
\begin{equation}
\label{eq:lem:increasing process:1}
\mathbb E_\tau\bigg[\int_\tau^T A_s \beta_s \mathrm{d} s \bigg] \leq A_\tau  \mathbb E_\tau \bigg[\int_\tau^T \beta_s \mathrm{d} s \bigg] + \mathbb E_\tau\bigg[\int_\tau^T  \mathbb E_u\bigg[\int_u^T \beta_s \mathrm{d} s \bigg] \mathrm{d} A_u \bigg]. 
\end{equation}
Moreover, if $\sqrt{\beta} \in \mathbb{H}^2_{\mathrm{BMO}}$, then
\begin{equation}
\label{eq:lem:increasing process:2}
\mathbb E_\tau\bigg[\int_\tau^T A_s \beta_s \mathrm{d}  s \bigg] \leq \big\| \sqrt{\beta} \big\|^2_{\mathbb{H}^2_{\mathrm{BMO}}} \mathbb E_\tau\big[A_T \big].
\end{equation}
\end{lemma}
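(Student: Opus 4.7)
The plan for part (i) is a pathwise Fubini argument together with an optional projection step. I would start by decomposing $A_s = A_\tau + (A_s - A_\tau)$ inside the integral on the left-hand side of \eqref{eq:lem:increasing process:1}. The first piece yields $A_\tau \int_\tau^T \beta_s \mathrm{d}s$, which contributes the first term on the right after taking $\mathbb{E}_\tau$. For the second piece, using that $A_s - A_\tau = \int_\tau^s \mathrm{d}A_u$ (since $A$ is nondecreasing) and applying Fubini for nonnegative integrands (valid $\omega$-pathwise), I would swap the order of integration:
\[
\int_\tau^T (A_s - A_\tau) \beta_s \mathrm{d}s = \int_\tau^T \int_u^T \beta_s \mathrm{d}s\,\mathrm{d}A_u.
\]

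The only nontrivial step is then to take $\mathbb{E}_\tau$ and commute the $\mathcal{F}_u$-conditional expectation inside the $\mathrm{d}A_u$-integral, i.e.\ to justify
\[
\mathbb{E}_\tau\bigg[\int_\tau^T \int_u^T \beta_s \mathrm{d}s\,\mathrm{d}A_u \bigg] = \mathbb{E}_\tau\bigg[\int_\tau^T \mathbb{E}_u \bigg[\int_u^T \beta_s \mathrm{d}s \bigg] \mathrm{d}A_u \bigg].
\]
Since $A$ is an adapted (in fact optional) nondecreasing process and all integrands are nonnegative, this is precisely the optional projection identity: the optional projection of the $\mathcal{F}_T$-measurable process $u \mapsto \int_u^T \beta_s \mathrm{d}s$ coincides with $u \mapsto \mathbb{E}_u\big[\int_u^T \beta_s \mathrm{d}s\big]$, and integrals against a random measure associated to an optional finite-variation process are unchanged under optional projection. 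Combining this with the $A_\tau$ piece gives \eqref{eq:lem:increasing process:1} as an equality (and in particular as an inequality).

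For part (ii), I would apply (i) and then invoke the $\mathrm{BMO}$ bound
\[
\mathbb{E}_u \bigg[\int_u^T \beta_s \mathrm{d}s \bigg] \leq \big\|\sqrt{\beta}\big\|^2_{\mathbb{H}^2_{\mathrm{BMO}}},
\]
which holds a.s.\ for every $u \in [0,T]$ by the very definition of the $\mathbb{H}^2_{\mathrm{BMO}}$-norm applied to $\sqrt{\beta}$. Bounding both conditional expectations appearing on the right-hand side of \eqref{eq:lem:increasing process:1} by this deterministic constant and using $\mathbb{E}_\tau\big[\int_\tau^T \mathrm{d}A_u\big] = \mathbb{E}_\tau[A_T - A_\tau]$ then yields
\[
\mathbb{E}_\tau\bigg[\int_\tau^T A_s \beta_s \mathrm{d}s\bigg] \leq \big\|\sqrt{\beta}\big\|^2_{\mathbb{H}^2_{\mathrm{BMO}}}\big(A_\tau + \mathbb{E}_\tau[A_T - A_\tau]\big) = \big\|\sqrt{\beta}\big\|^2_{\mathbb{H}^2_{\mathrm{BMO}}}\, \mathbb{E}_\tau[A_T],
\]
as desired. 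The only delicate point is the optional projection identity above; everything else is pathwise Fubini and direct bookkeeping, so I do not expect any further obstacle.
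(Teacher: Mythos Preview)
Your proposal is correct and follows essentially the same approach as the paper: decompose $A_s = A_\tau + \int_\tau^s \mathrm{d}A_u$, apply Fubini pathwise, then invoke the optional projection identity (the paper cites \cite[Theorem~VI.57]{DelMeyB}) to insert $\mathbb{E}_u[\cdot]$ inside the $\mathrm{d}A_u$-integral; part~(ii) then follows from the definition of the $\mathbb{H}^2_{\mathrm{BMO}}$-norm exactly as you describe.
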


\begin{proof}
Write $A_s = A_\tau + \int_t^s \mathrm{d} A_u$ for $s \geq \tau$. Fubini's theorem in turn gives
\begin{equation*}
 \int_\tau^T A_s \beta_s \mathrm{d}  s = A_\tau  \int_\tau^T \beta_s \mathrm{d}  s + \int_\tau^T  \int_\tau^s \beta_s  \mathrm{d}  A_u \mathrm{d} s = A_\tau  \int_\tau^T \beta_s \mathrm{d} s + \int_\tau^T \int_u^T \beta_s \mathrm{d} s \mathrm{d} A_u.
\end{equation*}
Now, \eqref{eq:lem:increasing process:1} follows from taking conditional expectations, using the conditional result corresponding to (the optional version of)~\cite[Theorem VI.57]{DelMeyB}, and that the optional projection of the process $(\int_u^T \beta_s ds)_{u \in [0, T]}$ is $(\mathbb E_u[\int_u^T \beta_s^2 ds])_{u \in [0, T]}$. Moreover, \eqref{eq:lem:increasing process:2} follows from \eqref{eq:lem:increasing process:1} by the definition of the BMO norm.
\end{proof}

\section{Variations on Doob's Inequality}\label{app:B}

The following versions of Doob's inequality are used in the proofs of Theorem~\ref{thm:ex} and Lemma~\ref{lem:stabilityc}, respectively. They easily follow using the inequalities of  H\"older and Doob.

\begin{lemma}
	\label{lem:Doob 1}
	Let $(\Omega, \mathcal{F}, \mathbb{F} = (\mathcal{F}_t)_{t \in [0, T]}, \mathbb P)$ be a filtered probability space, and $X$ an $\mathcal{F}_T$-measurable nonnegative random variable with $\mathbb E[X^2] < \infty$. Then
	\begin{equation*}
	\mathbb{E}\bigg[ \sup_{t \in [0, T]} \mathbb{E}_t\big[X\big]\bigg] \leq 2\mathbb{E}\big[X^2 \big]^{\frac{1}{2}} .
	\end{equation*}
\end{lemma}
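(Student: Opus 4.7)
The plan is to recognize that the process $M_t := \mathbb{E}_t[X]$, $t \in [0,T]$, admits a c\`adl\`ag modification which is a nonnegative, square-integrable martingale with terminal value $M_T = X$. The whole result then reduces to an essentially one-line application of Doob's $\mathbb L^2$ maximal inequality, combined with a Jensen/Cauchy--Schwarz step to pass from $\mathbb L^1$ to $\mathbb L^2$ control.

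Concretely, I would first invoke Cauchy--Schwarz (equivalently, Jensen's inequality applied to $x \mapsto x^2$) in the form
\[
\mathbb{E}\bigg[\sup_{t \in [0,T]} M_t\bigg] \leq \mathbb{E}\bigg[\Big(\sup_{t \in [0,T]} M_t\Big)^2\bigg]^{1/2}.
\]
Then I would apply Doob's $\mathbb L^p$ maximal inequality with $p = 2$, whose sharp constant is $p/(p-1) = 2$, to obtain
\[
\mathbb{E}\bigg[\Big(\sup_{t \in [0,T]} M_t\Big)^2\bigg] \leq 4\,\mathbb{E}\big[M_T^2\big] = 4\,\mathbb{E}\big[X^2\big].
\]
Taking square roots and chaining the two estimates yields the asserted bound with the constant~$2$.

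There is no genuine obstacle here: the constant $2 = \sqrt{4} = p/(p-1)\big|_{p=2}$ is exactly the one produced by Doob's $\mathbb L^2$ inequality, which is why the statement is sharp in its form. The only technical point worth flagging is to work with the c\`adl\`ag modification of $t \mapsto \mathbb{E}_t[X]$, which is available under the usual conditions on the filtration assumed throughout the paper, so that the supremum is measurable and Doob's inequality directly applies.
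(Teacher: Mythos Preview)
Your proof is correct and follows exactly the approach indicated in the paper, which merely states that the result ``easily follows using the inequalities of H\"older and Doob.'' Your Cauchy--Schwarz step (H\"older with $p=2$) followed by Doob's $\mathbb L^2$ maximal inequality with constant $p/(p-1)=2$ is precisely what is intended.
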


\begin{lemma}
\label{lem:Doob 2}
Let $(\Omega, \mathcal{F}, \mathbb{F} = (\mathcal{F}_t)_{t \in [0, T]}, \mathbb P)$ be a filtered probability space, $p \in (1, 2)$, and let $X$ and $Y$ be $\mathcal{F}_T$--measurable non--negative random variables, with $\mathbb E[X^2] < \infty$ and $\mathbb E[Y^{2p/(2-p)}] < \infty$. Then, for $\varepsilon > 0$, 
\begin{equation*}
\mathbb{E}\bigg[ \sup_{t \in [0, T]} \mathbb{E}_t\big[X Y\big]\bigg] \leq \frac{1}{\varepsilon}  \mathbb{E}\big[X^2 \big] + \frac\varepsilon4 \bigg(\frac{p}{p-1}\bigg)^{2}  \mathbb{E}\Big[Y^{\frac{2p}{2-p}}\Big]^{\frac{2-p}{p}}.
\end{equation*}
\end{lemma}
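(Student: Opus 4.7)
The plan is to derive the bound from just three classical ingredients: Hölder's inequality with the conjugate pair $(2/p,\,2/(2-p))$, Doob's $L^p$ maximal inequality applied to the martingale $N_t:=\mathbb E_t[XY]$, and the elementary Young inequality $ab\le a^2/\varepsilon+\varepsilon b^2/4$. The structure of the right-hand side — $\mathbb E[X^2]$ paired against $\mathbb E[Y^{2p/(2-p)}]^{(2-p)/p}$ — already dictates which Hölder exponents must appear.

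First, I would check that $XY$ has the required integrability. Since $p/2+(2-p)/2=1$, Hölder with conjugate exponents $(2/p,\,2/(2-p))$ yields
\[
\mathbb E\bigl[(XY)^p\bigr]\le \mathbb E[X^2]^{p/2}\,\mathbb E\bigl[Y^{2p/(2-p)}\bigr]^{(2-p)/2},
\]
which is finite by assumption. Hence $N=(\mathbb E_t[XY])_{t\in[0,T]}$ is a non-negative $L^p$-martingale, and Doob's $L^p$ maximal inequality gives
\[
\mathbb E\Bigl[\sup_{t\in[0,T]}N_t^{p}\Bigr]\le \Bigl(\tfrac{p}{p-1}\Bigr)^{p}\mathbb E\bigl[(XY)^p\bigr].
\]
Since $p>1$, Jensen's inequality $\mathbb E[\sup_t N_t]\le \mathbb E[\sup_t N_t^p]^{1/p}$ combined with the previous two displays yields
\[
\mathbb E\Bigl[\sup_{t\in[0,T]}\mathbb E_t[XY]\Bigr]\le \tfrac{p}{p-1}\,\mathbb E[X^2]^{1/2}\,\mathbb E\bigl[Y^{2p/(2-p)}\bigr]^{(2-p)/(2p)}.
\]

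To finish, I apply Young's inequality $ab\le a^2/\varepsilon+\varepsilon b^2/4$ (which is just $(a/\sqrt{\varepsilon}-\sqrt{\varepsilon}\,b/2)^2\ge 0$) with
\[
a:=\mathbb E[X^2]^{1/2},\qquad b:=\tfrac{p}{p-1}\,\mathbb E\bigl[Y^{2p/(2-p)}\bigr]^{(2-p)/(2p)},
\]
so that $a^2=\mathbb E[X^2]$ and $b^2=\bigl(\tfrac{p}{p-1}\bigr)^{2}\mathbb E[Y^{2p/(2-p)}]^{(2-p)/p}$. This produces exactly the claimed bound.

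No step is genuinely an obstacle here; the whole argument is a chain of three standard inequalities. The only spot that deserves thought is the choice of exponent in Doob's inequality: the $L^2$-version would demand $\mathbb E[X^2Y^2]<\infty$, which is not available from the hypotheses. Restricting to $p\in(1,2)$ and using Doob in $L^p$ is precisely what permits Hölder with the conjugate pair $(2/p,\,2/(2-p))$, and this is what forces the sharp exponents $2$ on $X$ and $2p/(2-p)$ on $Y$ in the final estimate.
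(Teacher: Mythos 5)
Your proof is correct, and it is essentially the argument the paper intends: the appendix states that the lemma "easily follows using the inequalities of Hölder and Doob", which is exactly your chain of Hölder with exponents $(2/p,\,2/(2-p))$, Doob's $L^p$ maximal inequality for the martingale $\mathbb{E}_t[XY]$ (with Jensen to pass from the $L^1$- to the $L^p$-norm of the supremum), and the Young-type inequality $ab\le a^2/\varepsilon+\varepsilon b^2/4$; the exponents and constants match the statement exactly.
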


{\small
\bibliographystyle{abbrv}
\bibliography{eqfric}}

\end{document}